\documentclass[10pt, a4paper]{article}
\usepackage{amsmath, amsthm, amssymb, mathtools}
\usepackage[svgnames]{xcolor}
\usepackage[left=1.5cm, right=1.5cm, top=1.5cm, bottom=2cm]{geometry}
\usepackage{hyperref}
\usepackage[notref, notcite, final]{showkeys}
\usepackage{bbm}
\usepackage{enumitem}
\usepackage{float}
\usepackage[left]{lineno}
\usepackage{lmodern}

\usepackage{caption}
\captionsetup[figure]{labelfont=bf}
\usepackage{graphicx}
% Customizing biblatex: stackoverflow https://tex.stackexchange.com/questions/12806/guidelines-for-customizing-biblatex-styles
\usepackage[backend=bibtex, sorting=nyt, style=authoryear, url=false, isbn=false, eprint=false, maxbibnames=4, giveninits=true]{biblatex}
\addbibresource{biblio.bib}
\renewbibmacro{in:}{\ifentrytype{article}{}{\printtext{\bibstring{in}\intitlepunct}}}
\DeclareFieldFormat
[article,inbook,incollection,inproceedings,patent,thesis,unpublished]
{title}{#1\isdot}
\DeclareFieldFormat[article]{volume}{\mkbibbold{#1}}
\DeclareFieldFormat[article]{number}{\mkbibbold{#1}}

\usepackage{xr}

\externaldocument[supmat-]{supinfo}
%\externaldocument[supmat-]{221129-fakeSupinfo} % for arXiv only

\allowdisplaybreaks

\newcommand\dd{\mathrm{d}}

\newcommand\R{\mathbb R}

\theoremstyle{plain}
\newtheorem{theorem}{Theorem}[section]
\newtheorem{lemma}[theorem]{Lemma}
\newtheorem{proposition}[theorem]{Proposition}
\newtheorem{corollary}[theorem]{Corollary}
\theoremstyle{definition}
\newtheorem{definition}[theorem]{Definition}
\newtheorem{remark}[theorem]{Remark}
\newtheorem{assumption}[theorem]{Assumption}
\newtheorem{claim}[theorem]{Claim}

\counterwithin{equation}{section}% Reset equation at \section

\definecolor{QG}{named}{ForestGreen}
\definecolor{AD}{named}{red}
% Custom macro to write to the bottom of the page
% Adapted from https://tex.stackexchange.com/questions/347228/footnote-without-numbering-and-without-indention-in-beamer-class
\newcommand{\writefoot}[1]{
	\renewcommand{\thefootnote}{}
	\footnotetext{\hspace{-16.5pt}\scriptsize#1}
	\renewcommand{\thefootnote}{\arabic{footnote}}
}

\author{Jean-Baptiste Burie, Arnaud Ducrot, Quentin Griette\footnote{Corresponding author. e-mail: \href{mailto:quentin.griette@univ-lehavre.fr}{\texttt{quentin.griette@univ-lehavre.fr}}}}

\begin{document}
%\linenumbers
\writefoot{\small \textbf{AMS subject classifications (2020).} 34D05, 92D25, 37L15, 37N25 \smallskip}
\writefoot{\small \textbf{Keywords.} ordinary differential equations, asymptotic behavior, population dynamics, infinite dynamical system.\smallskip}

\writefoot{\small \textbf{Acknowledgements:} 
J.-B.B. and A.D. are supported by the ANR project ArchiV ANR-18-CE32-0004. 
Q.G. was partially supported by ANR grant  ``Indyana'' number  ANR-21-CE40-0008. 
A.D. and Q.G. acknowledge the support of the Math AmSud program for project MATH-AMSUD-22-MATH-09. 
The authors thank Pierre Magal for stimulating discussions {and two anonymous reviewers whose comments contributed greatly to improve the quality of the article.}
}

\begin{center}
	\hypersetup{hidelinks}
	\vspace{1cm}
	\renewcommand{\thefootnote}{\fnsymbol{footnote}}
	\begin{minipage}{0.9\textwidth}
		\centering
		\LARGE{\bf Asymptotic behavior of an epidemic model with infinitely many variants}\bigskip
	\end{minipage}

	\Large
	Jean-Baptiste Burie$^{a}$, Arnaud Ducrot$^{b}$, and Quentin Griette$^{b,}$\footnote{Corresponding author. e-mail: \href{mailto:quentin.griette@univ-lehavre.fr}{\texttt{quentin.griette@univ-lehavre.fr}}}\medskip \\
	\medskip

	\today
	\bigskip

	\normalsize
	{\it $^a$ Institut de Math\'ematiques de Bordeaux, Universit\'e de Bordeaux, \\
	CNRS, IMB, UMR 5251, \\ 
	351, cours de la Lib\'eration, F-33400 Talence, France.}\\
	\medskip

	{\it $^b$ Normandie Univ, UNIHAVRE, LMAH, FR-CNRS-3335,\\ 
	ISCN, 76600 Le Havre, France. 
	}
	\hypersetup{hidelinks=false}
\end{center}
\bigskip

\begin{abstract}
    We investigate the long-time dynamics of a SIR epidemic model with infinitely many pathogen variants infecting a homogeneous host population. We show that the basic reproduction number $\mathcal{R}_0$ of the pathogen can be defined in that case and corresponds to a threshold between the persistence ($\mathcal{R}_0>1$) and the extinction ($\mathcal{R}_0\leq 1$) of the pathogen. When $\mathcal{R}_0>1$ and the maximal fitness is attained by at least one variant, we show that the systems reaches an equilibrium state that can be explicitly determined from the initial data. When $\mathcal{R}_0>1$ but none of the variants attain the maximal fitness, the situation is more intricate. 
    We show that, in general, the pathogen is uniformly persistent and any family of variants that have a fitness which is uniformly lower than the optimal fitness, eventually gets extinct. We derive a condition under which the total pathogen population converges to a limit which can be computed explicitly. 
    We also find counterexamples that show that, when our condition is not met, the total pathogen population may converge to an unexpected value, or the system can even reach an eternally transient behavior where the total pathogen population between several values. We illustrate our results with numerical simulations that emphasize the wide variety of possible dynamics. 
\end{abstract}
\medskip

\section{Introduction}
In this article we investigate the large time behavior of the SIR epidemic model
\begin{subequations}\label{eq:main}
	\begin{equation}\label{eq:main-a}
		\left\{\begin{aligned}\relax
			&\frac{\dd}{\dd t}S(t) 	= \Lambda -\theta S(t) - \sum_{k=0}^{+\infty}\beta_k S(t)I_k(t), & &t>0, \\ 
			&\frac{\dd}{\dd t}I_n(t)= \beta_nS(t)I_n(t) - \gamma_n I_n(t), & &  t>0,n\in\mathbb{N},
		\end{aligned}\right.
	\end{equation}
	with the initial data
	\begin{equation}\label{eq:main-b}
		S(0)=S_0\in(0, +\infty), \qquad (I_n(0))_{n\in\mathbb{N}}=(I_n^0)_{n\in\mathbb{N}}\in\ell^1_+,
	\end{equation}
	where $\ell^1_+$ denotes the space of non-negative summable sequences. 
\end{subequations}

This model  describes the evolution of a population of hosts that can be, at any time $t>0$, either free of infection and immunity ($S(t)$, the susceptible population), or infected by a pathogen of type $n\in\mathbb{N}$ ($I_n(t)$, the infected population of type $n$). The parameter $\Lambda>0$ models a constant influx of susceptible hosts, $\theta>0$ the death rate of the hosts in the absence of infection, $\beta_n$ the transmission parameter of the pathogen of type $n$, and $\gamma_n$ the recovery rate of a pathogen of type $n$. Both $\beta_n$ and $\gamma_n$ are bounded sequences.
SIR models are ubiquitous in the literature concerning mathematical epidemiology and have been extensively studied. Without the pretention of reconstructing the entire history of the model, let us cite the works of \textcite{Ker-McK-1927} that might well be its first occurrence in the literature, and was immediately applied to a plague outbreak in the island of Bombay.

In this article we consider that many variants of the pathogen (possibly inifinitely many) compete to infect susceptible hosts. All of the possible genotypes are listed in an infinite sequence indexed by $n\in\mathbb{N}$, and for each genotype we denote $(\beta_n, \gamma_n)$ the associated phenotype composed of the transmission coefficient $\beta_n$ and the recovery rate of the infection $\gamma_n$. We do not specify the particular mechanisms that link the underlying variable $n\in\mathbb{N}$ and the phenotype $(\beta_n, \gamma_n)$ but focus on the dynamics of the population under \eqref{eq:main} conditionally to the knowledge of these mechanisms.  We do not take mutations into account and consider that the pathogen is asexual; therefore, \eqref{eq:main} can be considered a pure competition model where the pathogens compete for a single resource (the susceptible hosts).

{When $I_n^0$ is zero except for a finite number of indices, 
our problem is reduced to a system of ordinary differential equations:
\begin{subequations}\label{eq:SI-finite}
	\begin{equation}
		\left\{\begin{aligned}\relax
			&\frac{\dd \phantom{t}}{\dd t}S(t) = \Lambda -\theta S(t) - S(t)\big(\beta_1 I_1(t)+\beta_2 I_2(t)+\ldots + \beta_n I_n(t)\big)\\ 
			&\frac{\dd \phantom{t}}{\dd t}I_1(t)=\beta_1 S(t)I_1(t) - \gamma_1I_1(t)  \\
			& \qquad\vdots \\ 
			&\frac{\dd \phantom{t}}{\dd t}I_N(t)=\beta_N S(t)I_N(t) - \gamma_NI_N(t), 
		\end{aligned}\right.
	\end{equation}
	with the initial data
	\begin{equation}
		S(0)=S_0\in(0, +\infty),\qquad   I_1(0)=I_1^0\in(0, +\infty), \qquad \ldots, \qquad I_N(0)=I_N^0\in(0, +\infty).
	\end{equation}
\end{subequations}
In this context, \textcite{Hsu-Hub-Wal-77, Hsu-78} showed for a similar system of ordinary differential equations that the solution eventually converges to an equilibrium which may not be unique but is always concentrated on the equations that maximize the fitness $\beta_n/\gamma_n$. Here we extend these results to an infinite-dimensional dynamical system (we consider infinitely many variants) and prove that, for some well-chosen coefficients, the system stays eternally in a transient state and never converges to a single equilibrium.  \textcite{Thi-11} considers a related model in which a continuous distribution of host classes is infected by a pathogen that can be transmitted across classes; he proves, among other results and under very general assumptions, the global stability of the endemic equilibrium. While his model is different in nature and in behavior from ours,  we consider it as an inspiration for future works. 
}

% In his model, classes are distributed in a compact subset of a Euclidean space $\Omega\subset \mathbb{R}^m$, $S(t, x)$ is the number of susceptible hosts of class $x\in\Omega$,  $I(t, x)$ the number of infected hosts of class $x\in\Omega$, and hosts interact through an integral kernel:
%\begin{equation*}
%    \left\{\begin{aligned}\relax
%	&\partial_t S(t, x) = \Lambda(x)-\mu(x) S(t, x) - \int_{\Omega}f\big(x, y, S(t, x), I(t, y)\big)\dd y & & t>0, x\in\Omega, \\ 
%	&\partial_t I(t,y) = \int_{\Omega}f\big(x, y, S(t, x), I(t, y)\big)\dd y  - \nu(x)I(t, x), & & t>0, x\in\Omega. 
%    \end{aligned}\right.
%\end{equation*}

The problem of several species competing for a single resource has received a lot of attention in the literature.  In this context, the ``Competitive exclusion principle''  states that ``Complete competitors cannot coexist'', which means that given a number of species competing for the same resource in the same place, only one can survive in the long run. This idea was already present to some extent in the book of Darwin, and is sometimes referred to as Gause's law \parencite{Har-60}.
This problem of survival of competitors has attracted the attention of mathematicians since the '70s and many studies have proved this property in many different contexts -- let us mention the seminal works of 
\textcite{Hsu-Hub-Wal-77, Hsu-78}
followed by 
\textcite{Arm-McG-80, But-Wol-85, Wol-Lu-92, Hsu-Smi-Wal-96, Wol-Xia-97, Li-99}, 
to cite a few -- and also disproved in other contexts, for instance in fluctuating environments, see 
\textcite{Cus-80} and 
\textcite{Smi-81}. 
\textcite{Ack-All-03} study the competitive exclusion in an epidemic model with a finite number of strains, and describe how different species can coexist in some cases.

In our model, the fitness of a variant with genotype $n$ is given by the formula $\mathcal{R}^n_0 = \frac{\Lambda \beta_n}{\theta \gamma_n}=\frac{\Lambda}{\theta}\alpha_n$, where $\alpha_n:=\frac{\beta_n}{\gamma_n}$; the competitive exclusion principle implies that the only genotypes that {eventually remain} are the ones that maximize $\mathcal{R}_0^n$. We prove that this is correct - asymptotically, only the genotypes that maximize the fitness survive - but incomplete. Indeed it does not suffice to describe the asymptotic behavior of the population, especially when there are equality cases in the fitness of the variants (i.e. $\frac{\beta_{n_1}}{\gamma_{n_1}}=\frac{\beta_{n_2}}{\gamma_{n_2}}$ with $n_1\neq n_2$, and possibly $\gamma_1\neq \gamma_2$), of the maximal fitness is not reached by any genotype, or both. In the latter case we can even observe an alternation of the prevalent variant and eternal oscillations in the total number of infected, see the second example in section \ref{sec:bivalent-gamma}.
  Similar behaviors have been observed in the literature for related models (among others, \textcite{Hsu-78} already gives a similar description). %, however we present here in section \ref{sec:examples} a variety of different behaviors that is new.  
For example in the epidemiological context of \textcite{DayGandon2007}, it has been observed that a  strain 1 with a higher value of $\gamma$ and a slightly lower $\mathcal R_0$ value than a strain 2 may nevertheless be dominant for some time. These borderline cases shed light on our understanding of transient dynamics, see also \textcite{BDDD2020} where {estimates for the transient dynamics} for a related evolutionary model are provided based on the {local flatness of the fitness function}.

{Quantitative traits} such as the virulence or the transmission rate of a pathogen, the life expectancy of an individual and more generally any observable feature such as height, weight, muscular mass, speed, size of legs, etc. are naturally represented using  continuous {or discrete} variables. Such a description of a population seems highly relevant and has been used mostly in modelling studies involving some kind of evolution \parencite{Mag-00, Mag-02, Bar-Per-07,Des-Mis-Jab-Rao-08,Bar-Mir-Per-09,  Bou-Cal-Meu-Mir-Per-Rao-12, Jab-Rao-11,Rao-11, Lor-Per-14, Gri-19, Ducrot-Magal-Liu-Griette}. In many of those models, mutation is considered as a process that is continuously occurring through time. That assumption may or may not be realistic depending on the context. It may also be realistic to model mutation  as discrete events in time; in that case, the behavior of the population between such mutation events is correctly described by pure competition equations like \eqref{eq:main}. {In the same spirit, our work also provides a precise description of what happens in the vanishing mutation limit, and the trajectories of a model with a small but non-zero mutation operator are expected to be close, at least transiently, to those of the limit.}

{The structure of the paper is as follows. In section \ref{sec:main} we present our main results. In section \ref{sec:examples} we provide some examples and numerical simulations that present different asymptotic behaviors. In section \ref{sec:discussion} we propose a discussion of our results. Finally in section \ref{sec:proof} we prove the results we claimed in section \ref{sec:main}.}

\paragraph{Data availability} Data sharing not applicable to this article as no datasets were analyzed during the current study.

\paragraph{Conflict of Interest.} The authors declare no conflict of interest.
\section{Main results}
\label{sec:main}

In this article we study the solutions of \eqref{eq:main-a} supplemented with the initial data \eqref{eq:main-b}. Before starting, let us precise that we use the notation $\ell^1$ to denote the Banach space of absolutely summable real sequences equipped with the norm
\begin{equation*}
	\Vert (a_n)_{n\in\mathbb{N}}\Vert_{\ell^1}:= \sum_{n=0}^{+\infty} |a_n|, 
\end{equation*}
and $\ell^1_+$ is the positive cone of $\ell^1$, that is to say the set of non-negative summable sequences. 
Similarly, $\ell^\infty$ denotes the Banach space of bounded sequences equipped with the norm
\begin{equation*}
	\Vert (a_n)_{n\in\mathbb{N}}\Vert_{\ell^\infty}:=\sup_{n\in\mathbb{N}} |a_n|.
\end{equation*}
We will make the following assumption on  the parameters arising in \eqref{eq:main}. 
\begin{assumption}\label{as:params}
	The constants $\Lambda>0$ and $\theta>0$ are given.
	The sequences $\big(\beta_n\big)_{n\in\mathbb{N}}\in\ell^\infty$ and $\big(\gamma_n\big)_{n\in\mathbb{N}}\in\ell^\infty$ are bounded and we assume that there exist constants $0<\beta_0<\beta^\infty$ and $0<\gamma_0<\gamma^\infty$ such that 
	\begin{equation*}  
		0< \beta_0\leq \beta_n\leq \beta^\infty\qquad  \text{ and }\qquad  0<\gamma_0\leq \gamma_n\leq \gamma^\infty,\qquad \text{ for all } n\in{\mathbb{N}}.
	\end{equation*}
	As a consequence of this assumption, the sequence $(\beta_n/\gamma_n)_{n \in \mathbb{N}}$ is bounded.  We let $\alpha^*$ be the maximal fitness defined by
	\begin{equation*}
		\alpha^*:=\sup_{k\in\mathbb{N}}\frac{\beta_k}{\gamma_k}. 
	\end{equation*}
	%{Finally we let $S_0>0$, $(I_n^0)_{n\in\mathbb{N}}\in \ell^1_+$ be given.}
\end{assumption}

%For consistency, we have to assume that $\alpha^*$ is indeed the maximal fitness, either attained at some index or at infinity.
{Our next assumptions ensures that the maximal fitness $\alpha^*$ is effectively attained (possibly at infinity) by a non-neglibile population of infected. In other words, our model \eqref{eq:main} is not equivalent to another model with a strictly lower maximal fitness.}
\begin{assumption}\label{as:init}  
	We let $S_0>0$, $(I_n^0)_{n\in\mathbb{N}}\in \ell^1_+$ be given and assume that there exists a sequence of indices $n_k\in\mathbb{N}$ with 
	\begin{equation*}
		I_{n_k}^0>0 \text{ and } \lim_{k\to+\infty} \frac{\beta_{n_k}}{\gamma_{n_k}}=\alpha^*.
	\end{equation*}
\end{assumption}
{Let us precise that the sequence $n_k$ in Assumption \ref{as:init} can be eventually stationary; in particular we do not assume that the set of positive components is infinite.}

We define the basic reproductive number  $\mathcal R_0$ by
\begin{equation}\label{def-alpha*}
	\mathcal R_0:=\frac{\Lambda}{\theta}\alpha^*.
\end{equation}

{Finally, we make a technical assumption to avoid unnecessary theoretical discussions.
\begin{assumption}[Finite $\omega$-limit sets]\label{as:omega}
	We assume that the $\omega$-limit sets of the sequences $\gamma_n$ and $\alpha_n:=\frac{\beta_n}{\gamma_n}$ , defined by
	\begin{equation*}
	    \omega\left((\gamma_n)_{n\in\mathbb{N}}\right):=\bigcap_{n\in\mathbb{N}} \overline{\{\gamma_k\,:\, k\geq n\}} \text{ and } \omega\left((\alpha_n)_{n\in\mathbb{N}}\right):=\bigcap_{n\in\mathbb{N}} \overline{\{\alpha_k\,:\, k\geq n\}} , 
	\end{equation*}
	are finite.
\end{assumption}
}

\subsection{The Cauchy problem: existence and uniqueness}
Our first result concerns the existence and uniqueness of the solution to the  Cauchy problem \eqref{eq:main}. We show that the system \eqref{eq:main} is well posed in the sense of Hadamard. 
\begin{proposition}[The Cauchy problem]\label{prop:Cauchy}
	Let Assumption \ref{as:params} hold, and  $S_0>0$ and $(I_n^0)_{n\in\mathbb{N}}\in \ell^1_+$ be given.  Then the system \eqref{eq:main} has a unique global mild solution $\big(S(t), (I_n(t))_{n\in\mathbb{N}}\big)\in C^0\big([0, +\infty), \mathbb{R}\big)\times C^0\big([0, +\infty), \ell^1_+\big)$, which is also a classical solution:
	\begin{equation*}
		S(t)\in C^1\big([0, +\infty), \mathbb{R}\big) \text{ and } (I_n(t))_{n\in\mathbb{N}}\in C^1\big([0, +\infty), \ell^1_+\big).
	\end{equation*}
	Moreover for all $t>0$ the map 
	\begin{equation*}
		\left(S_0, \big(I_n^0\big)_{n\in\mathbb{N}}\right)\in \mathbb{R}\times\ell^1_+ \longmapsto \left(S(t), \big(I_n(t)\big)_{n\in\mathbb{N}}\right) \in \mathbb{R}\times \ell^1_+
	\end{equation*}
	is continuous.
\end{proposition}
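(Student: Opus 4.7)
The plan is to formulate the problem as an ODE in the Banach space $X:=\mathbb{R}\times \ell^1$ and apply the Picard--Lindelöf theorem in Banach spaces, then promote the local solution to a global one by deriving a simple a priori $\ell^1$-bound. Write the right-hand side as $F:X\to X$ with components $F_S(S,I)=\Lambda-\theta S-S\sum_{k\geq 0}\beta_k I_k$ and $F_{I_n}(S,I)=(\beta_n S-\gamma_n)I_n$. Under Assumption \ref{as:params}, $F$ does map $X$ into $X$ since
\begin{equation*}
	\sum_{n\geq 0}\bigl|(\beta_n S-\gamma_n)I_n\bigr|\leq (\beta^\infty|S|+\gamma^\infty)\|I\|_{\ell^1}<+\infty.
\end{equation*}
A direct computation shows that on every bounded subset $\{(S,I)\in X:|S|+\|I\|_{\ell^1}\leq R\}$, the map $F$ is Lipschitz (the bilinear term $S\sum\beta_k I_k$ is controlled by $\beta^\infty(|S|\,\|I-I'\|_{\ell^1}+\|I'\|_{\ell^1}|S-S'|)$, and similarly for the sequence component). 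Applying the Banach-space version of Picard--Lindelöf yields a unique local mild (hence local classical, since $F$ is continuous) solution in $C^0([0,T_{\max}),X)\cap C^1([0,T_{\max}),X)$.

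For positivity, I would argue separately on each coordinate: $I_n(t)=I_n^0\exp\!\bigl(\int_0^t(\beta_n S(s)-\gamma_n)\,\dd s\bigr)\geq 0$ from the integrating factor, so $(I_n(t))_{n\in\mathbb{N}}\in\ell^1_+$ as long as it exists; and $S(t)>0$ follows because $S'(t)\geq \Lambda-(\theta+\beta^\infty\|I(t)\|_{\ell^1})S(t)$, so $S$ cannot touch $0$. To preclude blow-up, set $N(t):=S(t)+\|I(t)\|_{\ell^1}$. Summing the equations (justified by absolute convergence) gives
\begin{equation*}
	N'(t)=\Lambda-\theta S(t)-\sum_{n\geq 0}\gamma_n I_n(t)\leq \Lambda-\min(\theta,\gamma_0)N(t),
\end{equation*}
so $N(t)\leq\max\bigl(N(0),\Lambda/\min(\theta,\gamma_0)\bigr)$ for all $t<T_{\max}$. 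The standard blow-up alternative then forces $T_{\max}=+\infty$, yielding a unique global solution.

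Continuous dependence on the initial data follows from a Gronwall estimate on the difference of two solutions in the $X$-norm. More precisely, if $(S,I)$ and $(\tilde S,\tilde I)$ are two solutions issued from $(S_0,I^0)$ and $(\tilde S_0,\tilde I^0)$, the a priori bound above confines both trajectories to a fixed bounded set of $X$ on every compact time interval $[0,T]$, and on that bounded set $F$ is Lipschitz with some constant $L=L(T)$. Then $\|(S,I)(t)-(\tilde S,\tilde I)(t)\|_X\leq e^{Lt}\|(S_0,I^0)-(\tilde S_0,\tilde I^0)\|_X$.

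The main conceptual obstacle is to handle the $\ell^1$ structure rather than a pointwise one: one must be careful that the infinite sum $\sum_k\beta_k S I_k$ is well defined and depends continuously on $I$ in the $\ell^1$ topology (rather than only componentwise), and that the curve $t\mapsto(I_n(t))_{n\in\mathbb{N}}$ is continuous into $\ell^1$, not merely coordinate by coordinate. Both facts are built into the Banach-space Picard--Lindelöf framework once the $X$-to-$X$ Lipschitz property of $F$ is verified, which is why I would prefer that abstract setting over a componentwise construction followed by an ex post summability check.
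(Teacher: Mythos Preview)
Your proposal is correct and follows essentially the same approach as the paper: cast the system as an ODE in $\mathbb{R}\times\ell^1$, verify that the right-hand side is locally Lipschitz on this Banach space to obtain a unique maximal classical solution, use the exponential formula for $I_n$ to get nonnegativity, and derive the a priori bound $N'(t)\leq \Lambda-\min(\theta,\gamma_0)N(t)$ on $N(t)=S(t)+\sum_n I_n(t)$ to rule out finite-time blow-up. Your write-up is in fact somewhat more detailed than the paper's (which leaves the Lipschitz check and the Gronwall-based continuous dependence implicit), but the strategy is identical.
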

As an important consequence of Proposition \ref{prop:Cauchy}, we note that each component of $(I_n(t))$ can be computed from $S(t)$. More precisely, if we set
\begin{equation*}
	\overline{S}(t):= \frac{1}{t}\int_0^t S(s)\dd s, 
\end{equation*}
then we have the following formula for $I_n(t)$: 
\begin{equation}\label{eq:I_n}
    I_n(t) = e^{t\left(\beta_n \overline{S}(t) - \gamma_n\right)}, \qquad \text{ for all } n\in\mathbb{N}\text{ and } t\geq 0.
\end{equation}
We will use equation \eqref{eq:I_n} repeatedly in the proofs of our results.

\subsection{{Persistence and asymptotic behavior}}
\label{sec:main-general}

Next we investigate the asymptotic behavior of the solutions to \eqref{eq:main} when $t\to+\infty$. We first show that the population of pathogens gets extinct if {$\mathcal{R}_0\leq1$}.
\begin{proposition}[Extinction]\label{prop:Extinction}
    Let Assumption \ref{as:params} hold, and  $S_0>0$ and $(I_n^0)_{n\in\mathbb{N}}\in \ell^1_+$ be given. Suppose that  {either $\mathcal{R}_0<1$, or $\mathcal{R}_0=  1$ and Assumption \ref{as:omega} holds true}. Then we have
	\begin{equation*}
		\lim_{t\to+\infty} \sum_{n=0}^{+\infty} I_n(t) = 0.
	\end{equation*}
\end{proposition}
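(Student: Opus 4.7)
My plan is to introduce the Lyapunov function
\[
V(t) := \Phi(S(t)) + \sum_{n\in\mathbb{N}} I_n(t), \qquad \Phi(S) := S - S^* - S^* \log(S/S^*),
\]
with $S^* := \Lambda/\theta$, borrowed from classical single-strain SIR theory: $\Phi \ge 0$ with equality iff $S = S^*$. Differentiating along \eqref{eq:main-a} and using $\Lambda = \theta S^*$ yields
\[
\dot V(t) = -\theta\,\frac{(S(t) - S^*)^2}{S(t)} + \sum_{n\in\mathbb{N}} \gamma_n\,(\alpha_n S^* - 1)\, I_n(t),
\]
and both terms are non-positive whenever $\mathcal{R}_0 \le 1$, since $\alpha_n S^* \le \alpha^* S^* = \mathcal{R}_0$. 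Thus $V$ is non-increasing and bounded below, so $V(t) \to V_\infty \ge 0$ and $-\dot V \in L^1(0,+\infty)$.

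In the subcritical case $\mathcal{R}_0 < 1$ there is a uniform gap $\alpha_n S^* - 1 \le \mathcal{R}_0 - 1 < 0$, so that $\dot V(t) \le -\gamma_0(1-\mathcal{R}_0)\sum_n I_n(t)$. Integration gives $\sum_n I_n \in L^1(0,+\infty)$; since the derivative of $t \mapsto \sum_n I_n(t)$ is bounded (bounded coefficients and the a priori bound $\sum_n I_n(t) \le V(0)$), uniform continuity combined with integrability forces $\sum_n I_n(t) \to 0$. A more direct route avoids the Lyapunov: from $\dot S \le \Lambda - \theta S$ and a Cesàro-type argument, $\limsup_{t\to\infty}\bar S(t) \le S^*$, so that eventually $\beta_n \bar S(t) - \gamma_n \le -\delta < 0$ uniformly in $n$; then the explicit formula \eqref{eq:I_n} delivers an exponential bound $I_n(t) \le I_n^0 e^{-\delta t}$ summable in $\ell^1$.

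In the critical case $\mathcal{R}_0 = 1$ the second term of $\dot V$ degenerates on indices with $\alpha_n = \alpha^*$, so pointwise decay of $\sum_n I_n$ is not immediate. Integrability of $-\dot V$ still provides
\[
\int_0^{+\infty}\frac{(S - S^*)^2}{S}\,\dd t < +\infty \qquad\text{and}\qquad \int_0^{+\infty}\sum_{n\in\mathbb{N}} \gamma_n\bigl(1 - \tfrac{\alpha_n}{\alpha^*}\bigr)\, I_n\,\dd t < +\infty.
\]
Combining the first integral with a uniform positive lower bound on $S$ (easily obtained from the ODE for $S$) and with uniform continuity of $S$ gives $S(t) \to S^*$; the second implies, for each $\eta > 0$, that $\sum_{\alpha_n \le \alpha^* - \eta} I_n(t) \to 0$ by the same integrability-plus-uniform-continuity argument. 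Assumption \ref{as:omega} then handles the tail of indices with $\alpha_n$ close to $\alpha^*$: since $\omega\bigl((\alpha_n)\bigr)$ and $\omega\bigl((\gamma_n)\bigr)$ are finite, this critical tail splits into finitely many clusters, each characterized by a limit value $\gamma^{(i)}$ of $\gamma_n$ along which $\alpha_n \to \alpha^*$. On each cluster the partial sum $J_i(t) := \sum_{n \in \text{cluster } i} I_n(t)$ approximately obeys a single-strain critical SIR equation $\dot J_i \approx \gamma^{(i)}(\alpha^* S(t) - 1)\, J_i$, and a finite-dimensional LaSalle-type argument on the limit system (whose largest invariant subset of $\{S = S^*\}$ is $\{(S^*,0)\}$) forces each $J_i \to 0$; summing the finitely many cluster contributions with the vanishing subcritical part concludes $\sum_n I_n(t) \to 0$.

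The main obstacle is this last reduction in the critical case: the Lyapunov function provides only integral, not pointwise, information on the critical indices $\alpha_n \to \alpha^*$, and an infinite-dimensional LaSalle principle would require a compactness of orbits in $\ell^1$ that is not available. Assumption \ref{as:omega} is precisely what collapses the critical tail to a finite-strain SIR system at $\mathcal{R}_0 = 1$, where extinction is classical; controlling the small deviations between individual $(\beta_n, \gamma_n)$ and their cluster limits is the delicate technical point in making this reduction rigorous.
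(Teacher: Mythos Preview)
Your approach is genuinely different from the paper's. For $\mathcal{R}_0<1$ the paper uses exactly the direct comparison you mention (eventually $S\le \Lambda/\theta+\varepsilon$, hence $\dot I_n\le -m\gamma_n I_n$), so that part is fine. For $\mathcal{R}_0=1$ the paper does \emph{not} use a Lyapunov function; it completes $\mathbb{N}$ to a compact metric space $\overline{\mathbb{N}}$ via $d(n,m)=\bigl|\tfrac{1}{1+n}-\tfrac{1}{1+m}\bigr|+|\alpha_n-\alpha_m|+|\gamma_n-\gamma_m|$ (here Assumption~\ref{as:omega} makes $\overline{\mathbb{N}}$ compact), so that orbits become precompact in $\mathcal{M}_+^\ast(\overline{\mathbb{N}})$. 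A double extraction (first along an arbitrary $t_k\to\infty$, then along a sequence where the limit total mass attains its supremum $\ell$) produces a complete orbit $(\hat S,\hat I_n)$ with $\sum_n\hat I_n$ maximal at $t=0$; differentiating there forces $\hat S(0)=1/\alpha^*=\Lambda/\theta$, which is already the upper bound for $\hat S$, so $\hat S'(0)=0$, and the $S$-equation then yields $\sum_n\gamma_n\alpha^*\hat I_n(0)=0$, a contradiction.

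Your cluster/LaSalle step is the gap. You write that each partial sum $J_i$ ``approximately obeys'' a single-strain critical equation and then invoke LaSalle on a ``limit system'', but you never define that limit system nor the topology in which orbits converge to it; the passage from approximate to exact dynamics is precisely the compactness issue you flag and do not resolve. The paper's $\overline{\mathbb{N}}$-compactification is one way to make this rigorous; without it, or something equivalent, the argument is incomplete.

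That said, your own Lyapunov already delivers a cleaner conclusion that bypasses both the cluster decomposition \emph{and} Assumption~\ref{as:omega}. You correctly obtain $S(t)\to S^*$ from $\int_0^\infty (S-S^*)^2/S\,\dd t<\infty$ and Barbalat. Now differentiate the $S$-equation once: $S''=-\theta S'-S'\sum_n\beta_n I_n-S\sum_n\beta_n(\beta_n S-\gamma_n)I_n$ is bounded (all ingredients are, by your bound $\sum_n I_n\le V(0)$), so $S'$ is uniformly continuous; since $\int_0^t S'\,\dd s\to S^*-S_0$, Barbalat gives $S'(t)\to 0$. Read the $S$-equation as $S(t)\sum_n\beta_n I_n(t)=\Lambda-\theta S(t)-S'(t)\to \Lambda-\theta S^*=0$; with $S\to S^*>0$ and $\beta_n\ge\beta_0$, this yields $\sum_n I_n(t)\to 0$ directly. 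This is both shorter and slightly stronger than what the paper proves.
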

When $\mathcal{R}_0>1$, on the contrary, we can show that the pathogen survives in large time. 
\begin{theorem}[Persistence]\label{thm:discrete}
	Suppose that the Assumptions \ref{as:params}, \ref{as:init} and \ref{as:omega} hold true, and assume that $\mathcal{R}_0>1$. Then we have
	\begin{equation*}
		S(t) \xrightarrow[t\to+\infty]{} \frac{1}{\alpha^*} \text{ and } S'(t)\xrightarrow[t\to+\infty]{} 0.
	\end{equation*}
	Concerning the behavior of $I_n(t)$, we distinguish two cases.
	\begin{enumerate}[label={\rm \roman*)}]
		\item \label{item:persistence-convergence}
		    Suppose that there is {some} $i\in\mathbb{N}$ such that  $\frac{\beta_i}{\gamma_i} = \alpha^*$, possibly for multiple indices. Then  $ I_n(t) $ converges in $\ell^1$ to the following asymptotic stationary state 
			\begin{equation*}
				I^\infty_n  = \begin{cases}
					0 & \text{ if }  \frac{\beta_n}{\gamma_n}<\alpha^*, \\ 
					e^{\tau \gamma_n}I_n^0 & \text{ if } \frac{\beta_n}{\gamma_n}=\alpha^*, 
				\end{cases} \text{ for all } n\in\mathbb N, 
			\end{equation*}
			where the constant $\tau\in\mathbb R$ is the unique solution of the equation:
			\begin{equation*}
				\sum_{\{n\in\mathbb N\,:\,\frac{\beta_n}{\gamma_n} = \alpha^*\}} \gamma_n I_n^0\, e^{\tau \gamma_n} = \frac{\theta}{\alpha^*}(\mathcal R_0-1). 
			\end{equation*}
		\item \label{item:persistence-divergence}
			Suppose that for all $n\in\mathbb{N}$,  we have $\frac{\beta_n}{\gamma_n}<\alpha^*.$ Then one has  $I_n(t)\to 0$ for all $n\in\mathbb N$ as $t\to\infty$, while 
			\begin{equation}\label{eq:divergencecase-persistence}
				\liminf_{t\to+\infty} \sum_{n\in\mathbb N}I_n(t) >0. 
			\end{equation}
			Moreover if $(n_k)_{k\in\mathbb{N}}$ is a sequence of integers such that 
			\begin{equation*}
				\sup_{k\in\mathbb{N}} \dfrac{\beta_{n_k}}{\gamma_{n_k}}<\alpha^*, 
			\end{equation*}
			then we have 
			\begin{equation}\label{eq:divergencecase-extinction}
				\limsup_{t\to+\infty} \sum_{k\in\mathbb{N}} I_{n_k}(t) = 0.
			\end{equation}
	\end{enumerate}
\end{theorem}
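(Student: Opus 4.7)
The proof rests on the integral representation \eqref{eq:I_n} (understood with the initial datum $I_n^0$ as multiplicative prefactor), together with the observation that $N(t) := S(t) + \sum_n I_n(t)$ satisfies
\begin{equation*}
    N'(t) = \Lambda - \theta S(t) - \sum_{n\in\mathbb{N}} \gamma_n I_n(t) \geq \Lambda - \max(\theta, \gamma^\infty)\, N(t),
\end{equation*}
so that $N$---and hence $S$, $\bar S$, and $(I_n)$ in $\ell^1$---is uniformly bounded on $[0,+\infty)$. The first task is to establish the convergence $\bar S(t) \to 1/\alpha^*$, then to upgrade it to $S(t)\to 1/\alpha^*$ and $S'(t)\to 0$.

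For $\limsup_{t\to\infty} \bar S(t) \leq 1/\alpha^*$, I would argue by contradiction: if $\bar S(t_j)\geq 1/\alpha^* + \delta$ along some sequence $t_j\to\infty$, Assumption \ref{as:init} furnishes $n_k$ with $I_{n_k}^0 > 0$ and $\alpha_{n_k}$ so close to $\alpha^*$ that $\alpha_{n_k}(1/\alpha^* + \delta) > 1 + \eta$ for some $\eta>0$; formula \eqref{eq:I_n} then yields $I_{n_k}(t_j) \geq I_{n_k}^0\, e^{t_j \gamma_0 \eta}\to+\infty$, contradicting the $\ell^1$-bound on $(I_n(t))$. For $\liminf_{t\to\infty} \bar S(t) \geq 1/\alpha^*$, I would crucially exploit Assumption \ref{as:omega}: if $\bar S(t_j)\to L_* < 1/\alpha^*$, the inequality $\alpha_n \bar S(t_j) - 1 \leq \alpha^* L_* - 1 + o(1) =: -c + o(1)$ holds \emph{uniformly} in $n$, so \eqref{eq:I_n} gives $I_n(t_j) \leq I_n^0\, e^{-c\gamma_0 t_j/2}$ for $j$ large, whence $\sum_n I_n(t_j)\to 0$ exponentially fast. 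A Gronwall bound on each $I_n$ ensures $\sum_n I_n$ stays small on intervals $[t_j, t_j(1+\kappa)]$ for some $\kappa>0$, on which $S$ relaxes toward the infection-free equilibrium $\Lambda/\theta > 1/\alpha^*$; consequently $\bar S(t_j(1+\kappa))$ is bounded away from $L_*$, and iterating contradicts the assumption $\liminf \bar S = L_*$. A parallel ODE argument then promotes $\bar S\to 1/\alpha^*$ to $S(t)\to 1/\alpha^*$ and $S'(t)\to 0$.

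With $\bar S\to 1/\alpha^*$ in hand, set $v(t) := \int_0^t (S(s) - 1/\alpha^*)\, \dd s$, so that $\bar S(t) = 1/\alpha^* + v(t)/t$, and \eqref{eq:I_n} rewrites as
\begin{equation*}
    I_i(t) = I_i^0\, e^{\alpha^* \gamma_i v(t)} \quad (\alpha_i = \alpha^*), \qquad I_n(t) = I_n^0\, e^{-t\gamma_n(1 - \alpha_n/\alpha^*)} e^{\alpha_n \gamma_n v(t)} \quad (\alpha_n < \alpha^*).
\end{equation*}
In case \ref{item:persistence-convergence}, with $\mathcal{I}^* := \{i : \alpha_i = \alpha^*\}$ nonempty, $v(t)$ cannot diverge to $+\infty$ (else $I_i(t)\to+\infty$ for $i\in\mathcal I^*$) nor to $-\infty$ (else every $I_n(t)\to 0$, forcing $S\to\Lambda/\theta\neq 1/\alpha^*$). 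By Assumption \ref{as:omega}, the values $\{\gamma_i : i \in \mathcal I^*\}$ accumulate at finitely many points $g_1,\ldots,g_M$; grouping the $I_i$'s by $\gamma$-value reduces the limit dynamics to a finite-dimensional ODE system in the aggregates $J_m(t) := \sum_{i\in\mathcal I^*,\, \gamma_i = g_m} I_i(t)$, on which a LaSalle/Lyapunov argument (with a functional of the form $V = S - (1/\alpha^*)\log(\alpha^* S) + (1/\alpha^*)\sum_m J_m$, appropriately adjusted) forces convergence of $v(t)$ to a finite limit $v^\infty$, hence convergence of $(I_n(t))$ in $\ell^1$. The value $\tau := \alpha^* v^\infty$ is identified by passing to the limit in $\sum_n \beta_n I_n(t) = (\Lambda - \theta S - S')/S \to \theta(\mathcal R_0 - 1)$, which produces the stated equation. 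In case \ref{item:persistence-divergence}, the representation gives $I_n(t)\to 0$ pointwise; for a sequence $(n_k)$ with $\sup_k \alpha_{n_k}\leq \alpha^* - \eta$, the uniform bound $I_{n_k}(t)\leq I_{n_k}^0\, e^{-t\gamma_0 \eta/\alpha^*}\, e^{\alpha^* \gamma^\infty \|v\|_\infty}$ combined with $\sum_k I_{n_k}^0 < \infty$ yields \eqref{eq:divergencecase-extinction}, while persistence \eqref{eq:divergencecase-persistence} follows from the same contradiction that prevented $v(t)\to-\infty$.

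The chief technical obstacle is the lower bound $\liminf \bar S \geq 1/\alpha^*$, which hinges on a delicate combination of the (uniform) exponential smallness of $\sum_n I_n(t_j)$ with the finite relaxation time of the decoupled equation $S' = \Lambda - \theta S$; without Assumption \ref{as:omega} the uniformity over $n$ fails and the argument breaks. A secondary obstacle is the convergence (not merely boundedness) of $v(t)$ in case \ref{item:persistence-convergence}, which requires identifying a suitable Lyapunov function on the finite-dimensional reduced system and invoking LaSalle's invariance principle.
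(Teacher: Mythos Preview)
Your broad strategy---exploit the representation $I_n(t)=I_n^0 e^{t\gamma_n(\alpha_n\bar S(t)-1)}$, establish $\bar S(t)\to 1/\alpha^*$, then analyze $v(t)=t(\bar S(t)-1/\alpha^*)$---parallels the paper, but two central steps have genuine gaps.

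First, your argument for $\liminf_{t\to\infty}\bar S(t)\geq 1/\alpha^*$ is incomplete. Showing that $\bar S(t_j(1+\kappa))$ exceeds $L_*$ does not contradict $\liminf\bar S=L_*$: once $\bar S$ climbs near $1/\alpha^*$ you lose the exponential smallness of $\sum I_n$, so $S$ may drop and $\bar S$ may descend back toward $L_*$; your ``iterating'' step does not rule out such oscillation. The paper avoids this entirely by first compactifying $\mathbb{N}$ to a compact metric space $\overline{\mathbb{N}}$ (this is the real role of Assumption~\ref{as:omega}), obtaining weak-$*$ precompactness of the orbit in $\mathcal{M}_+(\overline{\mathbb{N}})$, and proving \emph{uniform} persistence $\liminf\sum I_n>0$ via a shifted-orbit argument of Magal--Zhao type; the bound $\liminf\bar S\geq 1/\alpha^*$ then follows in one line. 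Second, the promotion from $\bar S\to 1/\alpha^*$ to $S\to 1/\alpha^*$ is not a ``parallel ODE argument'': Ces\`aro convergence of $S$ does not imply convergence of $S$. The paper obtains $S\to 1/\alpha^*$ by showing that every complete orbit in the (weak-$*$) $\omega$-limit set is concentrated on $\{\alpha_n=\alpha^*\}$ and then applying a Lyapunov functional $V=S^*g(S/S^*)+\sum_n I_n^*g(I_n/I_n^*)$ with $g(x)=x-\ln x$ on these limit orbits (your proposed $V=S-(1/\alpha^*)\log(\alpha^*S)+(1/\alpha^*)\sum_m J_m$ is not dissipative along the flow). This forces $S^\infty\equiv 1/\alpha^*$ on every limit orbit, hence $S(t)\to 1/\alpha^*$ and $S'(t)\to 0$.

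A separate error in case~\ref{item:persistence-divergence}: your bound $I_{n_k}(t)\leq I_{n_k}^0\, e^{-t\gamma_0\eta/\alpha^*}e^{\alpha^*\gamma^\infty\|v\|_\infty}$ is invalid because $v(t)$ is \emph{unbounded} there---indeed if $v$ were bounded above then each $I_n(t)\leq C I_n^0$ and $I_n(t)\to 0$ pointwise would give $\sum I_n(t)\to 0$ by dominated convergence, contradicting persistence. The correct estimate uses only $\bar S(t)\to 1/\alpha^*$: for $t$ large one has $\alpha^*\bar S(t)-1\leq\tfrac12(\alpha^*-\alpha_{n_k})\bar S(t)$, whence $I_{n_k}(t)\leq I_{n_k}^0\,e^{-\gamma_0\eta\underline{S}\,t/2}$ uniformly in $k$. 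Note also that your reduction to a finite-dimensional system by grouping over $\gamma$-values is not literally available, since $\{\gamma_i:i\in\mathcal I^*\}$ need not be finite (only its $\omega$-limit set is).
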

{The proof of Theorem \ref{thm:discrete} will be given in section \ref{sec:discrete}.}

{Let us explain in a few words the content of Theorem \ref{thm:discrete}. Our basic assumption is that the basic reproduction number $\mathcal{R}_0$ is greater than one, because the infected population gets extinct otherwise. There are two typical situation. Case \ref{item:persistence-convergence} corresponds to the case when the maximal fitness is attained by at least one variant; in this case, the behavior of the infinite system is similar to the one of the finite system: we observe the unconditional convergence to an equilibrium state, that can be computed from the initial data. The case \ref{item:persistence-divergence}, when none of the variants attain the maximal fitness, is more intricate. In general, we can only draw two conclusions: the first is that the pathogen persists in large time as a whole (that's \eqref{eq:divergencecase-persistence}), and the second is that any family of variants whose fitness is uniformly dominated, eventually gets extinct (that's \eqref{eq:divergencecase-extinction}). In section \ref{sec:examples} we will give a counterexample showing that, in some cases, the total pathogen population does not converge to a limit.  }  

  {To go deeper in our analysis, we} can be somewhat more precise on the behavior of the total pathogen population at the expense of a slightly stronger assumption on the coefficients. {When the phenotypic values $\beta_n$ and $\gamma_n$ are in some sense uniformly represented in the initial state}, the pathogen strains that win the competition are the ones that maximize $\frac{\beta_n}{\gamma_n}$ first, and then $\gamma_n$, {as we will show in Proposition \ref{prop:convergence-mass}}. {We precise now what we mean by ``in some sense uniformly represented in the initial state''}. First we properly define the notion of ``maximal reachable recovery rate''. 
\begin{definition}[Maximal reachable recovery rate]\label{def:maximal-reachable-recovery-rate}
    Let Assumptions \ref{as:params} and \ref{as:omega} hold. Then the \textit{maximal reachable recovery rate} is 
	\begin{equation*}
	    \gamma^*:=\sup\left\{\gamma\in\omega\left((\gamma_n)_{n\in\mathbb{N}}\right)\,:\, (\alpha^*, \gamma)\in\omega\left( \big((\alpha_n, \gamma_n)\big)_{n\in\mathbb{N}}\right)\right\},
	\end{equation*}
	wherein $\omega\left( \big((\alpha_n, \gamma_n)\big)_{n\in\mathbb{N}}\right)$ is the $\omega$-limit set of the joint sequence $\big((\alpha_n, \gamma_n)\big)_{n\in\mathbb{N}}$.
	In other words, $\gamma^*$ is the maximal limit value of a subsequence of $\gamma_n$ such that $\frac{\beta_{n_k}}{\gamma_{n_k}}\to \alpha^*$.
\end{definition}
In the following assumption we impose that the initial mass of pathogens is never negligible around the maximal value $\gamma^*$. 
\begin{assumption}\label{as:disintegration}
	Suppose that the Assumptions \ref{as:params}, \ref{as:init} and \ref{as:omega} hold true, and let $\gamma^*$ be the maximal reachable recovery rate as in Definition \ref{def:maximal-reachable-recovery-rate}. We assume that for each $\varepsilon>0$ sufficiently small, there exist constants $\delta>0$ and $m>0$ such that for each value $\alpha\in [\alpha^*-\delta, \alpha^*]$ such that there exists $n\in\mathbb{N}$ with $\frac{\beta_n}{\gamma_n}=\alpha$, we have
	\begin{equation*}
		\dfrac{\displaystyle \smashoperator[r]{\sum_{\left\{n\,:\,\frac{\beta_n}{\gamma_n}=\alpha \text{ and }\gamma_n\geq \gamma^*-\varepsilon\right\}}} I_n^0 \phantom{balbla}}{\displaystyle\smashoperator[r]{\sum_{\left\{n\,:\,\frac{\beta_n}{\gamma_n}=\alpha \right\}}}I_n^0} \geq m. 
	\end{equation*}
	In other words, the probability of ``picking'' a pathogen with $\gamma_n$ close to $\gamma^*$ conditionally to the fact that $\alpha_n=\alpha$ with $\alpha$ close to $\alpha^*$ has a uniform positive lower bound. 
\end{assumption}
{Let us give a few examples of initial data that do or do not satisfy the assumption \ref{as:disintegration}. If $\gamma_n$ converges to its limit, then the limit is necessarily $\gamma^*$ and assumption \ref{as:disintegration} holds independently of the initial data $I^0_n>0$. If $\omega\big((\gamma_n)_{n\in\mathbb{N}}\big)=\{\gamma^1, \gamma^2\}$ has exactly two elements $\gamma^1<\gamma^2$, we have to look at the equality cases $\mathcal{A}_n:=\{k\,:\, \alpha_k=\alpha_n\}$. For simplicity, suppose that  $\gamma_{2n}\to \gamma^1$ and $\gamma_{2n+1}\to\gamma^2$ and $\mathcal{A}_n=\{2n, 2n+1\}$,  then Assumption \ref{as:disintegration} is satisfied if, and only if, there is $m>0$ such that 
\begin{equation*}
    \dfrac{I^0_{2n+1}}{I^0_{2n}+I^0_{2n+1}}\geq m. 
\end{equation*}
When Assumption \ref{as:disintegration} is satisfied, we can prove the following result.
}
\begin{proposition}\label{prop:convergence-mass}
	Suppose that Assumption \ref{as:disintegration} holds and that $\frac{\beta_n}{\gamma_n}<\alpha^*$ for all $n\in\mathbb{N}$. 
	Then the total pathogen population converges to a positive limit 
	\begin{equation}\label{eq:limit-mass-discrete}
		\lim_{t\to+\infty} \sum_{n=0}^{+\infty}I_n(t) = \frac{\theta}{\alpha^*\gamma_\infty}(\mathcal R_0-1).
	\end{equation}
	Moreover if $n_k\in\mathbb{N}$ is a sequence of integers  such that 
	\begin{equation} \label{eq:propconvergence-suboptimal-trait}
		\sup_{k\in\mathbb{N}} \dfrac{\beta_{n_k}}{\gamma_{n_k}}=\alpha^* \text{ and } \sup_{k\in\mathbb{N}} \gamma_{n_k}<\gamma^*, 
	\end{equation}
	then we have 
	\begin{equation}\label{eq:propconvergence-suboptimal-vanishes}
		\limsup_{t\to+\infty} \sum_{k\in\mathbb{N}} I_{n_k}(t) = 0.
	\end{equation}
\end{proposition}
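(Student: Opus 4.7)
The plan is to exploit the convergence $S(t)\to1/\alpha^*$ and $S'(t)\to0$ from Theorem~\ref{thm:discrete}, together with the explicit representation~\eqref{eq:I_n}, to show that the asymptotic mass of the system concentrates on variants with $\alpha_n\to\alpha^*$ and $\gamma_n\to\gamma^*$ (Assumption~\ref{as:disintegration} being precisely what guarantees enough initial mass near $\gamma^*$ for this concentration to take place). First, I would derive the auxiliary limit $\lim_{t\to+\infty}B(t)=\theta(\mathcal R_0-1)$, with $B(t):=\sum_n\beta_n I_n(t)=\sum_n\alpha_n\gamma_n I_n(t)$, directly from the ODE $S'=\Lambda-\theta S-SB$. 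Since $\alpha_n<\alpha^*$, this already gives $\alpha^* A(t)\geq B(t)$, where $A(t):=\sum_n\gamma_n I_n(t)$, hence $\liminf_{t\to+\infty}A(t)\geq\theta(\mathcal R_0-1)/\alpha^*$.

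Next, writing $\sigma(t):=\int_0^t S(s)\,\dd s$ and $\phi(t):=\alpha^*\sigma(t)-t$, I would show $\phi(t)\to+\infty$. Indeed, using the rewriting $I_n(t)=I_n^0\exp\bigl(\gamma_n[-(\alpha^*-\alpha_n)\sigma(t)+\phi(t)]\bigr)$ of~\eqref{eq:I_n}, boundedness of $\phi$ together with Assumption~\ref{as:omega} would imply $I_n(t)\to 0$ for every $n$, and a dominated convergence argument would then contradict the persistence~\eqref{eq:divergencecase-persistence}. Therefore $\bar S(t)>1/\alpha^*$ in an integrated sense, and at each large time $t$ the variants satisfying $\alpha^*-\alpha_n<\alpha^*\phi(t)/t$ lie in an instantaneous growth regime in which, among two indices $m,n$ with $\alpha_m=\alpha_n$, the variant with the larger recovery rate grows strictly faster.

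The key step is then to prove the extinction~\eqref{eq:propconvergence-suboptimal-vanishes}. Given $(n_k)$ satisfying~\eqref{eq:propconvergence-suboptimal-trait}, I split it into indices with $\alpha_{n_k}<\alpha^*-\delta$, whose contribution vanishes by~\eqref{eq:divergencecase-extinction} of Theorem~\ref{thm:discrete}, and indices with $\alpha_{n_k}\geq\alpha^*-\delta$. For the latter, Assumption~\ref{as:disintegration} produces, for each value $\alpha=\alpha_{n_k}$, a competing family $G_\alpha:=\{m:\alpha_m=\alpha,\ \gamma_m\geq\gamma^*-\varepsilon\}$ with initial mass bounded below by a uniform fraction of $\sum_{\alpha_n=\alpha}I_n^0$. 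The comparison
\begin{equation*}
    \frac{I_m(t)}{I_{n_k}(t)}=\frac{I_m^0}{I_{n_k}^0}\exp\bigl(t(\gamma_m-\gamma_{n_k})(\alpha_{n_k}\bar S(t)-1)\bigr),\quad m\in G_{\alpha_{n_k}},
\end{equation*}
shows that this ratio blows up in the instantaneous growth regime identified above, so that $I_{n_k}$ is asymptotically swamped by $\sum_{m\in G_{\alpha_{n_k}}}I_m(t)$. Once~\eqref{eq:propconvergence-suboptimal-vanishes} is secured, the total mass $P(t):=\sum_n I_n(t)$ is concentrated on the set $\{\gamma_n\geq\gamma^*-\varepsilon\}$ for arbitrarily small $\varepsilon>0$, so $A(t)/P(t)\to\gamma^*$; combined with a Cesaro-type time averaging of $P'(t)=S(t)B(t)-A(t)$, which gives $A(t)\to\theta(\mathcal R_0-1)/\alpha^*$ thanks to the boundedness of $P$, this yields~\eqref{eq:limit-mass-discrete}.

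The main obstacle I expect is in the third paragraph: the comparison favours the larger-$\gamma$ partners only inside the growth cone $\alpha_n\bar S(t)>1$, while within any fixed $\alpha$-class all variants eventually enter the decay regime, where, paradoxically, smaller $\gamma$ persists longer. The delicate point is therefore to establish that the dynamics continuously refresh the dominant mass into freshly entering growth-cone variants with $\gamma\approx\gamma^*$, and that the leftover decay-regime mass of the suboptimal $n_k$ is asymptotically negligible despite its local dominance within its own $\alpha$-class.
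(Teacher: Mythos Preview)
Your overall architecture — first establish~\eqref{eq:propconvergence-suboptimal-vanishes}, then read off the limit of the total mass from the $S$-equation — matches the paper's. The obstacle you flag is real, and your pairwise ratio comparison does not close it, for a reason slightly sharper than the one you state: even \emph{inside} the growth cone the comparison is non-uniform. For $n_k$ with $\alpha_{n_k}\bar S(t)\geq 1$ the exponent $t(\gamma_m-\gamma_{n_k})(\alpha_{n_k}\bar S(t)-1)$ ranges from $0$ at the cone boundary up to roughly $(\gamma_m-\gamma_{n_k})\phi(t)$; the variants with $\alpha_{n_k}$ just above $1/\bar S(t)$ still have ratios close to the initial ones, so after summing over $k$ you do not get smallness relative to the bounded total mass. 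Knowing $\phi(t)\to+\infty$ is not enough by itself.

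The paper handles the extinction step quite differently. First, the decay-cone contribution is killed in one stroke: on $\{\alpha_n\bar S(t)<1\}$ one has $I_n(t)\leq I_n^0$, and since $I_n(t)\to 0$ for every $n$, dominated convergence gives $\sum_n\mathbbm 1_{\alpha_n\bar S(t)<1}I_n(t)\to 0$. This disposes of your last sentence entirely. Second, Assumption~\ref{as:disintegration} is used not to manufacture partners for a ratio, but to show that the \emph{reference sum}
\[
\Sigma_\varepsilon(t)\ :=\ \sum_{n}\mathbbm 1_{\alpha_n\bar S(t)\geq 1}\,e^{t(\gamma^*-\varepsilon)(\alpha_n\bar S(t)-1)}\,I_n^0
\]
stays bounded: grouping by level sets $\{\alpha_n=\alpha\}$, a uniform fraction $\geq m$ of the initial mass at each level has $\gamma_n\geq\gamma^*-\varepsilon$, and for those indices the true $I_n(t)$ already dominates the corresponding term of $\Sigma_\varepsilon$; hence $\Sigma_\varepsilon(t)\leq m^{-1}\sum_{\gamma_n\geq\gamma^*-\varepsilon}I_n(t)+o(1)=O(1)$. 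Third, the suboptimal-$\gamma$ mass in the growth cone is bounded above by $\Sigma_\varepsilon(t)$; one then factors each term as $e^{-t\frac{\varepsilon}{2}(\alpha_n\bar S(t)-1)}\cdot e^{t(\gamma^*-\frac{\varepsilon}{2})(\alpha_n\bar S(t)-1)}I_n^0$ and applies H\"older's inequality against the measure $\mathbbm 1_{\alpha_n\bar S(t)\geq 1}e^{t(\gamma^*-\frac{\varepsilon}{2})(\alpha_n\bar S(t)-1)}I_n^0$. One factor is controlled by $\Sigma_{\varepsilon/2}(t)=O(1)$, the other by $\sum_n\mathbbm 1_{\alpha_n\bar S(t)\geq 1}I_n^0\to 0$ (dominated convergence again, since $\alpha_n<\alpha^*$). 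This interpolation is the missing idea.

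For the last step, your Ces\`aro detour is superfluous: once concentration on $\{\alpha_n\to\alpha^*,\ \gamma_n\to\gamma^*\}$ is known, you have $B(t)=(\alpha^*\gamma^*+o(1))P(t)$, and since $B(t)\to\theta(\mathcal R_0-1)$ from the $S$-equation, $P(t)\to\theta(\mathcal R_0-1)/(\alpha^*\gamma^*)$ directly.
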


The proof of the results of this Section will be given in section \ref{sec:proof}. First we present some particular choices for which the replacement dynamics of the variants can be analytically understood.

\section{Examples}
\label{sec:examples}
    In this section we provide examples of explicit choices of the coefficients for which the asymptotic dynamics can be understood analytically. We also provide numerical simulations of the corresponding set of ODEs, with a particular attention to the dynamics of variants replacement. 

We subdivide further the section in two subsections: in section \ref{sec:monovalent-gamma} we deal with examples for which $\gamma_n\equiv \gamma$ is a constant sequence, while in section \ref{sec:bivalent-gamma} we deal with examples for which $\gamma_n$ takes alternatively two values.

\subsection{Replacement dynamics 1: Monovalent $\gamma_n$}
\label{sec:monovalent-gamma}
{
In this section we investigate the asymptotic transition time between the prevalence of two given variants, depending on their characteristics. We place ourselves in the case when $\gamma_n$ is \textit{monovalent}, that is to say, $\gamma_n\equiv \gamma>0$ is independent of $n\in\mathbb{N}$. We also assume that $\beta_n<\beta^*$ for all $n\geq 0$, so that we already know that the mass converges thanks to Proposition \ref{prop:convergence-mass}:
\begin{equation}\label{eq:convergence-mass}
    \mathcal{I}(t):=\sum_{n=0}^{+\infty}I_n(t)\xrightarrow[t\to+\infty]{}\mathcal{I}_\infty.
\end{equation}
Now let us observe that the total mass rewrites as
\begin{equation*}
    \sum_{n=0}^{+\infty}I_n(t) = \sum_{n=0}^{+\infty}I_n^0e^{\beta_n \overline{S}(s)-\gamma_n t} = \sum_{n=0}^{+\infty} I_n^0e^{(\beta_n-\beta^*)\overline{S}(t) t +(\beta^*\overline{S}(t)- \gamma) t} = e^{(\beta^*\overline{S}(t)-\gamma)t}\sum_{n=0}^{+\infty} I_n^0e^{-(\beta^*-\beta_n)\overline{S}(t) t},
\end{equation*} 
and in particular:
\begin{equation}\label{eq:fundamental}
	\sum_{n=0}^{+\infty}I_n(t) = e^{t\left(\beta^*\overline{S}(t)-\gamma\right)} F\big(t\overline{S}(t)\big), 
\end{equation} 
where $F$ is the function defined by 
\begin{equation}\label{eq:defF}
	F(\xi):=\sum_{n=0}^{+\infty}I_n^0 e^{-(\beta^*-\beta_n)\xi}.
\end{equation}
}

Notice that, since $\beta_n<\beta^*$ for all $n\geq 0$, by \eqref{eq:defF} we have
\begin{equation*}
	\lim_{\xi\to+\infty} F(\xi)=0. 
\end{equation*}
so that, using the fact that $\sum I_n(t)=\mathcal{I}_\infty+o(1)>0$, we get
\begin{equation*}
    \lim_{t\to+\infty} e^{(\beta^*\overline{S}(t) - \gamma)t} = +\infty.
\end{equation*}
Moreover, we have
\begin{equation*}
    \ln\left(\sum_{n=0}^{+\infty}I_n(t)\right)= t\left(\beta^*\overline{S}(t)-\gamma\right) + \ln\left(F(t\overline{S}(t))\right)
\end{equation*}
so that thanks to Theorem \ref{thm:discrete} we have
\begin{equation*}
    \frac{1}{t}\ln\big(F(\overline{S}(t))\big) = -{\left(\beta^*\overline{S}(t)-\gamma\right)}+\frac{1}{t}\ln\left(\sum_{n=0}^{+\infty}I_n(t)\right)\xrightarrow[t\to+\infty]{}0.
\end{equation*}
These algebraic remarks will serve to  estimate the replacement speed of the variants. In particular, we will use the key relation
\begin{equation}\label{eq:formula-Sbar}
	\overline{S}(t) = \frac{\gamma}{\beta^*}   -\frac{1}{t\beta^*}\ln\big(F(t\overline{S}(t))\big) +\frac{1}{t\beta^*} \ln\left(\sum_{n=0}^{+\infty} I_n(t) \right).
\end{equation}
%and
%\begin{align*}
%	I_n(t) &= I_n^0 e^{\beta_n \overline{S}(t)t-\gamma t } = I_n^0 e^{-(\beta_n-\beta^*) t\overline{S}(t)+\beta^*t\overline{S}(t)-\gamma t } \\ 
%	& = I_n^0 e^{t\big[-(\beta^*-\beta_n) \overline{S}(t)+\beta^*\overline{S}(t)-\gamma \big] } = I_n^0 e^{t\left[-(\beta^*-\beta_n)\left(\frac{\gamma}{\beta^*}-\frac{1}{t\beta^*}\ln\big(F(t\overline{S}(t))\big) +\frac{\ln \mathcal{I}(t)}{t} \right) -\frac{1}{t}\ln\big(F(t\overline{S}(t))\big) +\frac{D}{t}\right]} \\ 
%	&= e^{t\left[-(\beta^*-\beta_n)\left(\frac{\gamma}{\beta^*}+o_t(1)\right) -\frac{1}{t}\ln\big(F(t\overline{S}(t))\big) + \frac{\ln \mathcal{I}(t)}{t}\right] + \ln(I_n^0)}. 
%\end{align*}
%To find which variant exists at time $t$ in non-negligible proportion, we solve the inequality
%\begin{equation*}
%	t\left[-(\beta^*-\beta_n)\left(\frac{\gamma}{\beta^*}+o_t(1)\right) -\frac{1}{t}\ln\big(F(t\overline{S}(t))\big) + \frac{\ln \mathcal{I}(t)}{t}\right] + \ln(I_n^0)\geq \nu, 
%\end{equation*}
%for a sufficiently small but arbitrary $\nu\in\mathbb{R}$. That is to say, 
%\begin{equation}\label{eq:local-persistence}
%	(\beta^*-\beta_n)\left(\frac{\gamma}{\beta^*}+o_t(1)\right) -\frac{1}{t}\ln(I_n^0) \leq \frac{\ln \mathcal{I}(t)-\ln\big(F(t\overline{S}(t))\big)-\nu}{t}.
%\end{equation}
%We cannot say more without explicitly computing the term $\ln\big(F(t\overline{S}(t))\big)$. This is done for several examples in the following subsections. 

{
\subsubsection{Monovalent example 1: the algebraic-algebraic case.}

In this subsection we assume the following framework. 
\begin{assumption}\label{as:algebraic-algebraic}
We let $\gamma_n\equiv\gamma>0$ be a constant sequence and assume that the initial data is algebraic and the convergence of the fitness to its maximum is algebraic :
\begin{equation}\label{eq:algebraic-algebraic}
	I_n^0=\frac{1}{(n+1)^A}\text{ with $A>1$ while }\beta_n=\beta^*-\frac{1}{n+1}\text{ for all $n\geq 0$}.
\end{equation}
\end{assumption}
\begin{claim}\label{claim:algebraic-algebraic}
Under assumption \ref{as:algebraic-algebraic}, there exist constants $C>1$ and $\hat t>0$ large enough such that for all $t\geq \hat t$ and for all $n\geq 0$ one has 
\begin{equation*}
\frac{1}{Ct}f\left(\frac{(n+1)}{\gamma t+(A-1)\ln t}\right)\leq I_n(t)\leq \frac{C}{t}f\left(\frac{(n+1)}{\gamma t+(A-1)\ln t}\right),
\end{equation*} 
where the profile function $f$ is given by
\begin{equation}\label{function3}
f(X)=\exp\left(-A \ln X-\frac{1}{X\beta^*}\right).
\end{equation}
\end{claim}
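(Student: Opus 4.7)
The plan is to combine the closed-form expression $I_n(t)=I_n^0\,e^{t(\beta_n\overline{S}(t)-\gamma)}$ (a consequence of \eqref{eq:I_n} together with the initial data) with the key relation \eqref{eq:formula-Sbar} and a sharp asymptotic estimate for the function $F$ defined in \eqref{eq:defF}. Under Assumption \ref{as:algebraic-algebraic} the claim is only meaningful when $\mathcal R_0=\Lambda\beta^*/(\theta\gamma)>1$, so Proposition \ref{prop:convergence-mass} applies and $\sum_n I_n(t)\to \mathcal I_\infty>0$. In particular, the term $\frac{1}{t\beta^*}\ln\!\bigl(\sum_n I_n(t)\bigr)$ appearing in \eqref{eq:formula-Sbar} is $O(1/t)$ and may be neglected at the logarithmic scale $\ln t/t$.

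I would proceed in three steps. \emph{(i)} Bracket the series $F(\xi)=\sum_{n\ge 0}(n+1)^{-A}e^{-\xi/(n+1)}$ by the monotone integral $\int_0^{+\infty}x^{-A}e^{-\xi/x}\,\dd x$, which the substitution $u=\xi/x$ computes explicitly as $\Gamma(A-1)\,\xi^{1-A}$ (the integrability requires $A>1$, which is assumed). This yields the two-sided bound $c_1\,\xi^{1-A}\le F(\xi)\le c_2\,\xi^{1-A}$ for $\xi$ large, and hence $\ln F(\xi)=-(A-1)\ln\xi+O(1)$. \emph{(ii)} Bootstrap in \eqref{eq:formula-Sbar}: since $\overline{S}(t)\to \gamma/\beta^*$ by Theorem \ref{thm:discrete}, one has $t\overline{S}(t)\sim \gamma t/\beta^*\to+\infty$ and $\ln(t\overline{S}(t))=\ln t+O(1)$. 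Substituting the $F$-asymptotics produces
\[
    \beta^*\, t\,\overline{S}(t) = \gamma t+(A-1)\ln t+O(1) = \tau+O(1),
\]
with an error bounded independently of $t$. \emph{(iii)} Insert this back into the formula for $I_n(t)$ using $\beta_n=\beta^*-1/(n+1)$:
\[
    I_n(t) = \frac{1}{(n+1)^A}\,e^{t(\beta^*\overline{S}(t)-\gamma)}\,e^{-t\overline{S}(t)/(n+1)} \asymp \frac{t^{A-1}}{(n+1)^A}\,e^{-\tau/((n+1)\beta^*)}
\]
uniformly in $n\ge 0$, where the first exponential equals $e^{(A-1)\ln t+O(1)}\asymp t^{A-1}$ and the second equals $e^{-\tau/((n+1)\beta^*)+O(1/(n+1))}\asymp e^{-\tau/((n+1)\beta^*)}$. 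Since $\tau^A=(\gamma t)^A(1+o(1))\asymp t^A$, comparing with the profile $\frac{1}{t}f\!\left(\frac{n+1}{\tau}\right)=\frac{\tau^A}{t(n+1)^A}\,e^{-\tau/((n+1)\beta^*)}$ gives the claimed two-sided bound with some constant $C>1$.

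The main obstacle I foresee is controlling the error in step \emph{(iii)} uniformly in $n$. The discrepancy between $t\overline{S}(t)/(n+1)$ and $\tau/((n+1)\beta^*)$ contributes a factor $e^{O(1/(n+1))}$, which is bounded uniformly in $n\ge 0$ \emph{only because} the error in $\beta^* t\overline{S}(t)=\tau+O(1)$ is genuinely $O(1)$ rather than $o(\ln t)$. This demands careful bookkeeping: one must show that the contribution from $\sum I_n(t)$ in \eqref{eq:formula-Sbar} is $O(1/t)$ (relying on the \emph{positive} limit provided by Proposition \ref{prop:convergence-mass}) and that the error in the $F$-asymptotics of step \emph{(i)} is $O(1)$ after taking a logarithm, so that neither pollutes the decisive logarithmic correction $(A-1)\ln t$.
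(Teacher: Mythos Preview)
Your proposal is correct and follows essentially the same three-step strategy as the paper: establish the asymptotics $\ln F(\xi)=-(A-1)\ln\xi+\mathcal O(1)$, feed this into \eqref{eq:formula-Sbar} to obtain $\beta^* t\,\overline S(t)=\gamma t+(A-1)\ln t+\mathcal O(1)$, and then substitute into the explicit formula for $I_n(t)$, tracking uniformity in $n$. Your observation that the $\mathcal O(1)$ (rather than merely $o(\ln t)$) control on $\beta^* t\,\overline S(t)-\tau$ is exactly what makes the factor $e^{\mathcal O(1/(n+1))}$ harmless is the right point to emphasize.

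The only noteworthy difference is in step \emph{(i)}. The paper proves the $F$-asymptotics by a discrete argument (Lemma \ref{LE-exp}): it locates the maximizing index $n(\xi)\approx \xi/A$, bounds the tail $\sum_{p\ge 0}u_{n(\xi)+p}/u_{n(\xi)}$ and the head $\sum_{k\le n(\xi)}u_k/u_{n(\xi)}$ separately, and then uses a Riemann-sum limit to pin down the order $\xi^{1-A}$. Your integral comparison with $\int_0^\infty x^{-A}e^{-\xi/x}\,\dd x=\Gamma(A-1)\,\xi^{1-A}$ is more direct; just note that the summand $x\mapsto x^{-A}e^{-\xi/x}$ is unimodal with maximum $C_A\,\xi^{-A}$ at $x=\xi/A$, so the sum and the integral differ by at most this maximum, which is negligible against $\xi^{1-A}$. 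With that caveat spelled out, your route is a clean alternative to the paper's lemma and yields the same two-sided bound.
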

\begin{figure}[H]
    \centering
    \includegraphics{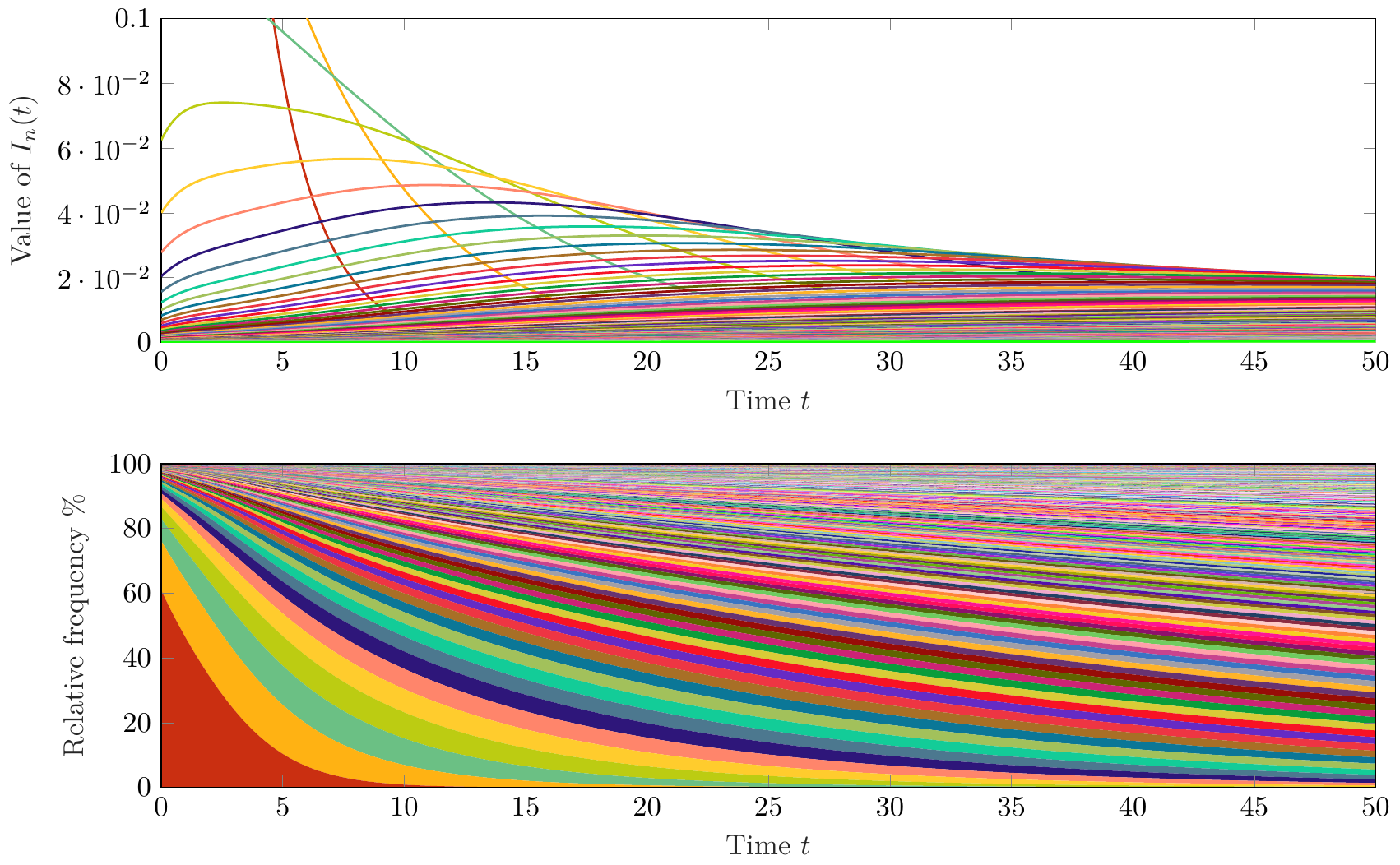}
    \caption{Plots of the solution $I_n(t)$ in the monovalent case 1. {Colors are chosen at random so that each variant has a different color.} The bottom figure suggests that the diversity of variants increases with time, since there are more and more lines with a noticeable width.  \textbf{Top figure:} Value of $I_n(t)$. \textbf{Bottom figure:} Relative frequencies of the variants as a function of time. This is the proportion of each variant in the global population. \textbf{Parameters:} $\Lambda=1$, $\theta=1$, $S_0=1$, $\gamma=\frac{1}{2}$, $A=2$, $\beta^*=1$. We used a total of $N=300$ variants for this simulation. For interpretation
of the colors in the figure(s), the reader is referred to the online version of this article.}\label{Fig1}
\end{figure}

We start with a technical lemma.
\begin{lemma}\label{LE-exp}
    The function $F(\xi)$ defined by \eqref{eq:defF} satisfies 
$$
    \ln F(\xi)=-(A-1)\ln \xi+\mathcal{O}(1)\text{ as }\xi\to\infty.
$$
\end{lemma}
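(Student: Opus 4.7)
My plan is to reduce the series to an explicit asymptotic integral via a Laplace-type change of variables, and control the sum by comparison to its integral because the relevant summand is unimodal in $n$.

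First I would reindex by $m=n+1$ and write
\begin{equation*}
    F(\xi)=\sum_{m=1}^{+\infty} g_\xi(m), \qquad g_\xi(m):=m^{-A}e^{-\xi/m}.
\end{equation*}
A direct computation shows that $g_\xi$ is smooth on $(0,\infty)$, strictly increasing on $(0,\xi/A]$ and strictly decreasing on $[\xi/A,\infty)$, with maximum value $g_\xi(\xi/A)=(A/\xi)^{A}e^{-A}=O(\xi^{-A})$. Using monotonicity piecewise (comparing $g_\xi(m)$ to $\int_{m-1}^{m}g_\xi$ on the decreasing part and to $\int_{m}^{m+1}g_\xi$ on the increasing part), one obtains
\begin{equation*}
    \Bigl|\,F(\xi)-\int_1^{+\infty} g_\xi(s)\,\dd s\,\Bigr| \leq C\,g_\xi(\xi/A) = O(\xi^{-A}),
\end{equation*}
for some constant $C>0$ uniform in $\xi$ large.

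Next I would evaluate the integral via the substitution $u=\xi/s$, $\dd s = -\xi u^{-2}\dd u$:
\begin{equation*}
    \int_1^{+\infty} s^{-A}e^{-\xi/s}\,\dd s = \xi^{1-A}\int_0^{\xi}u^{A-2}e^{-u}\,\dd u.
\end{equation*}
Since $A>1$, the integrand $u^{A-2}e^{-u}$ is integrable on $(0,+\infty)$ (the singularity at $0$ is of order $A-2>-1$), and
\begin{equation*}
    \int_0^{\xi}u^{A-2}e^{-u}\,\dd u \xrightarrow[\xi\to+\infty]{}\Gamma(A-1)\in(0,+\infty).
\end{equation*}
Hence $\int_1^{+\infty}g_\xi(s)\,\dd s = \Gamma(A-1)\,\xi^{1-A}\bigl(1+o(1)\bigr)$.

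Combining the two estimates gives $F(\xi)=\Gamma(A-1)\,\xi^{1-A}+O(\xi^{-A})$, and taking logarithms yields the announced expansion $\ln F(\xi)=-(A-1)\ln \xi+\mathcal{O}(1)$. The only mildly delicate step is the uniform sum-vs-integral comparison, which I would handle carefully near the maximum $m\approx \xi/A$; this is where the unimodality of $g_\xi$ and the bound on its peak value are crucial, and I expect this to be the main technical point of the proof.
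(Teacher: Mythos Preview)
Your proof is correct. The unimodality argument indeed gives $\bigl|\sum_{m\ge 1}g_\xi(m)-\int_1^\infty g_\xi\bigr|\le C\max g_\xi=O(\xi^{-A})$, and the change of variables $u=\xi/s$ cleanly identifies the integral as $\xi^{1-A}\int_0^\xi u^{A-2}e^{-u}\dd u$, which is $\Theta(\xi^{1-A})$ for $A>1$. In fact your argument yields the sharper asymptotic $F(\xi)\sim\Gamma(A-1)\,\xi^{1-A}$, which is stronger than what the lemma requires.

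The paper takes a different route. Rather than comparing the sum with an integral, it works directly with the discrete terms $u_n(\xi)=\exp(-A\ln(n+1)-\xi/(n+1))$: it locates the peak index $n(\xi)=\lfloor \xi/A-1\rfloor$, then estimates the ratio $u_{n(\xi)+p}(\xi)/u_{n(\xi)}(\xi)$ term by term to bound the tail, and treats the head $\sum_{k\le n(\xi)}u_k(\xi)/u_{n(\xi)}(\xi)$ as a Riemann sum for an explicit continuous function. This yields $F(\xi)\asymp n(\xi)\,u_{n(\xi)}(\xi)\asymp \xi^{1-A}$. Your integral-comparison approach is shorter and delivers the exact leading constant; the paper's discrete decomposition, while more laborious here, is closer in spirit to the techniques used in the later Gaussian--exponential examples, where the peak is much sharper and integral approximations alone would be less informative.
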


\begin{proof}
Let $u_n(\xi):=\exp\left(-A\ln (n+1)-\frac{\xi}{n+1}\right)$ so that
$$
F(\xi)=\sum_{n=0}^\infty u_n(\xi).
$$
For $\xi>0$ we define $x(\xi)>0$ and $n(\xi)\in\mathbb N$ by
$$
x(\xi)+1=\frac{\xi}{A}\text{ and } n(\xi)=\lfloor x(\xi)\rfloor\text{ the integer part of $x(\xi)$}.
$$
Next we have for $p\in\mathbb N$ and $\xi>0$:
$$
\frac{u_{n(\xi)+p}(\xi)}{u_{n(\xi)}(\xi)}=\left(\frac{n(\xi)+1}{n(\xi)+1+p}\right)^A\exp\left(\xi\frac{p}{(n(\xi)+1)(n(\xi)+1+p)}\right)
$$
Hence we get
\begin{equation*}
\begin{split}
    \sum_{p=0}^{\infty}\frac{u_{n(\xi)+p}(\xi)}{u_{n(\xi)}(\xi)}&\leq \sum_{p=0}^\infty\left(\frac{n(\xi)+1}{n(\xi)+1+p}\right)^A\exp\left(\frac{\xi}{(n(\xi)+1)}\right)\\
&\leq (n(\xi)+1)^A\exp\left(\frac{\xi}{(n(\xi)+1)}\right) \sum_{k=n(\xi)+1}^\infty\frac{1}{k^A}.
\end{split}
\end{equation*}
Since 
    \begin{equation*}
	\sum_{k=n(\xi)+1}^{+\infty} \frac{1}{k^A} \leq \int_{n(\xi)+1}^{+\infty} \frac{1}{y^A}\dd y = \frac{1}{A-1}(n(\xi)+1)^{-(A-1)}, 
    \end{equation*}
we obtain that there exists some constant $C>0$ and $\xi_0$ large enough such that 
\begin{equation*}
\sum_{p=0}^\infty\frac{u_{n(\xi)+p}(\xi)}{u_{n(\xi)}(\xi)}
\leq C n(\xi)\exp\left(\frac{\xi}{(n(\xi)+1)}\right) \;\;\forall \xi>\xi_0.
\end{equation*}
This also rewrites as
\begin{equation*}
\sum_{p=0}^\infty u_{n(\xi)+p}(\xi)
\leq C n(\xi)u_{n(\xi)}(\xi)\exp\left(\frac{\xi}{(n(\xi)+1)}\right) \;\;\forall \xi>\xi_0.
\end{equation*}

On the other hand we have
$$
\sum_{k=0}^{n(\xi)} \frac{u_k(\xi)}{u_{n(\xi)}(\xi)}=\sum_{k=0}^{n(\xi)} \left(\frac{n(\xi)+1}{k+1}\right)^A \exp\left(\frac{\xi}{n(\xi)+1}\frac{k-n(\xi)}{k+1}\right).
$$
Next due to the definition of $n(\xi)$ we have
$$
% ??? \frac{\xi}{n(\xi)+1}=\frac{A x(\xi)}{(n(\xi)+1)},
     n(\xi)\leq x(\xi)=\frac{\xi}{A}-1<n(\xi)+1,
$$
so that
$$
    \frac{A n(\xi)}{n(\xi)+1}\leq\frac{\xi{-A}}{n(\xi)+1}\leq A .
$$
Hence there exists $\xi_1>0$ large enough so that
$$
A \geq\frac{\xi}{n(\xi)+1}\geq \frac{A}{2},\;\forall \xi>\xi_1.
$$
As a consequence for all $\xi>\xi_1$ we have
\begin{equation*}
\begin{split}
\sum_{k=0}^{n(\xi)} \frac{u_k(\xi)}{u_{n(\xi)}(\xi)}&\leq \sum_{k=0}^{n(\xi)} \left(\frac{n(\xi)+1}{k+1}\right)^A \exp\left(\frac{A}{2}\frac{k-n(\xi)}{k+1}\right)\\
&\leq \sum_{k=0}^{n(\xi)} \left(\frac{n(\xi)+1}{k+1}\right)^A \exp\left(\frac{A}{2}\left(1-\frac{n(\xi)+1}{k+1}\right)\right)
\end{split}
\end{equation*}
and similarly
$$
\sum_{k=0}^{n(\xi)} \frac{u_k(\xi)}{u_{n(\xi)}(\xi)}\geq \sum_{k=0}^{n(\xi)} \left(\frac{n(\xi)+1}{k+1}\right)^A \exp\left(A\left(1-\frac{n(\xi)+1}{k+1}\right)\right).
$$
Now using Riemann sums with the continuous function 
$$
x\mapsto \begin{cases} 0\text{ if $x\leq 0$}\\ \frac{1}{x^A}\exp\left(\frac{1}{2A}\left(1-\frac{1}{x}\right)\right)\text{ if $x>0$},
\end{cases}
$$
let us observe that we have
$$
n\sum_{k=0}^{n} \left(\frac{n+1}{k+1}\right)^A \exp\left(\frac{1}{2A}\left(1-\frac{n+1}{k+1}\right)\right)\to \int_0^1 \frac{1}{x^A}\exp\left(\frac{1}{2A}\left(1-\frac{1}{x}\right)\right)dx\text{ as }n\to\infty.
$$
As a consequence, there exists some constant, still denoted by $C>1$ large engou and $\xi_2>0$ large enough such that
\begin{equation*}
\begin{split}
C^{-1} n(\xi)u_{n(\xi)}(\xi)\leq \sum_{k=0}^{n(\xi)} u_k(\xi)&\leq Cn(\xi)u_{n(\xi)}(\xi),\;\forall \xi>\xi_2.
\end{split}
\end{equation*}
Coupling the two above estimates ensures that there exists $C>1$ and $\hat{\xi}\gg 1$ large enough so that
$$
F(\xi)\leq C n(\xi)u_{n(\xi)}(\xi)\leq C\xi^{1-A},\;\forall \xi>\hat \xi,
$$
while 
$$
F(\xi)\geq \sum_{k=0}^{n(\xi)} u_k(\xi)\geq C^{-1}n(\xi)u_{n(\xi)}(\xi).
$$
The completes the proof of the lemma.

\end{proof}

\begin{proof}[Proof of Claim \ref{claim:algebraic-algebraic}]
As before the proof relies on Lemma \ref{LE-exp} and \eqref{eq:formula-Sbar}. Indeed using these  we have
	\begin{align*}
		\overline{S}(t) &= \frac{\gamma}{\beta^*} -\frac{1}{t\beta^*}\ln\big(F(t\overline{S}(t))\big) +\frac{1}{t\beta^*} \ln\left(\sum_{n=0}^{+\infty} I_n(t) \right) \\ 
		&=\frac{\gamma}{\beta^*} +\frac{A-1}{\beta^*}\dfrac{\ln(t)}{t} +\mathcal{O}\left(\frac{1}{t}\right). 
	\end{align*}
	As a consequence we obtain
	\begin{align*}
		I_n(t) &= I_n^0\exp\left(\big(\beta^*-\frac{1}{n+1}\big) \left(\frac{\gamma}{\beta^*} + \frac{A-1}{\beta^*} \frac{\ln(t)}{t}+\mathcal{O}\left(\frac{1}{t}\right)\right)t-\gamma t \right) \\ 
		&= I_n^0 \exp\left((A-1)\ln(t)-\frac{1}{n+1}\left(\frac{\gamma}{\beta^*}+\frac{A-1}{\beta^*}\frac{\ln(t)}{t}\right)t+\mathcal{O}(1)\right)\\
		&= \exp\left(-A \ln(n+1)+(A-1)\ln(t)-\frac{1}{n+1}\left(\frac{\gamma}{\beta^*}+\frac{A-1}{\beta^*}\frac{\ln(t)}{t}\right)t+\mathcal{O}(1)\right).
	\end{align*}
This rewrites uniformly for $n\geq 0$ as follows
	\begin{align*}
		I_n(t) &= e^{\mathcal O(1)} \frac{1}{t}\exp\left(-A \ln\frac{(n+1)}{\gamma t+(A-1)\ln t}-\frac{1}{\beta^*}\frac{\gamma t+(A-1)\ln t}{n+1}\right). 
	\end{align*}
	Recalling the definition of the function $f$ in \eqref{function3}, the above equality becomes
	$$
	I_n(t)\asymp \frac{1}{t}f\left(\frac{(n+1)}{\gamma t+(A-1)\ln t}\right)\text{ with }f(X)=\exp\left(-A \ln X-\frac{1}{X\beta^*}\right),
	$$
	that completes the proof of the lemma.
	\end{proof}
	
}

\subsubsection{Monovalent example 2: the exponential-exponential case.}
{We work under the following assumption.
{
\begin{assumption}\label{as:analytic-example-1}
	We assume that $\gamma_n\equiv \gamma>0$ is a positive constant, $I_n^0=e^{-An}$ and  $\beta_n=\beta^*-Be^{-Cn}$ for some constants  $A>0$, $B>0$, and $C>0$. In other words, the initial data is exponential and the convergence of the fitness to its maximum is also exponential. 
\end{assumption}
In this case we can show that the function $F(\xi)$ behaves like $\xi^{-\frac{A}{C}}$ and that $I_n(t)$ eventually reaches a fixed shape shifting toward $+\infty$ like $\frac{1}{C}\ln(t)$.
More precisely $I_n(t)$ behaves asymptotically like $ \mathcal{E}\big(n-n_0(t)\big)$ as $t\to+\infty$, where $n_0(t):=\frac{1}{C}\ln(t)$ and
\begin{equation}\label{eq:defcalE}
	\mathcal{E}(\nu):=\exp\left(-\frac{\gamma}{\beta^*} Be^{-C\nu} - A\nu\right).
\end{equation}
\begin{claim}[Asymptotic behavior]\label{claim:exponential-exponential}
    Let Assumption \ref{as:analytic-example-1} hold and $\mathcal{E}$ be defined by \eqref{eq:defcalE}. Let $n_0(t):=\frac{1}{C}\ln(t) $ and $\mu(t):=\mu_0\dfrac{\ln \big(\ln t\big)}{\ln t}$ for $\mu_0>\max(C, 1)$ and $t>1$. Then there exists a constant $K>0$ such that for all $n\geq \mu(t) n_0(t)$ we have  
	\begin{equation*} 
	    \frac{1}{K} \mathcal{E}\big(n-n_0(t)\big) \leq I_n(t) \leq K \mathcal{E}\big(n-n_0(t)\big).
	\end{equation*}
	and $\displaystyle\sum_{k=0}^{\lfloor \mu(t)n_0(t)\rfloor}I_k(t) \xrightarrow[t\to+\infty]{}0$ {(where $\lfloor\mu(t)n_0(t)\rfloor$ is the biggest integer smaller than $\mu(t)n_0(t)$)}.
\end{claim}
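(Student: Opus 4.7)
The overall strategy parallels the proof of Claim \ref{claim:algebraic-algebraic}: first derive a sharp asymptotic for the auxiliary function $F(\xi)$ defined in \eqref{eq:defF}, then plug the result into the identity \eqref{eq:formula-Sbar} to obtain the asymptotic of $\overline S(t)$, and finally substitute into \eqref{eq:I_n} to read off the shape of $I_n(t)$. The preparatory step, analogous to Lemma \ref{LE-exp}, is to show
\begin{equation*}
    \ln F(\xi) \;=\; -\frac{A}{C}\ln \xi \,+\, \mathcal O(1) \qquad \text{as } \xi \to +\infty.
\end{equation*}
To prove this I would analyse the terms $u_n(\xi) := e^{-An - B\xi e^{-Cn}}$ by introducing their continuous maximiser $n^*(\xi) := \tfrac{1}{C}\ln\!\big(BC\xi/A\big)$. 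After the shift $\nu = n - n^*(\xi)$ each term rewrites as $u_n(\xi) = e^{-An^*(\xi)}\, e^{-A\nu - (A/C) e^{-C\nu}}$, and the profile $\nu \mapsto e^{-A\nu - (A/C)e^{-C\nu}}$ decays exponentially as $\nu \to +\infty$ and doubly-exponentially as $\nu \to -\infty$. Summing over the translated integer lattice therefore yields bounds above and below by positive constants independent of $\xi$, so that $F(\xi) \asymp e^{-An^*(\xi)} \asymp \xi^{-A/C}$.

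Once this is in hand, the identity \eqref{eq:formula-Sbar} combined with \eqref{eq:convergence-mass} gives
\begin{equation*}
    \overline S(t) \;=\; \frac{\gamma}{\beta^*} + \frac{A}{C\beta^*}\,\frac{\ln t}{t} + \mathcal O\!\left(\frac{1}{t}\right),
\end{equation*}
and a direct substitution into \eqref{eq:I_n}, together with the change of index $\nu = n - n_0(t)$ which makes the two $\tfrac{A}{C}\ln t$ contributions cancel exactly, produces
\begin{equation*}
    \ln I_n(t) \;=\; -A\nu - \frac{\gamma B}{\beta^*}\, e^{-C\nu} + \mathcal O(1) + e^{-C\nu}\,\mathcal O\!\left(\frac{\ln t}{t}\right).
\end{equation*}
The first two terms reproduce exactly $\ln \mathcal E(\nu)$. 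Under the hypothesis $n \geq \mu(t)\,n_0(t)$, an elementary manipulation gives $-C\nu \leq \ln t - \mu_0\ln\ln t$, hence $e^{-C\nu} \leq t/(\ln t)^{\mu_0}$; multiplied by the prefactor $\mathcal O(\ln t/t)$ this is of order $(\ln t)^{1-\mu_0} = o(1)$ since $\mu_0 > 1$. This gives the two-sided bound $K^{-1}\mathcal E(n-n_0(t)) \leq I_n(t) \leq K\,\mathcal E(n-n_0(t))$.

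For the second assertion I would exploit that on the range $k \leq \mu(t)n_0(t) \ll n_0(t)$ the map $k \mapsto I_k(t)$ is increasing, because its continuous derivative $-A + BC\, e^{-Ck}\,\overline S(t)\, t$ only vanishes near $k = n_0(t) + \mathcal O(1)$. Hence
\begin{equation*}
    \sum_{k=0}^{\lfloor \mu(t)n_0(t)\rfloor} I_k(t) \;\leq\; \big(\lfloor \mu(t)n_0(t)\rfloor + 1\big)\, I_{\lfloor \mu(t)n_0(t)\rfloor}(t).
\end{equation*}
At the endpoint $K(t) := \lfloor \mu(t)n_0(t)\rfloor$ the expansion above for $\ln I_n(t)$ is dominated by $-\tfrac{\gamma B}{\beta^*}\, t\, e^{-CK(t)} \leq -c\, t/(\ln t)^{\mu_0}$, so $I_{K(t)}(t)$ vanishes faster than any polynomial in $t$ and the right-hand side tends to $0$.

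The principal technical obstacle is the discrete Laplace-type analysis of $F(\xi)$: identifying the peak index $n^*(\xi)$ and controlling the sum simultaneously on its slow exponential tail and its fast doubly-exponential tail is slightly more delicate than in Lemma \ref{LE-exp}. A second care-point is the uniformity in $n$ of the error term $e^{-C\nu}\mathcal O(\ln t/t)$, since $e^{-C\nu}$ may be extremely large in the lower part of the admissible range $\nu \geq (\mu(t)-1)n_0(t)$; this is precisely why the threshold $\mu(t)n_0(t)$ has been calibrated as $(\mu_0/C)\ln\ln t$.
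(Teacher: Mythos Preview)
Your proposal is correct and follows the same overall architecture as the paper (asymptotics of $F$, then of $\overline S(t)$ via \eqref{eq:formula-Sbar}, then of $I_n(t)$), but differs in two technical choices.

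For the asymptotic $F(\xi)\asymp\xi^{-A/C}$, the paper does \emph{not} use a Laplace-type peak analysis. Instead it derives the functional equation $F(\xi)=e^{-B\xi}+e^{-A}F(e^{-C}\xi)$ by expanding $e^{-B\xi e^{-Cn}}$ as a power series, swapping the two sums, and resumming; iterating this relation along the geometric sequence $\xi,e^{C}\xi,e^{2C}\xi,\ldots$ gives $F(e^{nC}\xi)=e^{-nA}\big(F(\xi)+\sum_{k=1}^{n}e^{kA-Be^{kC}\xi}\big)$, from which the power law follows directly. Your discrete-Laplace argument is equally valid and arguably more intuitive; the paper's route is more structural and avoids the (minor) bookkeeping of summing the shifted profile over a non-integer lattice.

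For the tail $\sum_{k\le\lfloor\mu(t)n_0(t)\rfloor}I_k(t)\to0$, the paper obtains the pointwise domination $I_n(t)\le e^{-An+\mathcal O(1)}$ on that range (because the term $-\tfrac{\gamma B}{\beta^*}te^{-Cn}$ dominates $\tfrac{A}{C}\ln t$ there) and concludes by Lebesgue's dominated convergence theorem. Your monotonicity argument, bounding the sum by $(\lfloor\mu(t)n_0(t)\rfloor+1)\cdot I_{\lfloor\mu(t)n_0(t)\rfloor}(t)$, is a clean alternative and gives a sharper decay rate for the sum, at the price of checking the sign of $-A+BCe^{-Ck}\overline S(t)t$ on the relevant range.
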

\begin{figure}[H]
    \centering
    \includegraphics{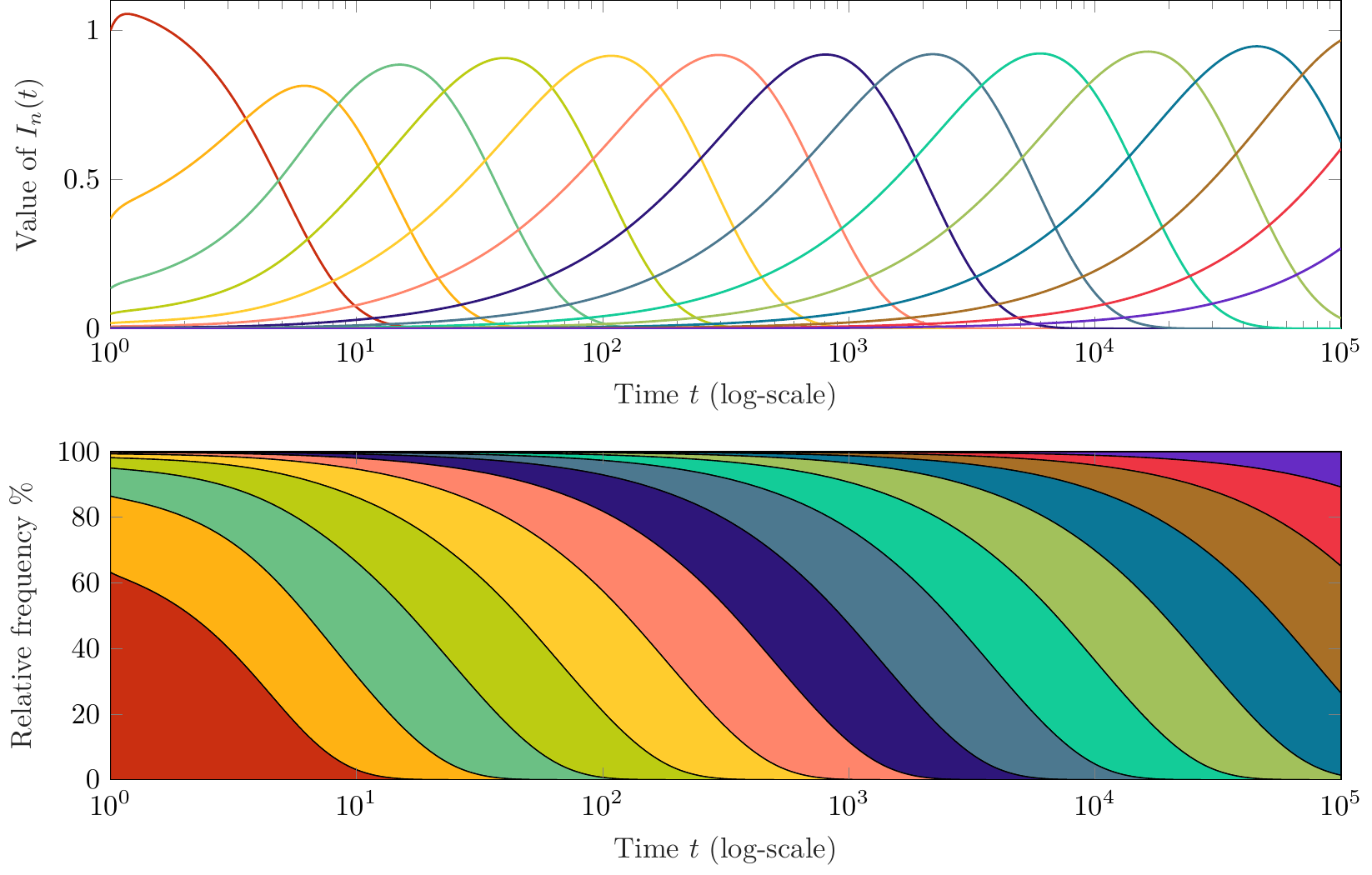}
	\caption{Plots of the solution $I_n(t)$ in the monovalent case 2. {Colors are chosen at random so that each variant has a unique color.} The top figure suggests that each variant has a similar behavior in time. The bottom figure suggests that the diversity of variants remains approximately constant in time (more precisely, the number of non-negligible variants in the population is uniformly greater than a positive constant). \textbf{Top figure:} Value of $I_n(t)$. \textbf{Bottom figure:} Relative frequencies of the variants as a function of time. \textbf{Parameters:} $\Lambda=10$, $\theta=10$, $S_0=1$, $\gamma={2}$, $\beta^*=4$, $A=1$, $B=1$, $C=1$. We used a total of $N=14$ variants for this simulation. For interpretation of the colors in the figure(s), the reader is referred to the online version of this article.}\label{Fig2}
\end{figure}

}
 We begin with the asymptotic expansion of $F$. 
\begin{lemma}\label{claim:asymp-F-example-1}
	Let Assumption \ref{as:analytic-example-1} hold and let $F(\xi)$ be defined by \eqref{eq:defF}. Then
	\begin{equation*}
	    F(\xi)= \xi^{-\frac{A}{C}}\times e^{\mathcal{O}(1)}\text{ as } \xi\to\infty.
	\end{equation*}
\end{lemma}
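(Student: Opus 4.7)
The plan is to locate the peak of the summand $u_n(\xi):=e^{-An-B\xi e^{-Cn}}$ viewed as a function of continuous $n$, and then control $F(\xi)$ by comparison with an integral of Gamma-function type. Let $\varphi(x,\xi):=e^{-Ax-B\xi e^{-Cx}}$ for $x\geq 0$, so that $F(\xi)=\sum_{n=0}^{\infty}\varphi(n,\xi)$. Differentiating in $x$ gives $\partial_x\varphi(x,\xi)=\varphi(x,\xi)\bigl(-A+BC\xi e^{-Cx}\bigr)$, so $\varphi(\cdot,\xi)$ is strictly unimodal on $[0,\infty)$ with a unique maximum at $x^*(\xi):=\tfrac{1}{C}\ln\tfrac{BC\xi}{A}$. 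Plugging in, $B\xi e^{-Cx^*(\xi)}=A/C$, and hence $\varphi(x^*(\xi),\xi)=e^{-Ax^*(\xi)-A/C}=c_0\,\xi^{-A/C}$ for an explicit constant $c_0>0$.

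For the lower bound, I would set $n^*(\xi):=\lfloor x^*(\xi)\rfloor\in\mathbb{N}$ (valid as soon as $\xi$ is large enough) and simply bound $F(\xi)\geq \varphi(n^*(\xi),\xi)$. Since $n^*(\xi)-x^*(\xi)\in[-1,0]$, both $-A(n^*-x^*)$ and $-B\xi(e^{-Cn^*}-e^{-Cx^*})=-\tfrac{A}{C}\bigl(e^{-C(n^*-x^*)}-1\bigr)$ are uniformly bounded in $\xi$. Hence the ratio $\varphi(n^*,\xi)/\varphi(x^*,\xi)$ is bounded below by a positive constant, yielding $F(\xi)\geq c_1\,\xi^{-A/C}$ for some $c_1>0$ and all $\xi$ large enough.

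For the upper bound, I would exploit unimodality by splitting the sum at $x^*$: on $[0,x^*]$ the function $\varphi$ is increasing, so $\varphi(n,\xi)\leq \int_n^{n+1}\varphi(x,\xi)\,\dd x$; on $[x^*,\infty)$ it is decreasing, so $\varphi(n,\xi)\leq\int_{n-1}^{n}\varphi(x,\xi)\,\dd x$. At most two ``boundary'' integers $n\in(x^*-1,x^*+1)$ have to be bounded directly by $\varphi(x^*,\xi)$, so
\[
F(\xi)\leq \int_0^{\infty}\varphi(x,\xi)\,\dd x + 2\,\varphi(x^*(\xi),\xi).
\]
The integral is then evaluated via the substitution $y=B\xi e^{-Cx}$, which gives $\dd x=-\dd y/(Cy)$ and $e^{-Ax}=(y/(B\xi))^{A/C}$; hence
\[
\int_0^{\infty}\varphi(x,\xi)\,\dd x=\frac{(B\xi)^{-A/C}}{C}\int_0^{B\xi}y^{A/C-1}e^{-y}\,\dd y\;\leq\;\frac{\Gamma(A/C)}{C}(B\xi)^{-A/C},
\]
which is also of order $\xi^{-A/C}$. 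Combining with the lower bound yields $c_1\xi^{-A/C}\leq F(\xi)\leq c_2\xi^{-A/C}$ for $\xi$ large, which is precisely $\ln F(\xi)=-(A/C)\ln\xi+\mathcal{O}(1)$, i.e.\ the claim $F(\xi)=\xi^{-A/C}e^{\mathcal{O}(1)}$. The only slightly delicate point is the unimodal integral comparison with its two boundary terms, but this is routine monotonicity; the rest is a direct change of variables together with the convergence $\int_0^{B\xi}y^{A/C-1}e^{-y}\,\dd y\to\Gamma(A/C)$ as $\xi\to\infty$.
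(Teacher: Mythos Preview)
Your proof is correct and takes a genuinely different route from the paper. The paper derives the functional equation $F(\xi)=e^{-B\xi}+e^{-A}F(e^{-C}\xi)$ by expanding $e^{-B\xi e^{-Cn}}$ as a Taylor series, summing the resulting geometric series in $n$, and recombining; it then iterates this relation to obtain $F(e^{nC}\xi)=e^{-nA}\bigl(F(\xi)+\sum_{k=1}^n e^{kA-Be^{kC}\xi}\bigr)$, from which the claim follows by writing any large $X$ as $e^{nC}\xi$ with $\xi\in[1,e^C)$. Your argument instead is a Laplace-type peak analysis: you locate the maximum of the summand, bound $F(\xi)$ below by the single term nearest the peak, and above by the continuous integral (evaluated via the Gamma substitution) plus two boundary terms. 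Your approach is more elementary and robust---it would adapt immediately to other unimodal summands---while the paper's functional-equation trick is elegant but tailored to the specific exponential structure of this example. Both yield the same $\xi^{-A/C}$ order with bounded multiplicative constants.
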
}
\begin{proof}
%	We have:
%	\begin{align*}
%		\sum_{n=0}^{+\infty}I_n(t) &= \sum_{n=0}^{+\infty} I_n^0 e^{t(\beta_n\overline{S}(t)-\gamma_n)} = e^{t(\beta^*\overline{S}(t)-\gamma)}\sum_{n=0}^{+\infty}I_n^0e^{t(\beta_n-\beta^*)\overline{S}(t)}=e^{t(\beta^*\overline{S}(t)-\gamma)}\sum_{n=0}^{+\infty}e^{-t\overline{S}(t) Be^{-Cn} - An}\\
%		&=e^{t(\beta^*\overline{S}(t)-\gamma)}F(t\overline{S}(t)),
%	\end{align*}
%	where $F$ is the function defined by 
%	\begin{equation*}
%		F(\xi) = \sum_{n=0}^{+\infty}e^{-\xi Be^{-Cn} - An}.
%	\end{equation*}
	We claim that $F$ satisfies the functional equation:
	\begin{equation*}
		F(\xi) = e^{-B\xi}+e^{-A}F(e^{-C}\xi).
	\end{equation*}
	Indeed, 
	\begin{align*}
		F(\xi)&=\sum_{n=0}^{+\infty} e^{-B\xi e^{-C n}-A n}=\sum_{n=0}^{+\infty} e^{-A n}\sum_{k=0}^{+\infty} \dfrac{(-B\xi)^k}{k!}\big(e^{-C n}\big)^k=\sum_{k=0}^{+\infty} \dfrac{(-B\xi)^k}{k!}\sum_{n=0}^{+\infty} e^{-(A+kC)n} \\ 
		& = \sum_{k=0}^{+\infty}\dfrac{(-B\xi)^k}{k!}  \cdot\dfrac{1}{1-e^{-A-kC}}=\left(\sum_{k=0}^{+\infty} \dfrac{(-B\xi)^k}{k!}\right) + \sum_{k=0}^{+\infty} \dfrac{(-B\xi)^k}{k!}\left(\dfrac{1}{1-e^{-A-kC}}-1\right)\\
		&=e^{-B\xi}+\sum_{k=0}^{+\infty} \dfrac{(-B\xi)^k}{k!}\dfrac{e^{-A-kC}}{1-e^{-A-kC}} = e^{-B\xi}+e^{-A}\sum_{k=0}^{+\infty} \dfrac{(-Be^{-C}\xi)^k}{k!}=e^{-B\xi}+e^{-A}F\left(e^{-C\xi}\right).
	\end{align*}
	We deduce that
	\begin{align*}
		F(e^{C}\xi)&=e^{-Be^{C}\xi}+e^{-A}F(\xi),\\
		F(e^{2C}\xi)&=e^{-Be^{2C}\xi}+e^{-A}F(e^{C}\xi) = e^{-Be^{2C}\xi}+e^{-A-Be^{C}\xi}+e^{-2A}F(\xi),\\
		F(e^{2C}\xi)&=e^{-Be^{3C}\xi}+e^{-A}F(e^{2C}\xi)
		= e^{-Be^{3C}\xi}+e^{-A-Be^{2C}\xi}+e^{-2A-Be^{C}\xi}+e^{-3A}F(\xi),\\
		&\quad\vdots \\ 
		F(e^{nC}\xi) &= e^{-nA}F(\xi)+ \sum_{k=1}^{n}e^{-(n-k)A -Be^{kC}\xi} = e^{-nA}\left(F(\xi)+\sum_{k=1}^{n}e^{kA-B e^{kC}\xi}\right).
	\end{align*}
	For $\xi\in\left[1, e^{C}\right)$, we let $X=\xi e^{nC}$ so that 
	\begin{equation*}
		n=\left\lfloor \frac{1}{C}\ln(X)\right\rfloor \text{ and } \xi = Xe^{-C\lfloor\ln(X)/C\rfloor}.
	\end{equation*}
	We get:
	\begin{equation*}
		%F(X)=\frac{1}{\exp\left(A\left\lfloor\frac{1}{C}\ln(X)\right\rfloor\right)} \left(F\left(Xe^{-C\lfloor\ln(X)/C\rfloor}\right)+\sum_{k=1}^{\left\lfloor \frac{1}{C}\ln(X)\right\rfloor}e^{kA-B e^{kC}\xi}\right),
		F(X)=\frac{1}{\exp\left(A\left\lfloor\frac{1}{C}\ln(X)\right\rfloor\right)} \left(F\left(\xi\right)+\sum_{k=1}^{n}e^{kA-B e^{kC}\xi}\right),
	\end{equation*}
	and finally, as $X\to+\infty$,
	\begin{equation*}
		F(X)={X^{-\frac{A}{C}}}\times e^{\mathcal{O}(1)}. \qedhere
	\end{equation*}
\end{proof}
{ 
\begin{proof}[Proof of Claim \ref{claim:exponential-exponential}]
	Thanks to Claim \ref{claim:asymp-F-example-1} we have, recalling \eqref{eq:formula-Sbar}
	\begin{align*}
		\overline{S}(t) &= \frac{\gamma}{\beta^*} -\frac{1}{t\beta^*}\ln\big(F(t\overline{S}(t))\big) +\frac{1}{t\beta^*} \ln\left(\sum_{n=0}^{+\infty} I_n(t) \right) \\ 
		&=\frac{\gamma}{\beta^*} +\frac{A}{C\beta^*}\dfrac{\ln(t)}{t} +\mathcal{O}\left(\frac{1}{t}\right). 
	\end{align*}
	Therefore, 
	\begin{align*}
		I_n(t) &= I_n^0\exp\left(\big(\beta^*-B e^{-C n}\big) \left(\frac{\gamma}{\beta^*} + \frac{A}{C\beta^*} \frac{\ln(t)}{t}+\mathcal{O}\left(\frac{1}{t}\right)\right)t-\gamma t \right) \\ 
		&= I_n^0 \exp\left(\frac{A}{C}\ln(t)-B e^{-C n}\left(\frac{\gamma}{\beta^*}+\frac{A}{C\beta^*}\frac{\ln(t)}{t}\right)t+\mathcal{O}(1)\right)\\
		&= \exp\left(-Be^{-C n + \ln(t)} \left(\frac{\gamma}{\beta^*}+\frac{A}{C\beta^*}\frac{\ln(t)}{t}\right) +\frac{A}{C}\ln(t)-A n + \mathcal{O}(1)\right) \\ 
		&=  \exp\left(-\left(1+\frac{A}{C\gamma}\frac{\ln(t)}{t}\right)\frac{\gamma}{\beta^*}B e^{-C\left(n-\frac{1}{C}\ln(t)\right)} - A\left(n-\frac{1}{C}\ln(t)\right)+\mathcal{O}(1)\right) \\ 
		&=  \exp\left(-\left(1+\frac{A}{C\gamma}\frac{\ln(t)}{t}\right)\frac{\gamma}{\beta^*}B e^{-C\left(n-n_0(t)\right)} - A\big(n-n_0(t)\big)+\mathcal{O}(1)\right)  
	\end{align*}
	with $n_0(t)=\frac{1}{C}\ln(t)$. Thus for $(n, t)\in \{(\tilde n, \tilde t)\,:\, \tilde n \geq \mu(\tilde t)n_0(\tilde t)\}$ we have
	\begin{equation*}
	    \dfrac{I_n(t)}{\mathcal{E}\big(n-n_0(t)\big)}=\exp\left(-\frac{A}{C\gamma}\frac{\ln(t)}{t}\frac{\gamma}{\beta^*}B e^{-C\left(n-n_0(t)\right)}+\mathcal{O}(1)\right) = e^{\mathcal{O}(1)}, 
	\end{equation*}
	since 
	\begin{equation*}
	    0\leq \dfrac{\ln(t)}{t}e^{-C(n-n_0(t))}\leq e^{\ln(\ln t) - \ln(t) -(\mu(t)-1)\ln(t)} = e^{(1-\mu_0)\ln(\ln t )}\xrightarrow[t\to+\infty]{}0.
	\end{equation*}
	On the other hand if $n\leq \mu(t)n_0(t)$ then 
	\begin{align*}
	    I_n(t)&=\exp\left[-\frac{\gamma}{\beta^*}Bt e^{-Cn}-\frac{A}{C\beta^*}\ln(t)e^{-Cn} - An+\frac{A}{C}\ln(t)+\mathcal{O}(1) \right] \\ 
	    &\leq \exp\left[-\left(\frac{\gamma}{\beta^*}Bt-\frac{A}{C}\right) e^{-Cn} - An + \frac{A}{C}\left(\ln(t)-e^{-n}\right)+\mathcal{O}(1)\right] \leq e^{-An+\mathcal{O}(1)}, 
	\end{align*}
	because $n\leq \mu(t)n_0(t)=\frac{\mu_0}{C}\ln(\ln t) $ with $\mu_0>C$ implies $e^{-n}\geq \ln(t)^{\frac{2}{C}}\geq \ln(t)$ for $t$ large, whence 
	\begin{equation*}
	    \displaystyle\sum_{n=0}^{\lfloor \mu(t)n_0(t)\rfloor}I_n(t) \xrightarrow[t\to+\infty]{}0
	\end{equation*}
	by Lebesgue's dominated convergence Theorem.
	This proves Claim \ref{claim:exponential-exponential}.
\end{proof}
}

\subsubsection{Monovalent example 3: the Gaussian-exponential case.}

{
\begin{assumption}\label{as:gaussian-exponential}
Assume that the initial data is Gaussian and the convergence of the fitness to its maximum is exponential:
\begin{equation}\label{case1}
	I_n^0=e^{-An^2}\text{ while }\beta^*-\beta_n=Be^{-Cn}\text{ for all $n\geq 0$},
\end{equation}
with $A>0$, $B>0$ and $C>0$ given constants.
\end{assumption}
}
Under this Assumption we can prove that the dynamics of the entire family of variants is actually guided by only three variants at a time at most (and most of the times by only one variant at a time) provided $C$ is sufficiently small. We let $W_0$ be the principal branch of the Lambert-$W$ function, that is to say the smooth real function satisfying $W_0(x)e^{W_0(x)}=x$ for all $x\geq e^{-1}$ and $\lim_{x\to+\infty}W_0(x)=+\infty$; see \textcite{Corless-etal-96} for details. We define:
\begin{equation}\label{eq:defY}
    Y(t) = \frac{1}{C}W_0\left(\dfrac{BC^2}{2A} t\overline{S}(t)\right).
\end{equation}
\begin{claim}\label{claim:gaussian-exponential}
    Let Assumption \ref{as:gaussian-exponential} hold. If $C$ is sufficiently small, then there exists $\delta>0$ small enough such that:
    \begin{equation*}
\sum_{n=0}^\infty I_n(t)\sim 
\begin{cases} 
I_{N(t)-1}(t)+I_{N(t)}(t)\text{ for $t\gg 1$ and $-1/2\leq R(t)\leq -1/2+\delta$},\\
I_{N(t)}\text{ for $t\gg 1$ and $-1/2+\delta \leq R(t)\leq1/2-\delta$},\\
I_{N(t)}(t)+I_{N(t)+1}(t)\text{ for $t\gg 1$ and $1/2-\delta<R(t)\leq 1/2$},
\end{cases}
\end{equation*}
    where $N(t)=\left\lfloor Y(t)+\frac{1}{2}\right\rfloor$, $ R(t) = Y(t)-N(t)$ and $Y(t)$ is defined in \eqref{eq:defY}.
\end{claim}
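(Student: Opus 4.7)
My plan is to recast $I_n(t)$ via \eqref{eq:I_n} as $I_n(t) = e^{\phi_n(t)}$ with
\[
\phi_n(t) := -An^2 - B\,t\overline{S}(t)\,e^{-Cn} + t\bigl(\beta^*\overline{S}(t) - \gamma\bigr),
\]
and to apply a Laplace-type analysis on the exponent. Extending $\phi$ to a smooth function of $n \in [0,+\infty)$, one has $\phi''_n(t) = -2A - BC^2 t\overline{S}(t)\,e^{-Cn} \leq -2A$, so $\phi_\cdot(t)$ is strictly concave. The first-order condition $\phi'_n(t)=0$ rearranges to $Cn e^{Cn} = \tfrac{BC^2}{2A}t\overline{S}(t)$, whose unique positive solution is exactly $Y(t)$ from \eqref{eq:defY}. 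Since $\overline{S}(t) \to \gamma/\beta^*$ by Theorem \ref{thm:discrete}, one has $CY(t) \to +\infty$ as $t\to +\infty$.

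Next, I Taylor-expand $\phi_n(t)$ around $Y(t)$. The critical-point identity $BCt\overline{S}(t)e^{-CY(t)} = 2AY(t)$ yields the clean formulas
\[
\phi''_{Y(t)}(t) = -2A\bigl(1+CY(t)\bigr), \qquad \phi^{(k)}_{Y(t)}(t) = (-1)^{k-1}\,2A\,C^{k-1}\,Y(t) \quad (k\geq 3).
\]
For any fixed $M > 0$, Taylor's formula with integral remainder gives, uniformly in $|n-Y(t)|\leq M$,
\[
\phi_n(t) = \phi_{Y(t)}(t) - A\bigl(1+CY(t)\bigr)\bigl(n-Y(t)\bigr)^2 + E_n(t),
\]
with $E_n(t) = \mathcal{O}\bigl(AC^2 Y(t)\bigr)$, a remainder of \emph{relative} size $\mathcal{O}(C)$ versus the leading quadratic term. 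The smallness of $C$ is precisely what makes this Gaussian approximation sharp enough to compare integer-neighbor masses.

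Setting $n = N(t) + k$ so that $n - Y(t) = k - R(t)$, I deduce
\[
\frac{I_{N(t)+k}(t)}{I_{N(t)}(t)} = \exp\Bigl[-A\bigl(1+CY(t)\bigr)\,k\bigl(k-2R(t)\bigr) + \mathcal{O}\bigl(AC^2 Y(t)\bigr)\Bigr].
\]
Since $|R(t)|\leq 1/2$, for $|k|\geq 2$ one has $k(k-2R(t)) \geq 2$, so the leading term dominates the error (by smallness of $C$) and the ratio tends to $0$. For $k=\pm 1$ the leading exponent $-A(1+CY(t))(1\mp 2R(t))$ either dominates the error and sends the ratio to $0$, or is of the same order as the error and the ratio stays bounded; the threshold is encoded by the small $\delta = \delta(C)$. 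This dichotomy produces exactly the three regimes of the claim.

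To promote these ratio estimates to the $\sim$-asymptotics I must show the cumulative tail $\sum_{|k|\geq 2} I_{N(t)+k}(t)$ is $o(I_{N(t)}(t))$. For this I use the global bound $\phi''_\cdot(t) \leq -2A$: it implies $|\phi'(Y(t)\pm 1)|\geq 2A$, and hence $\phi_n(t) \leq \phi(Y(t)\pm 1) - 2A(|n-Y(t)|-1)$ for $|n-Y(t)|\geq 1$, with the bound extending down to $n=0$ since for $t$ large $Y(t)$ exceeds any fixed index. Summing the geometric series gives $\sum_{|n-Y(t)|\geq 3/2} I_n(t) \lesssim e^{\phi_{Y(t)}(t) - A(1+CY(t))}$, and comparing with $I_{N(t)}(t) \gtrsim e^{\phi_{Y(t)}(t) - A(1+CY(t))/4 + \mathcal{O}(AC^2 Y(t))}$ shows the tail is indeed negligible. \textbf{The main obstacle} is controlling the cubic correction for the nearest neighbors $k=\pm 1$: the smallness of $C$ is precisely what keeps the $\mathcal{O}(AC^2 Y(t))$ error of lower order than the $\mathcal{O}(A(1+CY(t)))$ quadratic term, and it also dictates how small $\delta$ can be chosen in the boundary regimes.
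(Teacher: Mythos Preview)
Your approach is correct and takes a genuinely different route from the paper's.

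The paper never Taylor-expands: it computes each neighbour ratio \emph{exactly} from \eqref{eq:I_n}, for instance
\[
\frac{I_{N(t)-1}(t)}{I_{N(t)}(t)}=\exp\Bigl[2AN(t)\Bigl(1-\tfrac{1}{C}e^{CR(t)}(e^{C}-1)\Bigr)+\mathcal{O}(1)\Bigr],
\]
via the identity $Bt\overline{S}(t)e^{-CN(t)}=\tfrac{2A}{C}Y(t)e^{CR(t)}$. The sign of the bracket flips at the explicit threshold $X_2(C)=\tfrac{1}{C}\ln\tfrac{C}{e^{C}-1}=-\tfrac12+\mathcal{O}(C)$, and an analogous threshold $X_3(C)=\tfrac12+\mathcal{O}(C)$ governs $I_{N+1}/I_N$. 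The tail estimate (Lemma~\ref{lem:Gaussian-C}) is likewise a direct computation of $G(n(\xi)\pm p,\xi)-G(n(\xi),\xi)$ under the explicit smallness condition~\eqref{cond-C}. Your Laplace-type argument trades these explicit thresholds for a cleaner structural picture: the quadratic curvature $-2A(1+CY(t))$ at the maximiser controls everything, and the smallness of $C$ is precisely what keeps the cubic remainder $\mathcal{O}(AC^2Y(t))$ subordinate. The paper's route yields the sharp location of the transition (and the precise form of~\eqref{cond-C}); yours makes the mechanism---a discrete Laplace method with small anharmonicity---transparent, and would adapt more readily to other pairs $(\beta_n,I_n^0)$.

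One notational caution: in your tail bound the symbol $\lesssim$ must not absorb the factor $e^{\mathcal{O}(AC^2Y(t))}$, which is unbounded. Carry that error explicitly on both sides so that the final tail-to-peak ratio reads $\exp\bigl[-\tfrac{3}{4}A(1+CY(t))+\mathcal{O}(AC^2Y(t))\bigr]\to 0$ for $C$ small enough; and record that in the middle regime your comparison forces $\delta\gtrsim C$, consistent with the paper's $X_2(C),X_3(C)=\mp\tfrac12+\mathcal{O}(C)$.
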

\begin{figure}[H]
    \centering
    \includegraphics{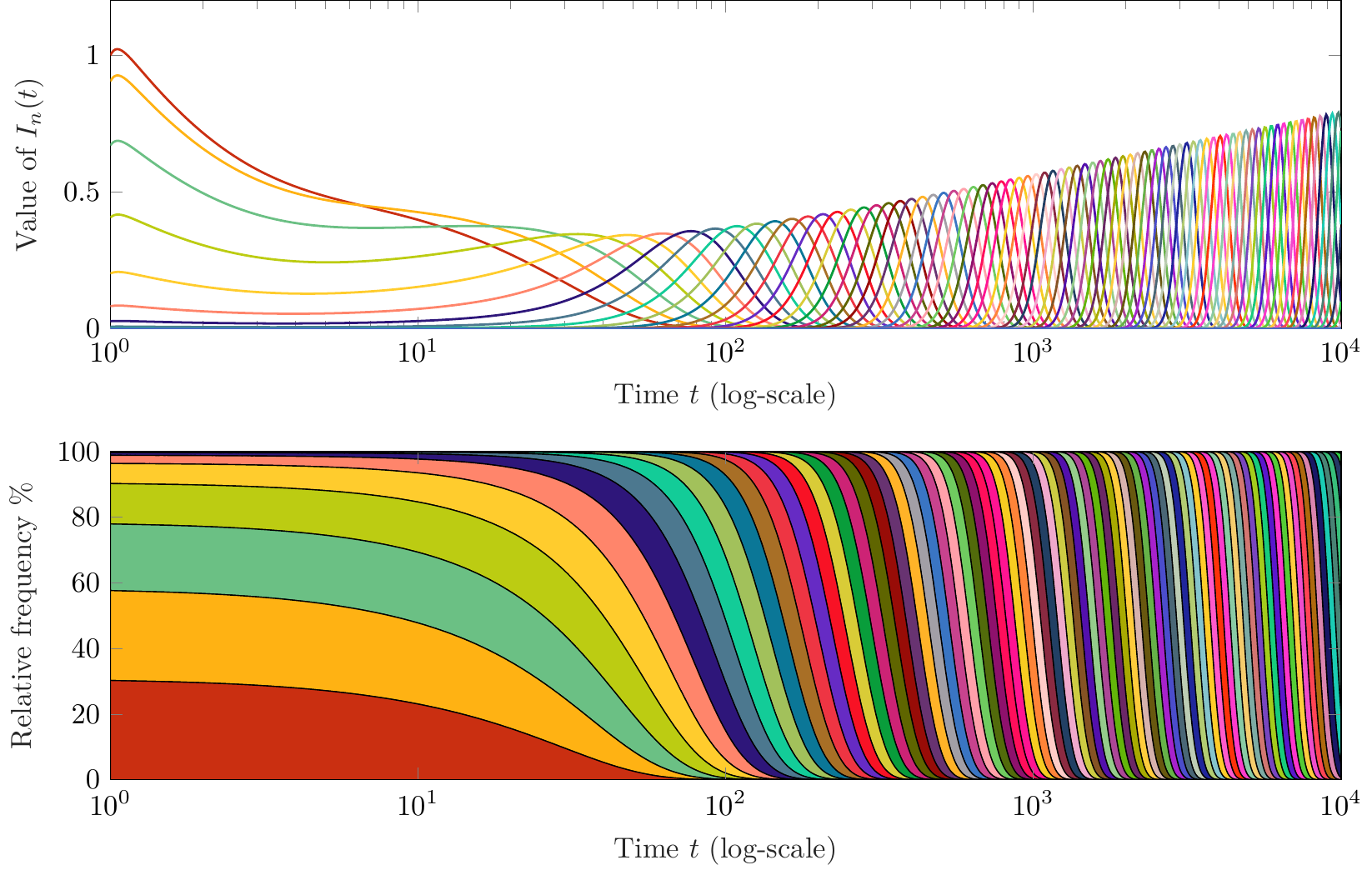}
	\caption{Plots of the solution $I_n(t)$ in the monovalent case 3. {Colors are chosen at random so that each variant has a unique color.} The bottom figure suggests that the diversity of variants diminishes in time since the steepness of the separation between variants is slowly increasing (and our theoretical predictions show that only one variant remains at carefully chosen times). \textbf{Top figure:} Value of $I_n(t)$. \textbf{Bottom figure:} Relative frequencies of the variants as a function of time.  \textbf{Parameters:} $\Lambda=10$, $\theta=10$, $S_0=1$, $\gamma={2}$, $\beta^*=1$, $A=0.1$, $B=1$, $C=0.03$. We used a total of $N=100$ variants for this simulation. For interpretation of the colors in the figure(s), the reader is referred to the online version of this article. }\label{Fig3}
\end{figure}
{
We define for $\xi>0$, $x(\xi)>0$ at the solution of the so-called {transcendental} Lambert 
equation
$$
xe^{{Cx}}=\frac{{B}C}{2A}\xi.
$$
{Remark that $x(\xi)$ can be expressed thanks to the principal branch of the Lambert-W function, $W_0(z)$. More precisely, $x(\xi) = \frac{1}{C}W_0\left(\frac{BC^2}{2A}\xi\right)$. }

{
%Next define for $\xi>0$ the function $F$ by
%\begin{equation}\label{F3}
%F(\xi)=\sum_{n=0}^\infty I_n^0e^{-\alpha_n\xi}=\sum_{n=0}^\infty e^{-An^2-B\xi e^{-Cn}}.
%\end{equation}

We denote by $n(\xi)$ the integer which is the closest of $x(\xi)$, that is
\begin{equation}\label{def-n-r}
    x(\xi)=n(\xi)+r(\xi)\text{ with }n(\xi)=\left\lfloor x(\xi)+\frac{1}{2}\right\rfloor\in \mathbb N\text{ and }r(\xi)\in [-1/2,1/2).
\end{equation}

\begin{lemma}\label{lem:Gaussian-est}
The function $F$ defined in \eqref{eq:defF} satisfies the following asymptotic behavior
\begin{equation*}
    F(e^{-Cr(\xi)}\xi)\sim e^{-An(\xi)^2-B\xi e^{-Cx(\xi)}} {=e^{-An(\xi)^2-\frac{2A}{C}x(\xi)}}\text{ as }\xi\to\infty.
\end{equation*}
\end{lemma}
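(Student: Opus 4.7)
The strategy is a discrete Laplace-type analysis: I claim that the series defining $F\bigl(e^{-Cr(\xi)}\xi\bigr)$ is asymptotically dominated by its $n(\xi)$-th term, which turns out to equal the claimed right-hand side exactly. Writing
\[
F\bigl(e^{-Cr(\xi)}\xi\bigr) = \sum_{k=0}^{+\infty} e^{-\phi_k(\xi)}, \qquad \phi_k(\xi) := A k^2 + B\xi\, e^{-C(k+r(\xi))},
\]
and using that the defining equation of $x(\xi)$ rewrites as $B\xi\, e^{-Cx(\xi)} = \tfrac{2A}{C}\,x(\xi)$, one gets, since $n(\xi) + r(\xi) = x(\xi)$,
\[
\phi_{n(\xi)}(\xi) = A\,n(\xi)^{2} + B\xi\, e^{-Cx(\xi)} = A\,n(\xi)^{2} + \tfrac{2A}{C}\,x(\xi).
\]
The task then reduces to showing that the sum of the remaining terms is $o\bigl(e^{-\phi_{n(\xi)}(\xi)}\bigr)$ as $\xi \to +\infty$.

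For this I would compute, using $n(\xi) = x(\xi) - r(\xi)$, for any $j \in \mathbb{Z}$ with $j \geq -n(\xi)$,
\[
\phi_{n(\xi)+j}(\xi) - \phi_{n(\xi)}(\xi) = \tfrac{2A}{C}\, x(\xi)\, h(Cj) + A j^{2} - 2A j\, r(\xi),
\]
where $h(y) := y + e^{-y} - 1$ is non-negative, strictly convex, vanishes only at $y = 0$, and tends to $+\infty$ as $|y| \to \infty$. The key pointwise bound is that, since $|r(\xi)| \leq 1/2$, one has $A j^{2} - 2A j\, r(\xi) \geq A |j|(|j|-1) \geq 0$ for all $|j| \geq 1$, so that
\[
e^{-(\phi_{n(\xi)+j}(\xi) - \phi_{n(\xi)}(\xi))} \leq \exp\Bigl(-\tfrac{2A}{C}\, x(\xi)\, h(Cj) - A|j|(|j|-1)\Bigr).
\]

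The conclusion then follows from a standard $\varepsilon$-argument. Given $\varepsilon > 0$, since the envelope $e^{-A|j|(|j|-1)}$ is $\xi$-independent and summable in $j$, one can pick $J$ so large that the tail $\sum_{|j| > J} e^{-A|j|(|j|-1)} < \varepsilon/2$. On the finite window $1 \leq |j| \leq J$, each of the finitely many exponentials tends to $0$ as $\xi \to +\infty$ since $x(\xi) \to +\infty$ and $h(Cj) > 0$ for $j \neq 0$, so the sum over this window is smaller than $\varepsilon/2$ for $\xi$ large enough. Combined, this gives $\sum_{j \neq 0}\! e^{-(\phi_{n+j}-\phi_{n})} \to 0$, hence $F\bigl(e^{-Cr(\xi)}\xi\bigr) = e^{-\phi_{n(\xi)}(\xi)}\bigl(1+o(1)\bigr)$, which is the claim.

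The main technical subtlety is the pronounced asymmetry of $h(Cj)$, which grows only linearly as $j \to +\infty$ but exponentially as $j \to -\infty$; this is why the proof must lean on the Gaussian factor $A j^{2}$ to produce a uniformly summable envelope, rather than relying on $h$ alone. It is also worth underlining why the shift $\xi \mapsto e^{-Cr(\xi)}\xi$ is essential: without it, even the $j = 0$ term would not reproduce the target expression exactly, and one would inevitably pick up a multiplicative $e^{\mathcal{O}(1)}$ factor rather than a true asymptotic equivalent, precisely as in Lemma \ref{LE-exp} and Lemma \ref{claim:asymp-F-example-1}.
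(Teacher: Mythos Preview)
Your proof is correct and follows essentially the same discrete Laplace approach as the paper: both show that the $n(\xi)$-th term dominates by bounding $\phi_{n(\xi)+j}-\phi_{n(\xi)}$ from below through the combination of the strictly positive term $\tfrac{2A}{C}x(\xi)h(Cj)$ and the Gaussian remainder $Aj^2-2Ajr(\xi)\geq A|j|(|j|-1)$. The only cosmetic difference is that the paper treats the forward ($j\geq 1$) and backward ($j\leq -1$) sums separately and extracts an explicit exponential rate in $x(\xi)$, whereas you unify both directions through the convex function $h$ and conclude by a dominated-convergence style $\varepsilon$-argument.
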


\begin{proof}
We define the function $G:\R^2\to\R$ by
\begin{equation}\label{funct-G}
G(x,\xi)=Ax^2+B\xi e^{-Cx},\;\forall (x,\xi)\in\R^2.
\end{equation}
Now we first claim that we have:
\begin{equation}\label{est1}
\sum_{n=n(\xi)+1}^\infty e^{-G(n,e^{-Cr(\xi)}\xi)}=o\left(e^{-G(n(\xi),e^{-Cr(\xi)}\xi)}\right)\text{ as }\xi\to \infty.
\end{equation}
To prove this property, observe that for $p\geq 1$ we have
\begin{equation*}
\begin{split}
G(n(\xi)+p,e^{-Cr(\xi)}\xi)-G(n(\xi),e^{-Cr(\xi)}\xi)&=Be^{-C(n(\xi)+p)}e^{-Cr(\xi)}\xi+A(n(\xi)+p)^2-Be^{-Cn(\xi)}e^{-Cr(\xi)}\xi-An(\xi)^2\\
&=B\left(e^{-Cp}-1\right)e^{-Cx(\xi)}\xi+A\left(2n(\xi)p+p^2\right)\\
&=B\left(e^{-Cp}-1\right)\frac{2A}{BC}x(\xi)+A\left(2n(\xi)p+p^2\right)\\
&=x(\xi)\left[-\left(1-e^{-Cp}\right)\frac{2A}{C}+2Ap\right]+A\left(p-r(\xi)\right)^2-Ar(\xi)^2\\
&\geq \frac{2A}{C}x(\xi)\left[-\left(1-e^{-Cp}\right)+Cp\right]+A\left(p-1/2\right)^2-A/4.
\end{split}
\end{equation*}
From the above estimate we obtain
\begin{equation}
\sum_{n=n(\xi)+1}^\infty e^{-G(n,e^{-Cr(\xi)}\xi)+G(n(\xi),e^{-Cr(\xi)}\xi)}\leq e^{-\frac{2A}{C}x(\xi)\left[C-\left(1-e^{-C}\right)\right]}\sum_{p=1}^\infty e^{-A\left(p-1/2\right)^2+A/4}\to 0\text{ as }\xi\to\infty,
\end{equation}
which proves \eqref{est1}.

Next we claim that we have
\begin{equation}\label{est2}
\sum_{n=0}^{n(\xi)-1} e^{-G(n,e^{-Cr(\xi)}\xi)}=o\left(e^{-G(n(\xi),e^{-Cr(\xi)}\xi)}\right)\text{ as }\xi\to \infty.
\end{equation}
To see this, note that for $1\leq p\leq n(\xi)$ we have
\begin{equation*}
\begin{split}
G(n(\xi)-p,e^{-Cr(\xi)}\xi)-G(n(\xi),e^{-Cr(\xi)}\xi)&=Be^{-C(n(\xi)-p)}e^{-Cr(\xi)}\xi+A(n(\xi)-p)^2-Be^{-Cn(\xi)}e^{-Cr(\xi)}\xi-An(\xi)^2\\
&=B(e^{Cp}-1)e^{-Cx(\xi)}\xi+A[(n(\xi)-p)^2-n(\xi)^2]\\
&=B(e^{Cp}-1)e^{-Cx(\xi)}\xi-A[2px(\xi)-p^2-2pr(\xi)]\\
&=\frac{2A}{C}x(\xi)[e^{Cp}-1-Cp]+A(p^2-2pr(\xi))\\
&\geq 2Ax(\xi)[e^{C}-1-C]+A(p^2-p).
\end{split}
\end{equation*}
Therefore we get
\begin{equation}
\sum_{n=0}^{n(\xi)-1} e^{-G(n,e^{-Cr(\xi)}\xi)+G(n(\xi),e^{-Cr(\xi)}\xi)}\leq e^{-\frac{2A}{C}x(\xi)\left[e^C-1-C\right]}\sum_{p=0}^{n(\xi)} e^{-A\left(p^2-p\right)}\to 0\text{ as }\xi\to\infty.
\end{equation}
Coupling \eqref{est1} and \eqref{est2} yields
    \begin{equation*}
F\left(e^{-Cr(\xi)}\xi\right)=e^{-G(n(\xi), e^{-Cr(\xi)}\xi)}(1+o(1))\text{ as }\xi\to\infty.
    \end{equation*}
    That completes the proof.
\end{proof}

When $C$ is sufficiently small we obtain a more refined estimate that is useful to understand the large time behavior of the sequence $\left(I_n(t)\right)_{n\geq 0}$.

\begin{lemma}\label{lem:Gaussian-C}
Fix $C>0$ small enough such that
\begin{equation}\label{cond-C}
e^{-C/2}(e^{2C}-1)-2C>0\text{ and }C-e^{C/2}\left(1-e^{-2C}\right)>0.
\end{equation}
Then the function $F$ defined in \eqref{eq:defF} satisfies
\begin{equation*}
    F(\xi)=\sum_{p=-1,0,1} e^{-A(n(\xi)+p)^2-B\xi e^{-C(n(\xi)+p)}}+o\left(e^{-An(\xi)^2-B\xi e^{-Cn(\xi)}}\right)\text{ as }\xi\to\infty.
\end{equation*}
\end{lemma}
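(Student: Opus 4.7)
The plan is to imitate the proof of Lemma \ref{lem:Gaussian-est}, but performed at the true argument $\xi$ (rather than at the shifted argument $e^{-Cr(\xi)}\xi$) and keeping three neighbors $p=-1,0,1$ in the leading order instead of just one. Recalling $G(x,\xi)$ from \eqref{funct-G} and using the identity $B\xi e^{-Cn(\xi)} = \frac{2A}{C} x(\xi) e^{Cr(\xi)}$ (which follows from $x(\xi)e^{Cx(\xi)} = \frac{BC}{2A}\xi$ together with $n(\xi) = x(\xi)-r(\xi)$), I would first derive the exact identity
\begin{equation*}
G\bigl(n(\xi)+p,\xi\bigr) - G\bigl(n(\xi),\xi\bigr) = \frac{2Ax(\xi)}{C}\Bigl[Cp - e^{Cr(\xi)}\bigl(1-e^{-Cp}\bigr)\Bigr] + A\,p\bigl(p-2r(\xi)\bigr)
\end{equation*}
for every integer $p$ with $n(\xi)+p\geq 0$. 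The task then reduces to showing that the tail $\sum_{|p|\geq 2} e^{-G(n(\xi)+p,\xi)}$ is of order $o\bigl(e^{-G(n(\xi),\xi)}\bigr)$, since the three explicit terms $p\in\{-1,0,1\}$ are precisely those retained in the statement.

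The bracket $h(p):=Cp - e^{Cr(\xi)}(1-e^{-Cp})$ is strictly convex in $p$ (its second derivative is $C^2e^{C(r(\xi)-p)}>0$) with global minimum at $p=r(\xi)\in[-1/2,1/2)$. Hence for integer $p\geq 2$ one has $h(p)\geq h(2) = 2C - e^{Cr(\xi)}(1-e^{-2C})\geq 2C - e^{C/2}(1-e^{-2C})$, which is positive by the second inequality in \eqref{cond-C}. Symmetrically, for integer $p\leq -2$ the bound $h(p)\geq e^{-C/2}(e^{2C}-1)-2C > 0$ comes from the first inequality in \eqref{cond-C}. Combining this positive lower bound on the coefficient of $x(\xi)$ with the elementary estimate $p(p-2r(\xi))\geq (|p|-1/2)^2-1/4$ valid for $|p|\geq 2$ and $r(\xi)\in[-1/2,1/2)$, I obtain
\begin{equation*}
e^{-G(n(\xi)+p,\xi)+G(n(\xi),\xi)} \leq e^{-\kappa\, x(\xi)}\, e^{-A(|p|-1/2)^2 + A/4}
\end{equation*}
for some constant $\kappa>0$ depending only on $A$, $B$, $C$. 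Summing this pointwise bound over $|p|\geq 2$ produces a convergent Gaussian-in-$p$ series times the factor $e^{-\kappa x(\xi)}$; since $x(\xi)\to+\infty$ as $\xi\to+\infty$, this shows that the $|p|\geq 2$ tails of $F(\xi)$ are $o\bigl(e^{-G(n(\xi),\xi)}\bigr)$, and the claimed expansion follows.

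The main obstacle is to ensure that the coefficient of $x(\xi)$ remains uniformly bounded away from zero in the worst cases $r(\xi)=\pm 1/2$; this is exactly the role of the two inequalities in \eqref{cond-C}. Once positivity is secured at $|p|=2$, the convexity of $h$ takes care of all $|p|\geq 3$ automatically, so no extra hypothesis on $C$ is needed beyond \eqref{cond-C}. A minor technical point is that the left tail is truncated at $p=-n(\xi)$ because of the constraint $n\geq 0$ in the defining sum of $F$, but this truncation can only improve the estimate and does not affect the argument.
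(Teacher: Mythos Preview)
Your proof is correct and follows essentially the same approach as the paper's. Both arguments reduce to the same exact identity for $G(n(\xi)+p,\xi)-G(n(\xi),\xi)$, bound the bracket at the worst values $r(\xi)=\pm 1/2$ to invoke \eqref{cond-C}, and then control the Gaussian-in-$p$ tail; the only cosmetic difference is that you treat both tails at once via convexity of $h$, whereas the paper splits into $p\geq 2$ and $p\leq -2$ and checks monotonicity of the bracket separately on each side.
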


\begin{proof}
Recall the definition of the function $G$ in \eqref{funct-G}, let us show that we have:
\begin{equation}\label{est1}
\sum_{p=n(\xi)+2}^\infty e^{-G(p,\xi)}=o\left(e^{-G(n(\xi),\xi)}\right)\text{ as }\xi\to \infty.
\end{equation}
To prove this property, observe that for $p\geq 2$ we have
\begin{equation*}
\begin{split}
G(n(\xi)+p,\xi)-G(n(\xi),\xi)&=Be^{-C(n(\xi)+p)}\xi+A(n(\xi)+p)^2-Be^{-Cn(\xi)}\xi-An(\xi)^2\\
&=B\left(e^{-Cp}-1\right)e^{-Cx(\xi)}e^{Cr(\xi)}\xi+A\left(2n(\xi)p+p^2\right)\\
&=e^{Cr(\xi)}B\left(e^{-Cp}-1\right)\frac{2A}{BC}x(\xi)+A\left(2n(\xi)p+p^2\right)\\
&=x(\xi)\left[-e^{Cr(\xi)}\left(1-e^{-Cp}\right)\frac{2A}{C}+2Ap\right]+A\left(p-r(\xi)\right)^2-Ar(\xi)^2\\
&\geq \frac{2A}{C}x(\xi)\left[-e^{C/2}\left(1-e^{-Cp}\right)+Cp\right]+A\left(p-1/2\right)^2-A/4
\end{split}
\end{equation*}
Now the map $p\mapsto -e^{C/2}\left(1-e^{-Cp}\right)+Cp$ is increasing for $p\geq 1$ so that
%From the above estimate we obtain
\begin{equation}
\sum_{n=n(\xi)+2}^\infty e^{-G(n,\xi)+G(n(\xi),\xi)}\leq e^{-\frac{2A}{C}x(\xi)\left[2C-e^{C/2}\left(1-e^{-2C}\right)\right]}\sum_{p=1}^\infty e^{-A\left(p-1/2\right)^2+A/4}.
\end{equation}
    Finally thanks to \eqref{cond-C}  we have
$$
2C-e^{C/2}\left(1-e^{-2C}\right)>0 \text{ for $C>0$ small enough} 
$$
so that 
\begin{equation}
\sum_{p=n(\xi)+2}^\infty e^{-G(p,\xi)+G(n(\xi),\xi)}\leq e^{-\frac{2A}{C}x(\xi)\left[2C-e^{C/2}\left(1-e^{-2C}\right)\right]}\sum_{p=1}^\infty e^{-A\left(p-1/2\right)^2+A/4}\to 0\text{ as }\xi\to\infty,
\end{equation}
that is
\begin{equation}\label{esti1-bis}
     \sum_{p=n(\xi)+2}^\infty e^{-G(p,\xi)}=o\left(e^{-G(n(\xi),\xi)}\right)\text{ as }\xi\to\infty.
\end{equation}

Next we claim that we have
\begin{equation}\label{est2-bis}
\sum_{p=0}^{n(\xi)-2} e^{-G(p,\xi)}=o\left(e^{-G(n(\xi),\xi)}\right)\text{ as }\xi\to \infty.
\end{equation}
To see this, note that for $2\leq p\leq n(\xi)$ we have
\begin{equation*}
\begin{split}
G(n(\xi)-p,\xi)-G(n(\xi),\xi)&=Be^{-C(n(\xi)-p)}\xi+A(n(\xi)-p)^2-Be^{-Cn(\xi)}\xi-An(\xi)^2\\
&=B(e^{Cp}-1)e^{-Cx(\xi)}e^{Cr(\xi)}\xi+A[(n(\xi)-p)^2-n(\xi)^2]\\
&=B(e^{Cp}-1)e^{-Cx(\xi)}e^{Cr(\xi)}\xi-A[2px(\xi)-p^2-2pr(\xi)]\\
&=\frac{2A}{C}x(\xi)[e^{Cr(\xi)}e^{Cp}-1-Cp]+A(p^2-2pr(\xi))\\
&=\frac{2A}{C}x(\xi)[e^{-C/2}e^{Cp}-1-Cp]+A(p^2-2pr(\xi))\\
&\geq 2Ax(\xi)[e^{-C/2}(e^{C}-1)-C]+A(p^2-p).
\end{split}
\end{equation*}
Now the map $p\mapsto e^{-C/2}e^{Cp}-1-Cp$ is increasing for $p\geq 1$ so that
\begin{equation*}
\begin{split}
G(n(\xi)-p,\xi)-G(n(\xi),\xi)&\geq 2Ax(\xi)[e^{-C/2}(e^{2C}-1)-2C]+A(p^2-p).
\end{split}
\end{equation*}
%Choose $C>0$ small enough such that
    Thanks to \eqref{cond-C} we have
$$
e^{-C/2}(e^{2C}-1)-2C>0,
$$
so that we obtain
\begin{equation*}
\sum_{p=0}^{n(\xi)-2} e^{-G(p,\xi)+G(n(\xi),\xi)}\to 0\text{ as }\xi\to\infty,
\end{equation*}
and the result follows.
\end{proof}

Now let us prove that when $C$ is small enough, when the time becomes large, at most three variants can survive at the same time.
Our precise lemma reads as follows.
\begin{corollary}
Fix $C$ satisfying \eqref{cond-C} and let Assumption \ref{case1} be satisfied. 
Define for $t\gg 1$, $N(t)$ and $R(t)$ the closest integer part and the fractional part of $Y(t)$ the solution of 
$$
Y(t) = \frac{1}{C}W_0\left(\frac{BC^2}{2A}t\overline S(t)\right).
$$
Then the following holds true:
\begin{equation*}
    \sum_{p=0}^{N(t)-2} I_p(t)+\sum_{p=N(t)+2}^\infty I_p(t)=o\left(I_{N(t)}(t)\right)\text{ as }t\to\infty.
\end{equation*}
\end{corollary}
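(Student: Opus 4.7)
My plan is to reduce the claim to tail estimates already established inside the proof of Lemma~\ref{lem:Gaussian-C}. Under Assumption~\ref{case1}, the explicit formula \eqref{eq:I_n} gives
\begin{equation*}
	I_n(t) = I_n^0\, e^{(\beta_n\overline{S}(t)-\gamma)t} = e^{(\beta^*\overline{S}(t)-\gamma)t}\,\exp\bigl(-G(n, t\overline{S}(t))\bigr),
\end{equation*}
with $G(x,\xi)=Ax^2+B\xi e^{-Cx}$ as in \eqref{funct-G}, since $I_n^0=e^{-An^2}$ and $\beta_n=\beta^*-Be^{-Cn}$. The crucial point is that the prefactor $e^{(\beta^*\overline{S}(t)-\gamma)t}$ is independent of $n$, so ratios of components depend only on $\xi(t):=t\overline{S}(t)$:
\begin{equation*}
	\frac{I_p(t)}{I_{N(t)}(t)} = \exp\bigl(G(N(t),\xi(t))-G(p,\xi(t))\bigr).
\end{equation*}

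Next I would identify the indices. By definition \eqref{eq:defY} we have $Y(t)=x(\xi(t))$ where $x(\xi)$ is the Lambert solution of $xe^{Cx}=\tfrac{BC}{2A}\xi$, so in the notation \eqref{def-n-r} one has $N(t)=n(\xi(t))$ and $R(t)=r(\xi(t))$. Moreover, since all $\beta_n<\beta^*$ strictly, Theorem~\ref{thm:discrete}\ref{item:persistence-divergence} applies and yields $\overline{S}(t)\to\gamma/\beta^*>0$, whence $\xi(t)=t\overline{S}(t)\to+\infty$ as $t\to+\infty$.

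I would then conclude by dividing the target sum by $I_{N(t)}(t)$: using the ratio formula above, it becomes
\begin{equation*}
	\sum_{p=0}^{n(\xi(t))-2} e^{-G(p,\xi(t))+G(n(\xi(t)),\xi(t))} + \sum_{p=n(\xi(t))+2}^{\infty} e^{-G(p,\xi(t))+G(n(\xi(t)),\xi(t))},
\end{equation*}
which is exactly the combination of the two tail sums controlled by the bounds \eqref{esti1-bis} and \eqref{est2-bis} established inside the proof of Lemma~\ref{lem:Gaussian-C} under the smallness assumption \eqref{cond-C} on $C$. Both tend to $0$ as $\xi\to+\infty$, and the result follows. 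The only (minor) obstacle is bookkeeping: those inequalities are embedded in the proof of the previous lemma rather than stated as standalone results, so some care is required to reuse them as ratio estimates for the sequence $(I_n(t))_{n\geq 0}$ rather than as statements about $F(\xi)$.
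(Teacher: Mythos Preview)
Your proof is correct and follows essentially the same approach as the paper's own proof: write the ratio $I_p(t)/I_{N(t)}(t)=e^{-G(p,\xi(t))+G(N(t),\xi(t))}$ with $\xi(t)=t\overline{S}(t)$, identify $N(t)=n(\xi(t))$, and invoke the two tail estimates \eqref{esti1-bis} and \eqref{est2-bis} from the proof of Lemma~\ref{lem:Gaussian-C}. The paper's argument is the same but compressed into two lines; your additional remarks (the explicit factorization via \eqref{eq:I_n}, and the justification that $\xi(t)\to+\infty$) are correct elaborations rather than a different route.
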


Note that using the notations introduced above (see \eqref{def-n-r}) we have
$$
N(t)=n\left(t\overline S(t)\right)\text{ and }R(t)=r\left(t\overline S(t)\right).
$$
\begin{remark}
The above corollary means that, at least for $C$ small enough, at most three variants can simultaneously survive in the large time, the variants $N(t)-1$, $N(t)$ and $N(t)+1$.
\end{remark}

\begin{proof}
The proof is a direct consequence of on estimates \eqref{esti1-bis} and \eqref{est2-bis}.
Indeed we have for all $n\geq 0$ and $t>0$
$$
\frac{I_n(t)}{I_{N(t)}(t)}=e^{-A(n^2-N(t)^2)-B(e^{-Cn}-e^{-CN(t)}))t\overline{S}(t)}=e^{-G(n, t\overline{S}(t))+G(N(t),t\overline{S}(t))} ,
$$
and the result follows from these two estimates.
\end{proof}

\begin{proof}[Proof of Claim \ref{claim:gaussian-exponential}.]
Now let us show that most of the time, only one variants can survive.
To see this, using the same notations as above, note that we have
\begin{align*}
\frac{I_{N-1}(t)}{I_{N(t)+1}(t)}&=e^{-A((N(t)-1)^2-(N(t)+1)^2)-B(e^{-C(N(t)-1)}-e^{-C(N(t)+1)}))t\overline{S}(t)}\\
&= e^{-A(-4N(t)-Be^{-CN(t)-CR(t)}e^{CR(t)}(e^{C}-e^{-C}))t\overline{S}(t)}\\
&=e^{4AN(t)-2A/C Y(t)e^{CR(t)}(e^{C}-e^{-C}))}\\
&=e^{2AN(t)\left(2-1/C e^{CR(t)}(e^{C}-e^{-C}))\right)+\mathcal O(1)},
\end{align*}
while 
\begin{align*}
\frac{I_{N-1}(t)}{I_{N(t)}(t)}&=e^{-A((N(t)-1)^2-N(t)^2)-B(e^{-C(N(t)-1)}-e^{-CN(t)}))t\overline{S}(t)}\\
&=e^{2AN(t)\left(1-1/C e^{CR(t)}(e^{C}-1))\right)+\mathcal O(1)},
\end{align*}
and
\begin{align*}
\frac{I_{N+1}(t)}{I_{N(t)}(t)}&=e^{-A((N(t)+1)^2-N(t)^2)-B(e^{-C(N(t)+1)}-e^{-CN(t)}))t\overline{S}(t)}\\
&=e^{-2AN(t)\left(1+1/C e^{CR(t)}(e^{-C}-1))\right)+\mathcal O(1)}.
\end{align*}
As a consequence, setting
$$
X_1(C)=\frac{1}{C}\ln \frac{C}{\sinh(C)},\;X_2(C)=\frac{1}{C}\ln \frac{C}{e^C-1}\text{ and } X_3(C)=-\frac{1}{C} \ln \frac{1-e^{-C}}{C},
$$
for all $\delta>0$, we have
\begin{equation}\label{lim1}
\lim_{\substack{t\to\infty\\ X_1(C)+\delta\leq R(t)}} \frac{I_{N-1}(t)}{I_{N(t)+1}(t)}=0,
\end{equation}
\begin{equation}\label{lim2}
\lim_{\substack{t\to\infty\\ R(t)\leq X_1(C)-\delta}} \frac{I_{N+1}(t)}{I_{N(t)-1}(t)}=0,
\end{equation}
and
\begin{equation}\label{lim3}
\lim_{\substack{t\to\infty\\ X_2(C)+\delta \leq R(t)}} \frac{I_{N-1}(t)}{I_{N(t)}(t)}=0\text{ and }\lim_{\substack{t\to\infty\\ X_3(C)-\delta \geq R(t)}} \frac{I_{N+1}(t)}{I_{N(t)}(t)}=0.
\end{equation}
To understand the meaning of the above limits, observe that
$$
X_1(C)=-C/6+\mathcal O(C^2),\;\; X_2(C)=-\frac{1}{2}+\mathcal O(C)\text{ and }X_3(C)=\frac{1}{2}+\mathcal O(C)\text{ as }C\to 0.
$$
As a consequence, using \eqref{lim3}, if $C$ is sufficiently small then for some $\delta>0$ small enough we have 
\begin{equation*}
\begin{split}
&I_{N(t)-1}(t)=o\left(I_{N(t)}(t)\right)\text{ for $t\gg 1$ and }-\frac{1}{2}+\delta\leq R(t)\leq \frac{1}{2},\\
&I_{N(t)+1}(t)=o\left(I_{N(t)}(t)\right)\text{ for $t\gg 1$ and }-\frac{1}{2}<R(t)\leq \frac{1}{2}-\delta.
\end{split}
\end{equation*}
This means that most of the time (when $R(t)$ is close to $-1/2$, only the variant $N(t)$ and $N(t)-1$ can survive.
Using \eqref{lim2}, if $C$ is sufficiently small then then for some $\delta>0$ small enough we have
$$
I_{N(t)+1}(t)=o\left(I_{N(t)-1}(t)\right)\text{ for $t\gg 1$ and }R(t)\in \left[-\frac{1}{2}+\delta,-\delta\right].
$$
For $C>0$ is small enough, there exists $\delta>0$ small enough such that the following picture for the survival of the variants holds
\begin{equation*}
\sum_{n=0}^\infty I_n(t)\sim 
\begin{cases} 
I_{N(t)-1}(t)+I_{N(t)}(t)\text{ for $t\gg 1$ and $-1/2\leq R(t)\leq -1/2+\delta$},\\
I_{N(t)}\text{ for $t\gg 1$ and $-1/2+\delta \leq R(t)\leq1/2-\delta$},\\
I_{N(t)}(t)+I_{N(t)+1}(t)\text{ for $t\gg 1$ and $1/2-\delta<R(t)\leq 1/2$}.
\end{cases}
\end{equation*}
This proves Claim \ref{claim:gaussian-exponential}.
\end{proof}

}

\subsection{Replacement dynamics 2: bivalent $\gamma_n$}
\label{sec:bivalent-gamma}
{
We place ourselves in the case when the sequence $\gamma_n$ takes only two values, an more precisely we assume that there exist two constants $0<\gamma_1<\gamma_2$ such that $\gamma_{2n+1}\equiv \gamma_1>0$ and $\gamma_{2n}\equiv \gamma_2$ for all $n\in\mathbb{N}$.
Our goal is to give examples of possible behaviors when Assumption \ref{as:disintegration} is not satisfied; in particular, we will not assume that the total mass converges. 

Recalling that $\alpha^*:=\sup_{n\in\mathbb{N}}\frac{\beta_n}{\gamma_n}$, we can write the total mass as  
\begin{align}
	\nonumber \sum_{n=0}^{+\infty}I_n(t) &= \sum_{n=0}^{+\infty}I_n^0e^{\beta_n \overline{S}(s)-\gamma_n t} = \sum_{n=0}^{+\infty} I_n^0e^{\left(\frac{\beta_n}{\gamma_n}-\alpha^*\right)\gamma_n t\overline{S}(t) + \alpha^*(\gamma_n-\gamma^*)t\overline{S}(t)+\alpha^*\gamma^*t\overline{S}(t) - (\gamma_n-\gamma^*)t-\gamma^*t}\\
	\nonumber&=e^{\gamma^*t(\alpha^*\overline{S}(t)-1)}\sum_{n=0}^{+\infty} I_n^0 e^{\gamma_n\left(\frac{\beta_n}{\gamma_n}-\alpha^*\right)t\overline{S}(t)+(\gamma_n-\gamma^*)(\alpha^*\overline{S}(t)-1)t}\\
	\nonumber&=e^{\gamma^*t(\alpha^*\overline{S}(t)-1)}\sum_{n=0}^{+\infty} I_n^0 e^{-\gamma_n\left(\alpha^*-\frac{\beta_n}{\gamma_n}\right)t\overline{S}(t)-\alpha^*(\gamma^*-\gamma_n)\left(\overline{S}(t)-\frac{1}{\alpha^*}\right)t} \\ 
	&=e^{\gamma_1t(\alpha^*\overline{S}(t)-1)} \mathcal{F}\big(t, \overline{S}(t)\big), \label{eq:bivalent-somme}
\end{align} 
where 
\begin{equation}\label{eq:bivalentcalF}
    \mathcal{F}\big(t, \overline{S}(t)\big):=\sum_{n=0}^{+\infty} I_n^0 e^{-\gamma_n\left(\alpha^*-\frac{\beta_n}{\gamma_n}\right)t\overline{S}(t)-\alpha^*(\gamma^*-\gamma_n)\left(\overline{S}(t)-\frac{1}{\alpha^*}\right)t}. 
\end{equation}
Since $\gamma_{2n+1}=\gamma_1$ and $\gamma_{2n}=\gamma_2$, we set
\begin{equation*}
    \beta^1_{n}:= \beta_{2n+1}, \quad I^{0, 1}_n:= I^0_{2n+1}, \quad \beta^{2}_n:=\beta_{2n},\quad  \text{ and } I^{0,2}_{2n}:=I^0_{2n}, 
\end{equation*}
so we can rewrite \eqref{eq:bivalentcalF} as 
\begin{align}
    \mathcal{F}\big(t, \overline{S}(t)\big) & = \sum_{n=0}^{+\infty} I_n^{0,1} e^{-\gamma_1\left(\alpha^*-\frac{\beta^1_n}{\gamma_1}\right)t\overline{S}(t)} + \sum_{n=0}^{+\infty} I_n^{0,2} e^{-\gamma_2\left(\alpha^*-\frac{\beta^2_n}{\gamma_2}\right)t\overline{S}(t)-(\gamma_1-\gamma_2)\left(\alpha^*\overline{S}(t)-1\right)t}\nonumber\\
    &=F_1\big(t\overline{S}(t)\big)+ e^{-(\gamma_1-\gamma_2)(\alpha^*\overline{S}(t)-1)t}F_2\big(t\overline{S}(t)\big),\label{eq:equationcalF1}
\end{align}
where 
\begin{equation}\label{eq:exbivalentF1F2}
    F_1(\xi):= \sum_{n=0}^{+\infty} I_n^{0,1} e^{-\gamma_1\left(\alpha^*-\frac{\beta^1_n}{\gamma_1}\right)\xi} 
    \text{ and } F_2(\xi):=\sum_{n=0}^{+\infty} I_n^{0,2} e^{-\gamma_2\left(\alpha^*-\frac{\beta^2_n}{\gamma_2}\right)\xi}.
\end{equation}

To go a bit further, we notice that taking the logarithm of  \eqref{eq:bivalent-somme} leads to
\begin{equation*}
    \gamma_1\big(\alpha^*\overline{S}(t)-1\big)t = -\ln \mathcal{F}\big(t, \overline{S}(t)\big) + \ln\mathcal{I}(t), 
\end{equation*}
so \eqref{eq:equationcalF1} becomes
\begin{equation}\label{eq:bivalent-implicit-F}
    \mathcal{F}\big(t, \overline{S}(t)\big) = F_1\big(t\overline{S}(t)\big) + \left(\dfrac{\mathcal{F}(t, \overline{S}(t)\big)}{\mathcal{I}(t)}\right)^{\frac{\gamma_1-\gamma_2}{\gamma_1}} F_2\big(t\overline{S}(t)\big).
\end{equation}
\textbf{In the sequel we will assume $\gamma_1 = 2\gamma_2$.} In that case, \eqref{eq:bivalent-implicit-F} is a second-order polynomial equation in $\sqrt{\mathcal{F}\big(t, \overline{S}(t)\big)}$ which can be inverted to give the following expression of $\mathcal{F}$:
\begin{equation}\label{eq:bivalent-explicit-F}
	\mathcal{F}\big(t, \overline{S}(t)\big) =\frac{1}{4\mathcal{I}(t)} \left(F_2\big(t\overline{S}(t)\big) + \sqrt{F_2\big(t\overline{S}(t)\big)^2 + 4F_1\big(t\overline{S}(t)\big)\mathcal{I}(t)}\right)^2.
\end{equation}
}

\subsubsection{Bivalent example 1: unexpected selection}
{First let us explain the title of the subsection. Given Proposition \ref{prop:convergence-mass}, we can prove that the phenotype that is eventually selected by competition (after the selection for fitness) is the one that maximizes $\gamma_n$ (in our case, $\gamma_1$). In this section we will prove that, for some carefully chosen initial data, selection can make all pathogen expressing the $\gamma_1$ phenotype disappear.  This strikes us as an unexpected result. We work under the following assumption. 
{
\begin{assumption}\label{as:unexpected-selection}
    We assume that $\gamma_1 = 2\gamma_2$,  $\beta_n^1=\gamma_1\alpha^* - B_1e^{-C_1n}$, $I_n^{0,1}=e^{-A_1n^2}$, $\beta_n^2 = \gamma_2\alpha^* - B_2e^{-C_2n}$ and $I^{0,2}_n= e^{-A_2 n}$.
\end{assumption}
}
Our claim is as follows. 
\begin{claim}\label{claim:unexpected-selection}
    Under Assumption \ref{as:unexpected-selection}, we have 
    \begin{equation*}
	\lim_{t\to+\infty}\sum_{n=0}^{+\infty} I_n^1(t) = 0 \text{ and } \lim_{t\to+\infty} \sum_{n=0}^{+\infty}I_n^2(t)=\dfrac{\theta}{\gamma_2\alpha^*}\big(\mathcal{R}_0-1\big) >0, 
    \end{equation*}
    wherein $I_n^1(t) = I_{2n+1}(t) $ and $I^2_{n}(t) = I_{2n}(t)$.
\end{claim}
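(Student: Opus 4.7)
The key idea is to exploit the sharp contrast between the Gaussian decay of $I_n^{0,1}$ and the purely exponential decay of $I_n^{0,2}$: in the explicit inversion formula \eqref{eq:bivalent-explicit-F}, this will make $F_1$ asymptotically negligible in front of $F_2$, force the $\gamma_1$-variants to extinction, and lock the total mass onto the $\gamma_2$-variants.

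The first step is to obtain asymptotic expansions of the two auxiliary functions $F_1$ and $F_2$ defined by \eqref{eq:exbivalentF1F2}, reusing the analysis of the monovalent cases. Since $F_2$ is of exponential--exponential type, Lemma \ref{claim:asymp-F-example-1} (applied with parameters $(A_2, B_2, C_2)$) yields $F_2(\xi) = \xi^{-A_2/C_2}e^{\mathcal{O}(1)}$ as $\xi \to +\infty$. Since $F_1$ is of Gaussian--exponential type, Lemma \ref{lem:Gaussian-est} applied with $(A_1, B_1, C_1)$, combined with the asymptotics $n(\xi), x(\xi) \sim \tfrac{1}{C_1}\ln\xi$, yields the super-polynomial decay
\begin{equation*}
    \ln F_1(\xi) = -\frac{A_1}{C_1^2}(\ln\xi)^2\bigl(1 + o(1)\bigr) \text{ as } \xi \to +\infty.
\end{equation*}

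Next I inject these expansions into \eqref{eq:bivalent-explicit-F}. By Theorem \ref{thm:discrete}, $\overline{S}(t) \to 1/\alpha^*$ and $\liminf \mathcal{I}(t) > 0$, while $\mathcal{I}(t)$ is also bounded above by the usual a priori estimate on \eqref{eq:main} obtained from summing the $S$- and $I_n$-equations. Thus $t\overline{S}(t) \to +\infty$ and $F_1\bigl(t\overline{S}(t)\bigr)\mathcal{I}(t) = o\bigl(F_2\bigl(t\overline{S}(t)\bigr)^2\bigr)$, so that the discriminant in \eqref{eq:bivalent-explicit-F} collapses and
\begin{equation*}
    \mathcal{F}\bigl(t, \overline{S}(t)\bigr) = \frac{F_2\bigl(t\overline{S}(t)\bigr)^2}{\mathcal{I}(t)}\bigl(1 + o(1)\bigr).
\end{equation*}
Combined with $\mathcal{I}(t) = e^{\gamma_1 t(\alpha^*\overline{S}(t)-1)}\mathcal{F}(t,\overline{S}(t))$ and the relation $\gamma_1 = 2\gamma_2$, taking positive square roots yields
\begin{equation*}
    \mathcal{I}(t) = e^{\gamma_2 t(\alpha^*\overline{S}(t)-1)} F_2\bigl(t\overline{S}(t)\bigr)\bigl(1 + o(1)\bigr) = \sum_{n \geq 0} I_n^2(t)\bigl(1 + o(1)\bigr),
\end{equation*}
so that $\sum_n I_n^1(t) = o(\mathcal{I}(t))$. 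Taking logarithms in this equality and using $\mathcal{I}(t) \asymp 1$ together with the expansion of $F_2$ bootstraps to the sharp scale $\alpha^*\overline{S}(t) - 1 = \tfrac{A_2}{\gamma_2 C_2}\cdot\tfrac{\ln t}{t} + \mathcal{O}(1/t)$. Substituting this into $\sum_n I_n^1(t) = e^{\gamma_1 t(\alpha^*\overline{S}(t)-1)}F_1\bigl(t\overline{S}(t)\bigr)$ produces
\begin{equation*}
    \sum_{n \geq 0} I_n^1(t) \leq t^{2A_2/C_2}\,e^{-\frac{A_1}{C_1^2}(\ln t)^2(1 + o(1))} \xrightarrow[t \to +\infty]{} 0,
\end{equation*}
the super-polynomial decay of $F_1$ overwhelming the polynomial prefactor.

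Finally, the limit of $\sum_n I_n^2(t)$ is identified by passing to the limit in the $S$-equation: since $S'(t) \to 0$ and $S(t) \to 1/\alpha^*$, one has $\sum_n \beta_n I_n(t) \to \alpha^*\Lambda - \theta$. The $\gamma_1$-contribution to this sum vanishes because $(\beta_n^1)$ is bounded and $\sum_n I_n^1(t) \to 0$. For the $\gamma_2$-contribution, $|\beta_n^2 - \gamma_2\alpha^*| = B_2 e^{-C_2 n}$, so splitting the sum at a cutoff $N$, applying \eqref{eq:divergencecase-extinction} to the finite prefix and using the uniform boundedness of $\mathcal{I}(t)$ on the tail, then sending $t \to +\infty$ and $N \to +\infty$ in this order, gives $\sum_n \beta_n^2 I_n^2(t) = \gamma_2 \alpha^* \sum_n I_n^2(t) + o(1)$. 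Collecting, $\sum_n I_n^2(t) \to \frac{\alpha^*\Lambda - \theta}{\gamma_2 \alpha^*} = \frac{\theta}{\gamma_2\alpha^*}(\mathcal{R}_0 - 1)$, which is the announced limit. The main technical obstacle lies in the second step above, where one has to bootstrap self-consistent control of both $\mathcal{I}(t)$ and the scale of $\overline{S}(t) - 1/\alpha^*$ starting only from the qualitative statement of Theorem \ref{thm:discrete}.
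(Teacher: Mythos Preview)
Your proposal is correct and follows essentially the same route as the paper: collapse the discriminant in \eqref{eq:bivalent-explicit-F} using the asymptotics of $F_1$ (Gaussian--exponential, Lemma \ref{lem:Gaussian-est}) and $F_2$ (exponential--exponential, Lemma \ref{claim:asymp-F-example-1}) to show $F_1\mathcal{I}=o(F_2^2)$, deduce $\sum_n I_n^1(t)\to 0$, and then read off the limit of $\sum_n I_n^2(t)$ from the $S$-equation via a cutoff argument. The only cosmetic differences are that you explicitly compute the scale $\alpha^*\overline{S}(t)-1=\tfrac{A_2}{\gamma_2 C_2}\tfrac{\ln t}{t}+\mathcal{O}(1/t)$ before bounding $\sum_n I_n^1(t)$ (the paper writes this more compactly as $\sum_n I_n^1(t)=\mathcal{O}(F_1/F_2^2)$), and your final cutoff is index-based while the paper's is fitness-based; both variants work.
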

\begin{figure}[H]
    \centering
    \includegraphics{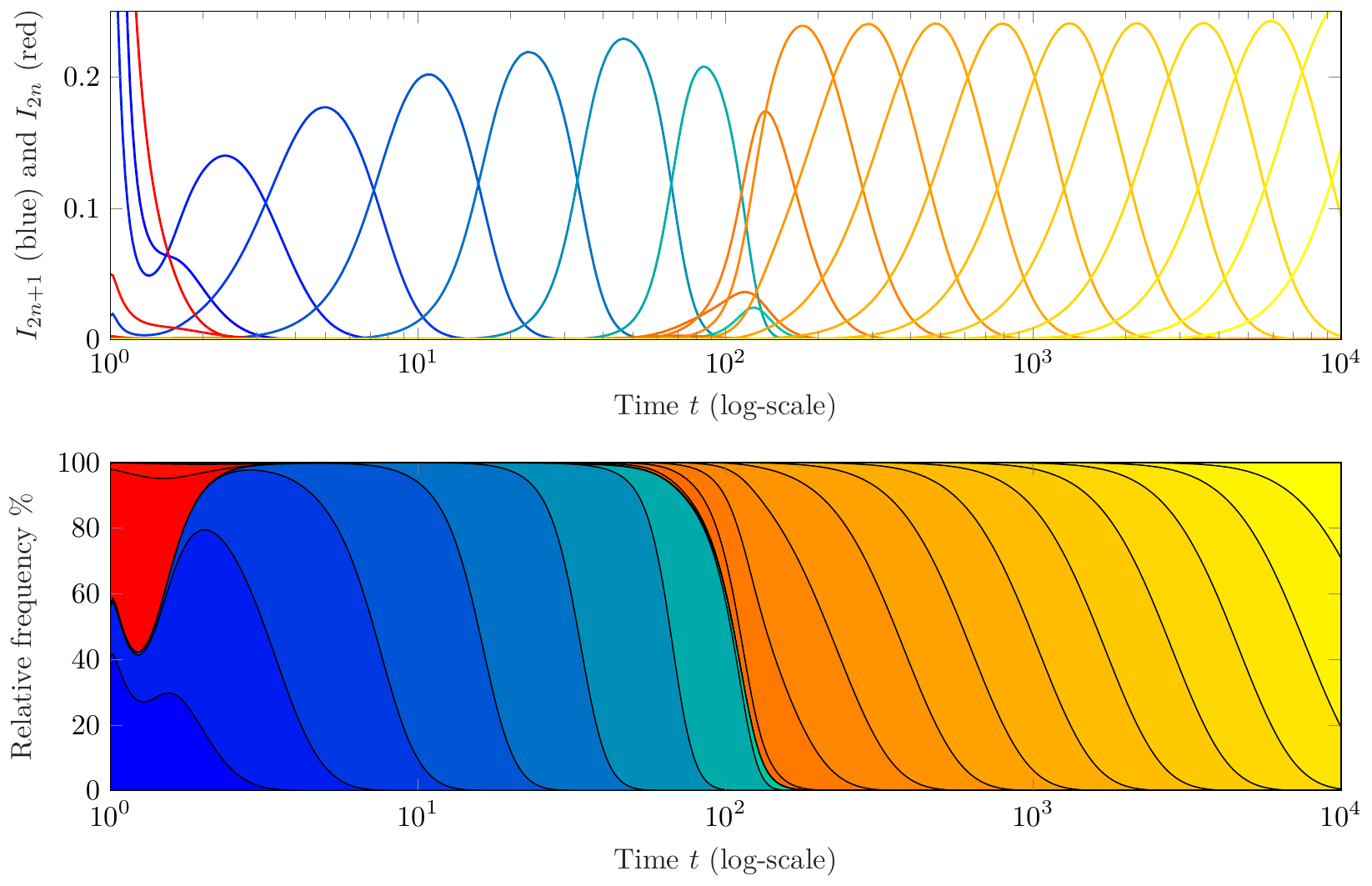}
	\caption{Plots of the solution $I_n(t)$ in the Bivalent case 1. The family $I_n(t)$ is divided in two groups:$I_{2n+1}(t)=I^1_n(t)$ associated with $\gamma_1$ which are plotted in a family of cold colors (blue to green), and $I_{2n}(t)=I^2_n(t)$ associated with $\gamma_2$ which are plotted in a family of warm colors (red to yellow). Both figures (top and bottom) suggest that the $\gamma_1$ family becomes prevalent at first, but the $\gamma_2$ family eventually wins the competition (after $t\approx 10^2$). \textbf{Top figure:} Value of the density of variants $I_n(t)$. \textbf{Bottom figure:} Relative frequencies of the variants as a function of time. \textbf{Parameters:} $\Lambda=10$, $\theta=10$, $S_0=1$, $\alpha^*=2$,  $\gamma_1=10$, $\gamma_2=10$, $A_1=1$, $A_2=3$, $B_1=B_2=10$, $C_1=C_2=\frac{1}{2}$. We used a total of $N_1=10$ variants for the family $1$ and $N_2=20$ variants for the family 2. For interpretation of the colors in the figure(s), the reader is referred to the online version of this article.}\label{Fig4}
\end{figure}
\begin{proof}[Proof of Claim \ref{claim:unexpected-selection}]
    We rewrite \eqref{eq:bivalent-explicit-F} as:
    \begin{equation*}
	    \mathcal{F}\big(t, \overline{S}(t)\big) = \frac{F_2\big(t\overline{S}(t)\big)^2}{4\mathcal{I}(t)}\left(1+\sqrt{1+4\mathcal{I}(t)\frac{F_1\big(t\overline{S}(t)\big)}{F_2\big(t\overline{S}(t)\big)^2}}\right)^2.
    \end{equation*}
    We know that $\mathcal{I}(t)\asymp 1$, that $F_1\big(t\overline{S}(t)\big)=\mathcal{O}\left(e^{-\frac{A_1}{2C_1}\ln(t\overline{S}(t))^2}\right)$ by Lemma \ref{lem:Gaussian-est}  and that $F_2\big(t\overline{S}(t)\big)\asymp \big(t\overline{S}(t)\big)^{-\frac{A_2}{C_2}}$ by Claim \ref{claim:asymp-F-example-1}. In particular, we obtain
    \begin{equation*}
	4\mathcal{I}(t)\frac{F_1\big(t\overline{S}(t)\big)}{F_2\big(t\overline{S}(t)\big)^2}=\mathcal{O}\left(e^{-\frac{A_1}{2C_1}\ln(t\overline{S}(t))^2 + \frac{A_2}{C_2}\ln(t\overline{S}(t))}\right) \xrightarrow[t\to+\infty]{} 0,
    \end{equation*}
    so that
    \begin{equation*}
	    \mathcal{F}\big(t, \overline{S}(t)\big) = \frac{F_2\big(t\overline{S}(t)\big)^2}{4\mathcal{I}(t)}\big(1+o(1)\big) 
    \end{equation*}
    and by \eqref{eq:bivalent-somme}
    \begin{equation*}
	\gamma_1(\alpha^*\overline{S}(t)-1)t = -2\ln\big(F_2(t\overline{S}(t))\big) + \mathcal{O}(1).
    \end{equation*}
    Finally we obtain
    \begin{equation*}
	\sum_{n=0}^{+\infty} I^1_n(t) = e^{\gamma_1(\alpha^*\overline{S}(t)-1)t}F_1\big(t\overline{S}(t)\big) = \mathcal{O}\left(\dfrac{F_1\big(t\overline{S}(t)\big)}{F_2\big(t\overline{S}(t)\big)^2}\right) \xrightarrow[t\to+\infty]{}0.
    \end{equation*}

    Next we show the convergence of $\sum I^2_n(t)$ to a positive constant. We know that $\sum I^1_n(t)\to 0$ and, by Theorem \ref{thm:discrete}, $S(t)\to \frac{1}{\alpha^*}$ and $S'(t)\to 0$. Using the first equation in \eqref{eq:main-a} we have:
    \begin{align*}
	-\dfrac{S'(t)-\Lambda + \theta S(t)}{S(t)} &=\sum_{n=0}^{+\infty}\beta_n I_n(t) = \sum_{n=0}^{+\infty}\beta^1_n I^1_n(t)+\sum_{n=0}^{+\infty} \beta^2_nI^2_n(t) = \sum_{n=0}^{+\infty} \beta^2_n I^2_n(t)+\mathcal{O}\left(\sum_{n=0}^{+\infty}I^1_n(t)\right) \\ 
	&=\sum_{n=0}^{+\infty} \gamma_2\left(\dfrac{\beta^2_n}{\gamma_2}-\alpha^*\right)I^2_n(t) + \sum_{n=0}^{+\infty} \gamma_2\alpha^* I^2_n(t) + o(1) =  \gamma_2\alpha^*\sum_{n=0}^{+\infty}I^2_n(t) + o(1), 
    \end{align*}
    where we used the fact that $\sum \left(\frac{\beta_n}{\gamma_2}-\alpha^*\right)I^2_n(t)\to 0$, which will justify below. Admitting this fact temporarily, let us finish the argument. We have now
    \begin{equation*}
	\sum_{n=0}^{+\infty} I^2_n(t)  = -\dfrac{1}{\gamma_2\alpha^*}\left(\dfrac{S'(t)-\Lambda+\theta S(t)}{S(t)}\right) + o(1) = \dfrac{\theta}{\gamma_2\alpha^*}\left(\dfrac{\Lambda}{\theta}\alpha^*-1\right) + o(1) = \dfrac{\theta}{\gamma_2\alpha^*}\left(\mathcal{R}_0-1\right)+o(1), 
    \end{equation*}
    which is exactly the second part of Claim \ref{claim:unexpected-selection}.

    There remains to show that $\sum \left(\frac{\beta_n}{\gamma_2}-\alpha^*\right)I^2_n(t)\to 0$ as $t\to+\infty$. Fix $\varepsilon>0$ arbitrarily. Let $I^\infty:=\frac{2\Lambda}{\min(\theta, \gamma_0)}$ so that by Lemma \ref{lem:bounds-nomut} we have $\sum I^2_n(t)\leq I^\infty$ for $t$ sufficiently large. Let $\mathcal{N}_\varepsilon$ be the set of indices defined by 
    \begin{equation*}
	\mathcal{N}_\varepsilon:=\left\{n\in\mathbb{N}\,:\,\left|\frac{\beta^2_n}{\gamma_2}-\alpha^*\right|\leq \frac{\varepsilon}{2I^\infty}\right\}. 
    \end{equation*}
    Then
    \begin{equation*}
	\sum_{n\in\mathcal{N}_\varepsilon}\left(\frac{\beta_n}{\gamma_2}-\alpha^*\right) I^2_n(t) \leq \frac{\varepsilon}{2I^\infty} \sum_{n=0}^{+\infty}I^2_n(t) \leq \dfrac{\varepsilon}{2}, 
    \end{equation*} 
    for $t$ sufficiently large. On the other hand, for $n\in\mathcal{N}_\varepsilon^c:=\mathbb{N}\backslash\mathcal{N}_\varepsilon$, we have $\alpha^*-\frac{\beta_n}{\gamma_2}\geq \frac{\varepsilon}{2I^\infty}$ so that
    \begin{equation*}
	I^2_n(t) = I^{0,2}_n e^{\gamma_2\left(\frac{\beta_n}{\gamma_2}-\alpha^*\right)\overline{S}(t)t + \gamma_2\left(\alpha^*\overline{S}(t)-1\right)t} \leq I_n^{0,2}e^{-\frac{\gamma_2\varepsilon}{2I^\infty}\overline{S}(t)t + \gamma_2(\alpha^*\overline{S}(t)-1)t} \leq I^{0,2}_n \exp\left(-\dfrac{\gamma_2\varepsilon}{8I^\infty\alpha^*}t\right), 
    \end{equation*}
    whenever $\overline{S}(t)\geq \frac{1}{2\alpha^*}$ and $|\alpha^*\overline{S}(t)-1| \leq \dfrac{\varepsilon}{8I^\infty\alpha^*}$, which is true for $t$ sufficiently large. Thus 
    \begin{equation*}
	\sum_{n\in\mathcal{N}^c_\varepsilon}I^2_n(t) \leq e^{-\frac{\gamma_2\varepsilon}{8I^\infty\alpha^*}t} \sum_{n\in\mathcal{N}^c_\varepsilon} I^{0,2}_n \xrightarrow[t\to+\infty]{}0, 
    \end{equation*}
    and finally 
    \begin{equation*}
	\sum_{n=0}^{+\infty}I^2_n(t) = \sum_{n\in\mathcal{N}_\varepsilon}I^2_n(t)+\sum_{n\in\mathcal{N}^c_\varepsilon} I^2_n(t) \leq \varepsilon, 
    \end{equation*}
    for $t$ sufficiently large. This finishes the proof of Claim \ref{claim:unexpected-selection}.
\end{proof}
}

\subsubsection{Bivalent example 2: alternating persistence}
\label{sec:alternating-persistence}

{In this subsection we provide an example which shows a very particular asymptotic behavior: the types $\gamma_1$ and $\gamma_2$ are \textit{both} asymptotically persistent as $t\to+\infty$, and become alternatively prevalent in the population. This causes the total mass of infected $\sum I_n(t)$ to fluctuate between two distinct values. This shows, in particular, that an additional assumption (like, for instance, Assumption \ref{as:disintegration}) is really necessary to obtain the asymptotic behavior of the mass and that the conclusions of Theorem \ref{thm:discrete} are, in some sense, sharp.

We work under the following assumption. 
\begin{assumption}\label{as:alternating-persistence}
	We assume that $\gamma_1 = 2\gamma_2$,  $\beta_n^1=\gamma_1\alpha^* - Be^{-Cn}$, $I_n^{0,1}=I_0^1e^{-An^2}$, $\beta_n^2 = \gamma_2\alpha^* - \frac{B}{2}e^{-\frac{C}{3}n}$ and $I^{0,2}_n=I_0^2e^{-\frac{A}{18}n^2}$ for some positive constants $A>0$. $B>0$ and $C>0$.
\end{assumption}
Our claim is as follows. 
\begin{claim}\label{claim:alternating-persistence}
    Under Assumption \ref{as:alternating-persistence}, there exist  two sequences $t^1_k\to +\infty$ and $t^2_k\to+\infty$ as $k\to+\infty$, such that 
    \begin{equation*}
	\lim_{k\to+\infty}\sum_{n=0}^{+\infty} I_n^1(t^1_k) = \dfrac{\theta}{\gamma_1\alpha^*}\big(\mathcal{R}_0-1\big) >0\text{ and } \lim_{t\to+\infty} \sum_{n=0}^{+\infty}I_n^2(t^1_k)=0, 
    \end{equation*}
    and 
    \begin{equation*}
	\lim_{k\to+\infty}\sum_{n=0}^{+\infty} I_n^1(t^2_k) =0\text{ and } \lim_{k\to+\infty} \sum_{n=0}^{+\infty}I_n^2(t^2_k)=\dfrac{\theta}{\gamma_2\alpha^*}\big(\mathcal{R}_0-1\big) >0, 
    \end{equation*}
    wherein $I_n^1(t) = I_{2n+1}(t) $ and $I^2_{n}(t) = I_{2n}(t)$.
\end{claim}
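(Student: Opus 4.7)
The strategy is to reduce the claim to an oscillation analysis of the single ratio $\rho(t):=F_2\bigl(t\overline S(t)\bigr)^2/F_1\bigl(t\overline S(t)\bigr)$, where $F_1,F_2$ are as in \eqref{eq:exbivalentF1F2}. Using $\gamma_1=2\gamma_2$ and \eqref{eq:bivalent-somme} we split
\begin{equation*}
\sum_{n}I^1_n(t)=e^{2\phi(t)}F_1\bigl(t\overline S(t)\bigr),\quad \sum_n I^2_n(t)=e^{\phi(t)}F_2\bigl(t\overline S(t)\bigr),\quad \phi(t):=\gamma_2 t\bigl(\alpha^*\overline S(t)-1\bigr),
\end{equation*}
and combining with $\mathcal I(t)=\sum_n I^1_n(t)+\sum_n I^2_n(t)$ and the inversion \eqref{eq:bivalent-implicit-F} yields $\sum_n I^1_n(t)=\bigl[(\sqrt{\rho(t)+4\mathcal I(t)}-\sqrt{\rho(t)})/2\bigr]^2$. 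Since $\mathcal I(t)$ stays in a compact subset of $(0,+\infty)$ by Theorem \ref{thm:discrete}, it will suffice to exhibit sequences $t^1_k\to+\infty$ along which $\rho(t^1_k)\to 0$ (forcing $\sum_n I^2_n(t^1_k)\to 0$) and $t^2_k\to+\infty$ along which $\rho(t^2_k)\to+\infty$ (forcing $\sum_n I^1_n(t^2_k)\to 0$).

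To analyse $\rho$ I would apply Lemma \ref{lem:Gaussian-est} separately to $F_1$, with parameters $(A,B,C)$, and to $F_2$, with parameters $(A/18,B/2,C/3)$. The values in Assumption \ref{as:alternating-persistence} are rigged so that the Lambert critical points obey $x_2(\xi)=3x_1(\xi)$ \emph{exactly}. Setting $n_i=\lfloor x_i+\tfrac12\rfloor$, $r_1=x_1-n_1$ and $k:=n_2-3n_1\in\{-1,0,1\}$, a direct computation gives
\begin{equation*}
\log\rho(t)=An_1^2-\tfrac{A}{9}n_2^2+\tfrac{2Ax_1}{C}e^{Cr_1}\bigl(1-e^{-Ck/3}\bigr)+\mathcal O(1),
\end{equation*}
with $k=-1,0,+1$ according as $r_1\in[-1/2,-1/6)$, $[-1/6,1/6)$ or $[1/6,1/2)$. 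In the tied case $k=0$ all leading terms cancel identically and $\rho$ stays bounded. In the case $k=-1$, elementary algebra reduces the right-hand side to $-(2Ax_1/C)\bigl(e^{Cr_1}(e^{C/3}-1)-C/3\bigr)+\mathcal O(1)$, which changes sign at $r_1=r^*_-(C):=C^{-1}\ln\!\bigl[C/(3(e^{C/3}-1))\bigr]$. The key elementary inequality $\sinh(C/6)>C/6$, valid for every $C>0$, is equivalent to $-1/2<r^*_-(C)<-1/6$, so that both the sub-intervals $J_1:=(r^*_-(C),-1/6)$ (on which $\log\rho\to-\infty$) and $J_2:=(-1/2,r^*_-(C))$ (on which $\log\rho\to+\infty$) of the $k=-1$ regime are non-empty.

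To build the sequences, observe that $t\mapsto\xi(t):=t\overline S(t)=\int_0^t S(s)\,\dd s$ is continuous and strictly increasing to $+\infty$ since $S>0$ and $S(t)\to 1/\alpha^*>0$; hence $t\mapsto x_1\bigl(\xi(t)\bigr)$ is continuous, eventually monotone and diverges, so its fractional part $r_1\bigl(\xi(t)\bigr)$ sweeps $[-1/2,1/2)$ and by the intermediate value theorem visits any given sub-interval infinitely often. Choosing $t^i_k\to+\infty$ with $r_1(\xi(t^i_k))$ trapped in the interior of $J_i$ produces the desired regimes of $\rho$. To identify the actual limits $\frac{\theta}{\gamma_i\alpha^*}(\mathcal R_0-1)$ I would copy the argument from the proof of Claim \ref{claim:unexpected-selection}: dividing the $S$-equation by $S(t)$ and splitting the indices dyadically according to whether $\beta_n/\gamma_n$ is close to $\alpha^*$ or uniformly smaller yields $\sum_n\beta_n I_n(t)=\gamma_1\alpha^*\sum_n I^1_n(t)+\gamma_2\alpha^*\sum_n I^2_n(t)+o(1)$ as $t\to+\infty$, which combined with $S(t)\to 1/\alpha^*$ and $S'(t)\to 0$ gives along each sequence $\gamma_i\alpha^*\sum_n I^i_n(t^i_k)\to\theta(\mathcal R_0-1)$, as claimed. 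The main difficulty is the bookkeeping of the $\mathcal O(1)$ errors: in the $k=\pm 1$ regimes the quadratic-in-$n_1$ leading terms cancel exactly, so the sign of the sub-leading residue of order $x_1$ must be tracked via the sharp asymptotics of Lemma \ref{lem:Gaussian-est} rather than through crude exponential bounds, and the claim that $J_1$ is non-empty for \emph{every} $C>0$ hinges on the strict version of $\sinh(C/6)>C/6$.
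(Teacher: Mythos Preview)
Your overall strategy coincides with the paper's: reduce everything to the oscillation of the ratio $F_2(\xi)^2/F_1(\xi)$ (equivalently its reciprocal), use Lemma~\ref{lem:Gaussian-est} for each factor, and then recover the exact limits from the $S$-equation exactly as in Claim~\ref{claim:unexpected-selection}. The final paragraph is fine.

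The gap is in how you invoke Lemma~\ref{lem:Gaussian-est}. That lemma does \emph{not} give $\ln F_i(\xi)=-G_i(n_i(\xi),\xi)+\mathcal O(1)$; it gives the asymptotic of $F_i$ at the \emph{shifted} argument $e^{-C_ir_i(\xi)}\xi$. Undoing the shift changes the relevant integer $n_i$ by $\pm 1$ whenever $r_i$ is not near zero (the paper computes this explicitly via $x(e^{Cr}\xi)=x(\xi)+r+o(1)$), and because the leading term in $\ln F_i$ is quadratic in $n_i$, this $\pm 1$ slip produces an error of order $x_1$, not $\mathcal O(1)$. Your displayed formula for $\log\rho$ therefore has the wrong coefficient on the $x_1$-scale residue, and the intervals $J_1,J_2$ are misidentified. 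A concrete check: at $r_1=-\tfrac13$ your formula gives a bracket $C/3-(1-e^{-C/3})>0$, hence $\log\rho\to+\infty$; but this is exactly the paper's point $\xi^1_n$, where the correct computation (with the shifted $n_1$ equal to $n-1$, not $n$) yields $F_1/F_2^2\asymp e^{\frac{4A}{3}n}\to+\infty$, i.e.\ $\log\rho\to-\infty$. The sign is reversed.

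The paper sidesteps the double-shift bookkeeping by hand-picking sequences $\xi^1_n,\xi^2_n$ at which $r_2=0$ exactly (so Lemma~\ref{lem:Gaussian-est} applies to $F_2$ without any shift), and then tracks only the $F_1$-shift. Your continuous-in-$r_1$ picture can be repaired, but you must replace $n_1(\xi),n_2(\xi)$ by the integers dictated by the two separate shifts, which breaks the clean relation $k=n_2-3n_1\in\{-1,0,1\}$ you rely on.
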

\begin{figure}[H]
    \centering
    \includegraphics{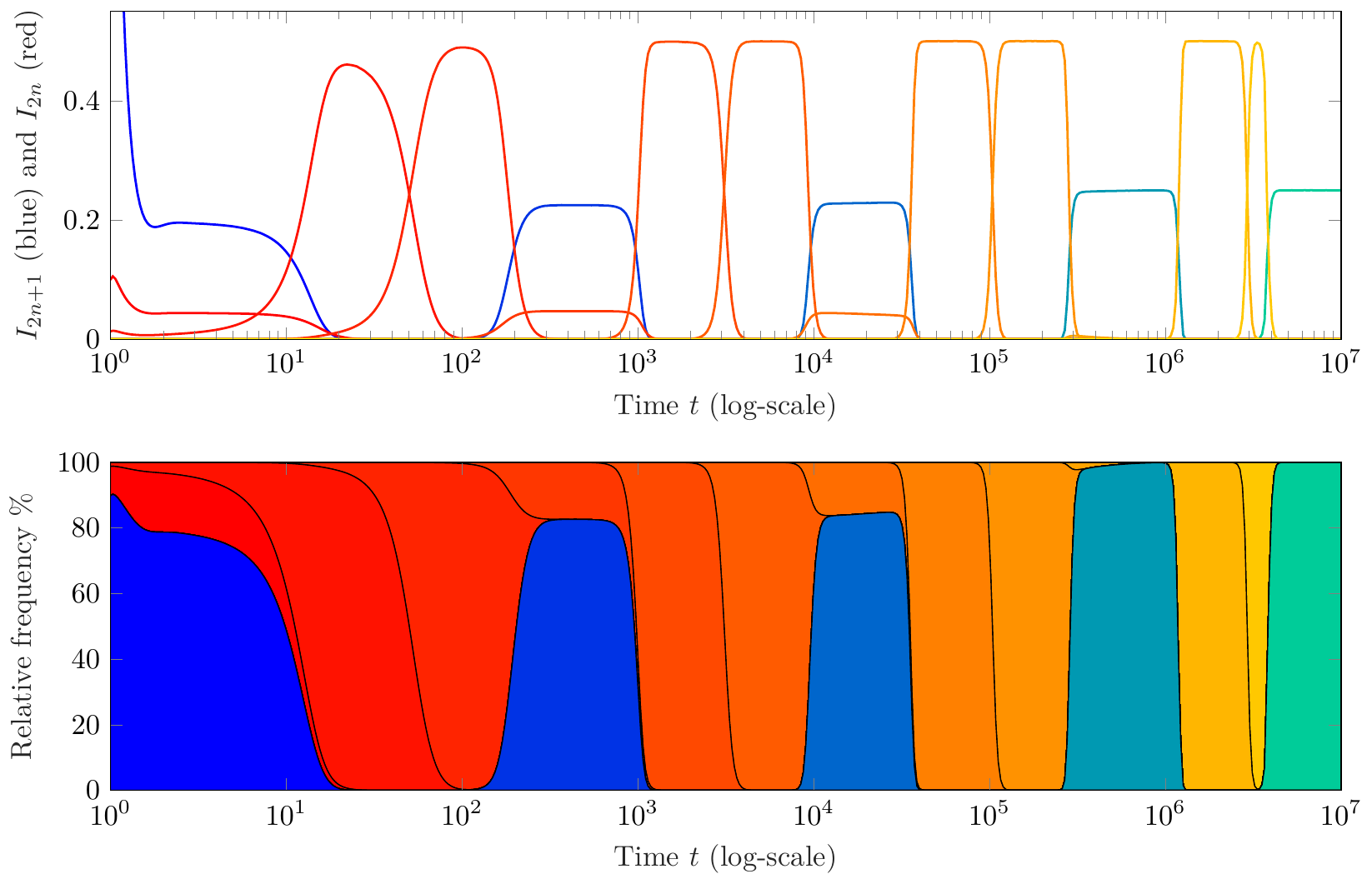}
	\caption{Plots of the solution $I_n(t)$ under Assumption \ref{as:alternating-persistence}. The family $I_n(t)$ is divided in two groups:$I_{2n+1}(t)=I^1_n(t)$ associated with $\gamma_1$ which are plotted in a family of cold colors (blue to green), and $I_{2n}(t)=I^2_n(t)$ associated with $\gamma_2$ which are plotted in a family of warm colors (red to yellow). The top figure suggests that the total  pathogen population fluctuates in time.  The bottom figure suggests that the $\gamma_1$ and $\gamma_2$ families become alternatively prevalent. \textbf{Top figure:} Value of the density of variants $I_n(t)$. \textbf{Bottom figure:} Relative frequencies of the variants as a function of time. The family associated with $\gamma_1$ (blue to green) and the family associated with $\gamma_2$ (red to yellow) become alternatively dominant. \textbf{Parameters:} $\Lambda=5$, $\theta=5$, $S_0=1$, $\alpha^*=2$,  $\gamma_1=10$, $\gamma_2=5$, $I_0^1=1$, $I_0^2=0.1$, $A=36$, $B=2$, $C=3$. We used a total of $N_1=6$ variants for the family $1$ and $N_2=15$ variants for the family 2. For interpretation of the colors in the figure(s), the reader is referred to the online version of this article.} \label{Fig5}
\end{figure}
\begin{proof}[Proof of Claim \ref{claim:alternating-persistence}]
	Recalling \eqref{eq:exbivalentF1F2} and Lemma \ref{lem:Gaussian-est}, we have in this case: 
	\begin{align*}
		F_1(e^{-C r_1(\xi)}\xi)  &\sim  e^{-An_1(\xi)^2 -\frac{2A}{C}x_1(\xi)}, \\
		F_2(e^{-\frac{C}{3} r_2(\xi)}\xi)  &\sim  e^{-\frac{A}{18}n_2(\xi)^2 -\frac{A}{3C}x_2(\xi)}, 
	\end{align*}
	where
	\begin{equation*}
		x_1(\xi):= n_1(\xi)+r_1(\xi) = \frac{1}{C}W_0\left(\dfrac{BC^2}{2A}\xi\right)\text{ and }
		x_2(\xi):= n_2(\xi)+r_2(\xi) = \frac{3}{C}W_0\left(\dfrac{BC^2}{2A}\xi\right),
	\end{equation*}
	$n_1(\xi), n_2(\xi)\in\mathbb{N}$ and $r_1(\xi), r_2(\xi)\in\left[-\frac{1}{2}, \frac{1}{2}\right)$.
	In other words, with $x(\xi):= \frac{1}{C}W_0\left(\dfrac{BC^2}{2A}\xi\right)$, we have
	\begin{align*}
		n_1(\xi) &= \left\lfloor x(\xi)+\frac{1}{2} \right\rfloor,  & n_2(\xi)&=\left\lfloor 3x(\xi)+\frac{1}{2}\right\rfloor, \\ 
		r_1(\xi)&= x(\xi)-n_1(\xi)\in\left(-\frac{1}{2}, \frac{1}{2}\right),  &
		r_2(\xi)&= 3x(\xi)-n_2(\xi)\in\left(-\frac{1}{2}, \frac{1}{2}\right).
	\end{align*}
	Let $\xi^1_n$ and $\xi^2_n$ be  defined by the relation 
	\begin{equation*}
		\xi^1_n:= \dfrac{2A}{BC}\left(n-\frac{1}{3}\right) e^{C\left(n-\frac{1}{3}\right)} \text{ and }
		\xi^2_n:= \dfrac{2A}{BC}\left(n+\frac{1}{3}\right) e^{C\left(n+\frac{1}{3}\right)}, 
	\end{equation*}
	so that we have, by definition,
	\begin{align*}
		x(\xi^1_n) &= n-\frac{1}{3}, & n_1(\xi^1_n) &=n, & r_1(\xi^1_n) &=-\frac{1}{3}, & n_2(\xi^1_n) &= 3n-1, & r_2(\xi^1_n) = 0, \\ 
		x(\xi^2_n) &= n+\frac{1}{3}, & n_1(\xi^2_n) &=n, & r_1(\xi^2_n) &=+\frac{1}{3}, & n_2(\xi^2_n) &= 3n+1, & r_2(\xi^2_n) = 0.
	\end{align*}
	Since $W_0(X)= \ln(X)-\ln\ln X+o(1)$ (see \cite[(4.19) p.349]{Corless-etal-96}), we have for $n$ sufficiently large:
	\begin{align*}
		x\left( e^{C r_1(\xi^1_n)}\xi^1_n\right) &=  \frac{1}{C}\ln\left(\frac{BC^2}{2A}e^{C r_1(\xi^1_n)}\xi^1_n\right)-\frac{1}{C}\ln \ln \left(\frac{BC^2}{2A}e^{Cr_1(\xi^1_n)}\xi^1_n\right) + o(1)\\
		&= r_1(\xi^1_n)+\frac{1}{C}\ln\left(\frac{BC^2}{2A}\xi\right) - \frac{1}{C}\ln\ln\left(\frac{BC^2}{2A} \xi \right)- \frac{1}{C}\ln\left(1+\frac{C r_1(\xi^1_n)}{\ln\left(\frac{BC^2}{2A}\xi\right)}\right)+o(1) \\ 
		&=x(\xi^1_n)+r_1(\xi^1_n)+o(1) = n-\frac{1}{3}-\frac{1}{3} +o(1) = n-\frac{2}{3}+o(1), 
	\end{align*}
	thus for $n$ sufficiently large, 
	\begin{equation*}
		n_1\left(e^{Cr_1(\xi^1_n)}\xi^1_n\right) = \left\lfloor x\left(\xi e^{Cr_1(\xi^1_n)}\right)+\frac{1}{2}\right\rfloor = n-1.
	\end{equation*}
	Similarly, for $n$ sufficiently large, we have
	\begin{equation*}
		x\left(\xi^2_ne^{Cr_1(\xi^2_n)}\right)  = n+\frac{2}{3}+o(1) \text{ and }n_1\left(e^{Cr_1(\xi^2_n)}\xi^2_n\right) = \left\lfloor x\left(\xi e^{Cr_1(\xi^2_n)}\right)+\frac{1}{2}\right\rfloor = n+1.
	\end{equation*}
	Thus, for $n$ sufficiently large, we have
	\begin{align*}
		\frac{F_1(\xi^1_n)}{F_2(\xi^1_n)^2} &\sim \exp\left[-A n_1\left(e^{Cr_1(\xi^1_n)}\xi^1_n\right)^2 + A\left(\frac{n_2(\xi^1_n)}{3}\right)^2 - \frac{2A}{C}x_1\left(\xi^1_n e^{Cr_1(\xi^1_n)}\right) + \frac{2A}{3C}x_2(\xi^1_n) \right] \\
		&= \exp\left[A\left(\frac{n_2(\xi^1_n)}{3}-n_1\left(e^{Cr_1(\xi^1_n)}\xi^1_n\right)\right)\left(\frac{n_2(\xi^1_n)}{3}+n_1\left(e^{Cr_1(\xi^1_n)}\xi^1_n\right)\right) + \frac{2A}{C}\left(\frac{x_2(\xi^1_n)}{3} - x_1\left(e^{Cr_1(\xi^1_n)}\xi^1_n\right)\right)  \right] \\
		&=\exp\left[ A\left(n-\frac{1}{3}-(n-1)\right)\left(n-\frac{1}{3}+n-1\right) + \frac{2A}{C} \left(n-\frac{1}{3}+\frac{r_2(\xi^1_n)}{3} - \left(n-1+r_1\left(\xi^1_n e^{Cr_1(\xi)}\right)\right)\right)\right] \\ 
		& = \exp\left[ \frac{4A}{3}\left(n-\frac{2}{3}\right) +\mathcal{O}(1)\right]\xrightarrow[n\to+\infty]{}+\infty,
	\end{align*}
	and similarly, 
	\begin{align*}
		\frac{F_1(\xi^2_n)}{F_2(\xi^2_n)^2} &\sim \exp\left[-A n_1\left(e^{Cr_1(\xi^2_n)}\xi^2_n\right)^2 + A\left(\frac{n_2(\xi^2_n)}{3}\right)^2 - \frac{2A}{C}x_1\left(\xi^2_n e^{Cr_1(\xi^2_n)}\right) + \frac{2A}{3C}x_2(\xi^2_n) \right] \\
		&= \exp\left[A\left(\frac{n_2(\xi^2_n)}{3}-n_1\left(e^{Cr_1(\xi^2_n)}\xi^2_n\right)\right)\left(\frac{n_2(\xi^2_n)}{3}+n_1\left(e^{Cr_1(\xi^2_n)}\xi^2_n\right)\right) + \frac{2A}{C}\left(\frac{x_2(\xi^2_n)}{3} - x_1\left(e^{Cr_1(\xi^2_n)}\xi^2_n\right)\right)  \right] \\
		&=\exp\left[ A\left(n+\frac{1}{3}-(n+1)\right)\left(n+\frac{1}{3}+n+1\right) + \frac{2A}{C} \left(n+\frac{1}{3}+\frac{r_2(\xi^2_n)}{3} - \left(n+1+r_1\left(\xi^2_n e^{Cr_1(\xi)}\right)\right)\right)\right] \\ 
		& = \exp\left[ -\frac{4A}{3}\left(n+\frac{2}{3}\right) +\mathcal{O}(1)\right]\xrightarrow[n\to+\infty]{}0.
	\end{align*}

	Now we conclude the proof. Let $t^1_n$ be a sequence of times such that $t^1_n\overline{S}(t^1_n) = \xi^1_n$. Clearly $t^1_n\to+\infty$ as $n\to+\infty$. Then, recalling \eqref{eq:bivalent-explicit-F}, we have:
	\begin{align*}
		\mathcal{F}(t^1_n, \overline{S}(t^1_n) )&=\frac{F_1(t^1_n\overline{S}(t^1_n))}{\mathcal{I}(t^1_n)} \left(\dfrac{F_2(t^1_n\overline{S}(t^1_n))}{\sqrt{F_1(t^1_n\overline{S}(t^1_n))}}+\sqrt{\dfrac{\mathcal{I}(t^1_n)F_2(t^1_n\overline{S}(t^1_n))^2}{4F_1(t^1_n\overline{S}(t^1_n))}+1}\right)^2 = \frac{F_1(t^1_n\overline{S}(t^1_n))}{\mathcal{I}(t^1_n)}\big(1+o(1)\big), 
	\end{align*}
	and by \eqref{eq:bivalent-somme}, 
	\begin{equation*}
		\gamma_1t^1_n\left(\alpha^*\overline{S}(t^1_n)-1\right) = -\ln F_1\big((t^1_n\overline{S}(t^1_n)\big)+\mathcal{O}(1). 
	\end{equation*}
	Finally, 
	\begin{align*}
		\sum_{n=0}^{+\infty} I_n^2(t^1_n) &= I_0^2e^{\frac{\gamma_2}{\gamma_1}\gamma_1(\alpha^*\overline{S}(t^1_n)-1)} F_2\big(t^1_n\overline{S}(t^1_n)\big) =  I_0^2e^{\frac{1}{2}\left(-\ln F_1(t^1_n\overline{S}(t^1_n))+\mathcal{O}(1)\right)} F_2\big(t^1_n\overline{S}(t^1_n)\big) \\ 
		&=\mathcal{O}\left(\dfrac{F_2(\xi^1_n)}{\sqrt{F_1(\xi^1_n)}}\right) \xrightarrow[n\to+\infty]{} 0, 
	\end{align*}
	and by using \eqref{eq:main-a} together with Theorem \ref{thm:discrete} we obtain
	\begin{equation*}
		\lim_{n\to+\infty} \sum_{n=0}^{+\infty} I^1_n(t^1_n) = \dfrac{\theta}{\gamma_1\alpha^*}\left(\mathcal{R}_0-1\right),
	\end{equation*}
	as we did in the proof of Claim \ref{claim:unexpected-selection}. The first part of the Claim \ref{claim:alternating-persistence} is proved. Proceeding similarly, let $t^2_n$ be a sequence of times such that $t^2_n\overline{S}(t^2_n) = \xi^1_n$. Clearly $t^2_n\to+\infty$ as $n\to+\infty$. We have 
	\begin{align*}
		\mathcal{F}(t^2_n, \overline{S}(t^2_n) )&=\frac{F_2(t^2_n\overline{S}(t^2_n))}{4} \left(1+ \sqrt{1+\dfrac{4F_1(t^2_n\overline{S}(t^2_n))}{\mathcal{I}(t^2_n)F_2(t^2_n\overline{S}(t^2_n))^2}}\right)^2 = \frac{F_2(t^2_n\overline{S}(t^2_n))^2}{4}\big(1+o(1)\big), 
	\end{align*}
	so
	\begin{equation*}
		\gamma_1t^2_n\left(\alpha^*\overline{S}(t^2_n)-1\right) = -2\ln F_2\big(t^2_n\overline{S}(t^2_n)\big)+\mathcal{O}(1), 
	\end{equation*}
	and
	\begin{align*}
		\sum_{n=0}^{+\infty} I_n^1(t^2_n) &=I_0^1 e^{\gamma_1(\alpha^*\overline{S}(t^2_n)-1)} F_1\big(t^2_n\overline{S}(t^2_n)\big) = I_0^1 e^{-2\ln F_1(t^2_n\overline{S}(t^2_n))+\mathcal{O}(1)} F_1\big(t^2_n\overline{S}(t^2_n)\big) 
		=\mathcal{O}\left(\dfrac{{F_1(\xi^2_n)}}{F_2(\xi^2_n)^2}\right) \xrightarrow[n\to+\infty]{} 0, 
	\end{align*}
	so finally 
	\begin{equation*}
		\lim_{n\to+\infty} \sum_{n=0}^{+\infty} I^2_n(t^2_n) = \dfrac{\theta}{\gamma_2\alpha^*}\left(\mathcal{R}_0-1\right).
	\end{equation*}
	This is the second part of  Claim \ref{claim:alternating-persistence}, and the proof is finished.
\end{proof}

}
\section{Discussion}
\label{sec:discussion}

It is a classical result in evolutionary epidemiology \parencite{Ewald-1983, Ali-Hur-Mid-Van-09} known as the `trade-off hypothesis' that evolution favors variants that maximize the basic reproductive number $\mathcal{R}_0$; here we show once again the robustness of this prediction by considering an infinite number of variants competing for the hosts.  Yet, we also go beyond the standard prediction and show that a complexity persists in the asymptotic behavior of the epidemic even in our {simplistic} model, with many possible {outcomes} ranging from the simple convergence to a global equilibrium (the case of finite system as in \textcite{Hsu-78}, or case \ref{item:persistence-convergence} of Theorem \ref{thm:discrete}) to an eternal transient state (the example given in Section \ref{sec:alternating-persistence}).

{Our examples in section \ref{sec:examples} shed light on the variety of dynamics that can be observed for the diversity of pathogen variants in the host population. It is likely to be dependent not only on the distribution of the parameters $\gamma_n$ and $\alpha_n=\frac{\beta_n}{\gamma_n}$, but also on the initial number of infected corresponding to these parameters; depending on the choices we make, we can observe an enhancement, preservation or erosion of the diversity of variants in the population. Example 1 shows the case of an algebraically converging fitness function with an algebraically decreasing initial data. Figure \ref{Fig1} (bottom) shows the numerical computation of the relative frequencies of the variants; the number of variants with non-negligible proportion seems to be increasing with time, hence in this case the diversity seems to be increasing with time (enhancement of diversity). In Figure \ref{Fig2} (bottom) we show the numerical computations for the case of a fitness function converging exponentially fast with an exponentially decreasing initial data; here the diversity seems to be approximately constant in time, with a constant number of variants that dominate the others (preservation of diversity). Figure \ref{Fig3} (bottom) shows a similar fitness function but with Gaussian initial data; here the transition between variants becomes steeper with time, and our analysis in Claim \ref{claim:gaussian-exponential} suggests that only one variant dominates all the other for increasingly large periods asymptotically (although this may happen in a larger time frame than the one presented in Figure \ref{Fig3}). Thus in this case we observe an erosion of the diversity of variants. Those three different types behaviors can be proved analytically, see Claim \ref{claim:algebraic-algebraic}, Claim \ref{claim:exponential-exponential} and Claim \ref{claim:gaussian-exponential}.}    } 

{While Figure \ref{Fig1}, Figure \ref{Fig2} and Figure \ref{Fig3} focus on the case when $\gamma_n$ is a constant, in Figure \ref{Fig4} and Figure \ref{Fig5} we investigate the case when $\gamma_n$ oscillates between two values and we show that surprising behaviors may occur then. Figure \ref{Fig4} illustrates that, when Assumption \ref{as:disintegration} does not hold, we cannot hope to generalize the results of Proposition \ref{prop:convergence-mass}: indeed the family of variants with the highest value of $\gamma_n$ (in blue to green) gets extinct and the family of variants with the lowest value of $\gamma_n$ (in red to yellow) dominates in this case. This is proved in Claim \ref{claim:unexpected-selection}. Figure \ref{Fig5} illustrates that in some particular cases, neither family uniformly dominates the population asymptotically, but they both dominate the population alternatively. Claim \ref{claim:alternating-persistence} shows moreover that, in this situation, the total population of infected does not converge to a limit but oscillates between two distinct values. }

The assumption that an infinite number of variants exist at the same time corresponds to a creative simplification of reality that allows us to describe possible behaviors occurring in nature. The classical theory for finite systems gives us access to a single type of dynamics: convergence to a unique equilibrium. Yet practical observations in real-life epidemics such as the COVID-19 epidemic suggest that it is not the case \parencite{Brussow-2022}. In the end, a description of the succession of infinitely many variants might give a more realistic description of the observed phenomenon than a finite model.

We predict that, when there is equality between different types, the variants that are associated with a fast dynamics are favored in the long run. This may seem counterintuitive since the parameter that is maximized, $\gamma$, incidentally corresponds to the inverse of the infection period; however, we conjecture that the behavior that it selected is the one associated with a fast transmission rather than low infection period. This could be checked by splitting the different coefficients associated with recovery, host mortality, and transmission in \eqref{eq:main}. We leave such a refined description for future work.

Our work could be extended in several directions. An important addition would be to account for the influence of the age of infection in our model, as in the original article of \textcite{Ker-McK-1927} (see also \textcite{Demongeot-Griette-Magal-Maday-2023}). We consider it an exciting motivation for future works.

\section{Proof of the mathematical results}
\label{sec:proof}

\subsection{Proof of Proposition \ref{prop:Cauchy}}

We prove that problem \eqref{eq:main} is well posed. Thanks to Assumption~\ref{as:params} we observe that the map $F:\mathbb{R}\times\ell^1 \rightarrow \mathbb{R}\times\ell^1 $ defined by the right hand side of the system of equations \eqref{eq:main-a} is locally Lipschitz on the Banach space $\mathbb{R}\times\ell^1$. Therefore problem \eqref{eq:main} admits a unique maximal solution $\big(S(t), (I_n(t))_{n\in\mathbb{N}}\big)\in C^1\big([0,T_{\rm max}), \mathbb{R}\big)\times C^1\big([0,T_{\rm max}), \ell^1\big)$ for some $T_{\rm max}>0$ possibly infinite. Moreover, formula \eqref{eq:I_n} holds, i.e.  
$$
I_n(t) = e^{t\left(\beta_n \overline{S}(t) - \gamma_n\right)}I^0_n,
$$
in particular $I_n$ is a non-negative function for all $n \in \mathbb{N}$ and
$\big(I_n^0\big)_{n\in\mathbb{N}}\in \ell^1_+ \longmapsto \big(I_n(t)\big)_{n\in\mathbb{N}} \in \ell^1_+$ is continuous for all $t>0$.

Next, the following a priori estimate proves that  the solution $\big(S(t), (I_n(t))_{n\in\mathbb{N}}\big)$ is uniformly bounded in $\mathbb{R}\times \ell_+^1$ so that $T_{\rm max}=\infty$:
\begin{lemma}\label{lem:apriori-bounds}  
	Suppose that Assumption \ref{as:params} holds true.
	For all $t \in [0,T_{\rm max})$,
	\begin{gather}\label{ineq:apriori-bounds}
		0 \le S(t) \le S_0e^{-\theta t}+ \frac{\Lambda}{\theta}\left(1 - e^{-\theta t}\right)
		\le \max\left(S_0,\frac{\Lambda}{\theta} \right)			
		,\\
		0 \le S(t) +\sum _{n \in \mathbb N}I_n(t)\leq \frac{\Lambda}{\min(\theta, \gamma_0)} + \left(S_0+\sum _{n \in \mathbb N}I_n^0 -\frac{\Lambda}{\min(\theta, \gamma_0)}\right)e^{-\min(\theta, \gamma_0)t}  
		. 
	\end{gather} 
\end{lemma}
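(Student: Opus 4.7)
The plan is to establish both bounds through scalar ODE comparison. For each, the idea is to derive a differential inequality of the form $u'(t) \leq \Lambda - \kappa u(t)$ and invoke Gronwall's lemma to obtain the exponential-type estimate displayed. Non-negativity of each $I_n(t)$ is automatic from integrating the $n$-th equation in \eqref{eq:main-a}, which gives $I_n(t) = I_n^0 \exp\bigl(\int_0^t (\beta_n S(s) - \gamma_n)\,\dd s\bigr) \geq 0$ (equivalent to \eqref{eq:I_n} after dividing by $t$), independently of the sign of $S$.

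For the first bound on $S$, I would argue as follows. Since each $I_k(t) \ge 0$ and $\beta_k \ge 0$, one has $\frac{\dd}{\dd t} S(t) \leq \Lambda - \theta S(t)$, and direct comparison with the scalar linear ODE $u' = \Lambda - \theta u$ with $u(0) = S_0$ yields the explicit upper bound $S_0 e^{-\theta t} + (\Lambda/\theta)(1 - e^{-\theta t})$. This expression is a convex combination of $S_0$ and $\Lambda/\theta$, hence bounded by $\max(S_0, \Lambda/\theta)$. For the lower bound, I would use the integrating factor $\mu(t) := \exp\bigl(\int_0^t (\theta + \sum_k \beta_k I_k(s))\,\dd s\bigr)$, which is finite on $[0, T_{\rm max})$ thanks to the $\ell^1$-continuity of $(I_n(t))_{n\in\mathbb{N}}$: the identity $(S\mu)'(t) = \Lambda \mu(t) \geq 0$ then gives $S(t) \mu(t) \geq S_0 \geq 0$, so $S(t) \geq 0$.

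The second bound is obtained by summing the equations. Setting $N(t) := S(t) + \sum_{n \in \mathbb{N}} I_n(t)$, the cross terms $\mp \sum_n \beta_n S(t) I_n(t)$ cancel exactly, leaving
\begin{equation*}
N'(t) = \Lambda - \theta S(t) - \sum_{n\in\mathbb{N}} \gamma_n I_n(t) \leq \Lambda - \min(\theta, \gamma_0)\, N(t),
\end{equation*}
where the inequality uses $\gamma_n \geq \gamma_0$ from Assumption \ref{as:params} together with $\theta \geq \min(\theta, \gamma_0)$. A scalar Gronwall comparison with $u' = \Lambda - \min(\theta, \gamma_0)\, u$, $u(0) = S_0 + \sum_n I_n^0$, then reproduces exactly the displayed estimate.

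The only technical subtlety is the justification of the termwise differentiation of the infinite series $\sum_{n\in\mathbb{N}} I_n(t)$ and the associated rearrangement of the double sum $\sum_n \beta_n S(t) I_n(t)$. This is not a serious obstacle, since Proposition \ref{prop:Cauchy} already asserts that $(I_n(t))_{n\in\mathbb{N}} \in C^1([0, T_{\rm max}), \ell^1)$, which ensures that the series of derivatives converges absolutely in $\ell^1$ uniformly on compact time intervals; together with the boundedness of $(\beta_n)$ this licenses interchanging sums and derivatives throughout.
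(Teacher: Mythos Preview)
Your proposal is correct and follows essentially the same approach as the paper's proof: positivity of $I_n$ from the explicit exponential formula, the upper bound on $S$ from $S'\le\Lambda-\theta S$ using $I_n\ge 0$, and the bound on the total population by summing the equations and applying a Gronwall-type comparison with rate $\min(\theta,\gamma_0)$. The paper's own argument is in fact considerably terser (it simply says the bounds ``readily'' follow), so your version supplies the details the paper omits, including the integrating-factor justification of $S(t)\ge 0$ and the remark on termwise differentiability of the series.
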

\begin{proof}
	Using the first equation of \eqref{eq:main-a}, we have $S(t) \ge 0$ and due to the positivity of $I_n(t)$ we  readily prove the Lemma inequalities for $S(t)$.

	Next, adding  equations of \eqref{eq:main-a}  we remark that 
	\begin{equation*}
		\frac{\dd}{\dd t}\left(S(t)+\sum _{n \in \mathbb N}I_n(t) \right)\leq \Lambda - \theta S(t) - \gamma_0\sum _{n \in \mathbb N}I_n(t),
	\end{equation*}
	which yields the inequality for $S(t) +\sum _{n \in \mathbb N}I_n(t)$.
	The Lemma is proved.
\end{proof} %

\subsection{Proof of Proposition \ref{prop:Extinction}}

We first assume that $\mathcal{R}_0 <1$. Let $\varepsilon>0$ such that
$$
m=1 -  \mathcal{R}_0 \left(1+\varepsilon\frac{ \theta}{\Lambda}\right)  >0,
$$
there exists $t_0(\varepsilon)>0$ such that
$$
S(t) \le \frac{\Lambda}{\theta}+\varepsilon.
$$
Recall that the equations of \eqref{eq:main-a} for $I_n$ write for all $n \in \mathbb{N}$
$$
\frac{\dd}{\dd t} I_n (t)=\beta_n I_n S(t)-\gamma_nI_n(t),
$$
or, by definition of $\mathcal{R}_0$, as 
$$
\frac{\dd}{\dd t} I_n (t) \le \gamma_n\left( \mathcal{R}_0\frac{\theta}{\Lambda}S(t)-1 \right)I_n(t).
$$
For all $n \in \mathbb{N}$ and $t\ge t_0(\varepsilon)$
$$
\frac{\dd}{\dd t} I_n (t) \le -m \gamma_n I_n(t),
$$
therefore
$$
0 \le \sum_{n \in \mathbb{N}}I_n(t) \le e^{-m \gamma_0 (t-t_0(\varepsilon))}I_n(t_0(\varepsilon)),
$$
which proves that $\lim_{t\to+\infty} \sum_{n=0}^{+\infty} I_n(t) = 0$.

Next, as $S(t)$ is bounded, we consider $\underline{S} := \liminf_{t\to+\infty} S(t)$.
Let  $(t_k)_{n\geq 0}$ be a sequence that tends to $\infty$ as $k\to\infty$ and such that $\lim_{k \to \infty} S'(t_k)=0$ and $\lim_{ k \to +\infty} S(t_k) = \underline{S}$. Since $(\beta_n)$ is bounded, from the following equality for all $k$
$$
\frac{\dd}{\dd t}S(t_k) 	= \Lambda -\theta S(t_k) - \sum_{n=0}^{+\infty}\beta_n S(t_k)I_n(t_k)
$$		
passing to the limit, we obtain $\liminf_{t\to\infty} S(t)=\frac{\Lambda}{\theta}$. Similarly, $\limsup_{t\to\infty} S(t)=\frac{\Lambda}{\theta}$, therefore
\begin{equation*}
	\lim_{t\to+\infty} S(t) = \frac{\Lambda}{\theta},
\end{equation*}
which concludes the first part of the proof of the proposition.	The second part of Proposition \ref{prop:Extinction} will be proved in section \ref{sec:R0=1}, because we need more tools to prove it.

\subsection{Proof of Theorem  \ref{thm:discrete}  }
\label{sec:discrete}

{
Before we start the proof of Theorem \ref{thm:discrete}, we introduce a few notions that will be useful along this Section. Indeed, in order to get a compactness of the orbits, we need to include our dynamical system in a larger space. We define the distance on $\mathbb{N}$:
\begin{equation*}
	d(n, m):= \left|\frac{1}{1+n}-\frac{1}{1+m}\right|+|\alpha_n-\alpha_m|+|\gamma_n-\gamma_m|, \text{ for all } n,m\in\mathbb{N}.
\end{equation*}
We let $\overline{\mathbb{N}}$ be the topological completion of $\mathbb{N}$ for the distance $d$. It is essentially the smallest closed set for the distance $d$ containing $\mathbb{N}$. Because of Assumption \ref{as:omega} the sets $\omega(\alpha)$ and $\omega(\gamma)$ are finite and we have, up to a topological isomorphism which we will omit in the rest of the proof,
\begin{equation*}
	\big(\overline{\mathbb{N}}, d\big)=\big([-K, +\infty)\cap \mathbb{Z}, d\big), 
\end{equation*}
wherein we have set $K:=\#\omega(\alpha)\times \#\omega(\gamma)$, 
\begin{equation*}
	\omega(\alpha)\times \omega(\gamma)=:\{(\alpha_{-i}, \gamma_{-i}), i=1, \ldots, K \}, 
\end{equation*}
and
\begin{equation*}
	d(n, m):= \left|\dfrac{\mathbbm{1}_{n\geq 0}}{1+n}-\dfrac{\mathbbm{1}_{m\geq 0}}{1+m}\right|+|\alpha_n-\alpha_m|+|\gamma_n-\gamma_m|, \text{ for all } n,m\in\mathbb{Z}\cap [-K, +\infty).
\end{equation*}
In particular $\overline{\mathbb{N}}$  is  Hausdorff and countable, which implies that the Borel $\sigma$-algebra is the set of all parts of $\overline{\mathbb{N}}$, and therefore any Borel measure $\mu\in\mathcal{M}\big(\overline{\mathbb{N}}\big)$ can be represented by a summable sequence:
\begin{equation*}
	\mu = \sum_{n=-K}^{+\infty} \mu_n \delta_n, 
\end{equation*}
where $\delta_n$ is the Dirac mass concentrated on $n\in\overline{\mathbb{N}}$ and $(\mu_n)_{n\geq -K}$ is a summable sequence of real numbers. Finally, $\overline{\mathbb{N}}$ is compact for the topology generated by $d$.%and in particular it is Polish (separable completely metrizable topological space).

In what follows we will obtain the compactness of the orbit by using the weak-$\ast$ topology on the space of measures $\mathcal{M}\big(\overline{\mathbb{N}}\big)$. To mark the difference, we will write $\mathcal{M}^\ast(\overline{\mathbb{N}})$, $\mathcal{M}^\ast_+(\overline{\mathbb{N}})$ instead of $\mathcal{M}\big(\overline{\mathbb{N}}\big)$, $\mathcal{M}_+(\overline{\mathbb{N}})$ when the space is equipped with the weak-$\ast$ topology. Recall the topology on $\mathcal{M}(\overline{\mathbb{N}}) $ is generated by the norm
\begin{equation*}
	\Vert (\mu)_{n\in\overline{\mathbb{N}}}\Vert_{\mathcal{M}(\overline{\mathbb{N}})}:=\sum_{n\in\overline{\mathbb{N}}} |\mu_n|, 
\end{equation*}
and that the topology on $\mathcal{M}^\ast(\overline{\mathbb{N}})$ is that of the weak-$\ast$ convergence: convergence of a sequence $\mu^n\rightharpoonup \mu $ for this topology holds if, and only if, 
\begin{equation*}
	\sum_{k=-K}^{+\infty} \varphi_k\mu^n_k\xrightarrow[n\to+\infty]{} \sum_{k=-K}^{+\infty} \varphi_k \mu_k, 
\end{equation*}
for all $(\varphi_n)\in C\big(\overline{\mathbb{N}}\big)$ the space of continuous sequences over $\overline{\mathbb{N}}$, which is characterized by 
\begin{equation*}
	\left[(\varphi_k)\in C\big(\overline{\mathbb{N}}\big)\right] \Longleftrightarrow \left[ \lim_{j\to+\infty} \varphi_{k_j} = \varphi_{-i} \text{ whenever } \alpha_{k_j}\to \alpha_{-i} \text{ and } \gamma_{k_j}\to \gamma_{-i} \text{ with } k_j\to+\infty \text{ and }i>0\right].
\end{equation*}
	 
In what follows we will consider equation \eqref{eq:main} with an initial data $(I_n)_{n\in\overline{\mathbb{N}}}\in\mathcal{M}_+\big(\overline{\mathbb{N}}\big)$. Because of our construction of $\overline{\mathbb{N}}$, Assumptions \ref{as:params} and \ref{as:omega} need not be adapted to the new framework. Assumption \ref{as:init}, however, does. Let us the replace Assumption \ref{as:init} with the following: 

\begin{assumption}\label{as:init-bis}  
	We let $S_0>0$, $(I_n^0)_{n\in\overline{\mathbb{N}}}\in \mathcal{M}_+(\overline{\mathbb{N}})$ be given and assume that there exists a sequence of indices $n_k\in\overline{\mathbb{N}}$ with 
	\begin{equation*}
		I_{n_k}^0>0 \text{ and } \lim_{k\to+\infty} \frac{\beta_{n_k}}{\gamma_{n_k}}=\alpha^*.
	\end{equation*}
\end{assumption}
\noindent As before the sequence of indices $(n_k)$ need not be strictly monotone and can be eventually stationary. 
%Let us precise that we will say that an  initial data will satisfy assumption \ref{as:init} if $I_n>0$ for all $n\geq 0$ but not necessarily for $n\in \{-K, \ldots, -1\}$ (we accept initial data with $I_{-n}=0$ for $1\leq n\leq K$). Since no notion of topology was used before, we note that Proposition \ref{prop:Cauchy}, Proposition \ref{prop:Extinction}, Lemma \ref{lem:bounds-nomut}, \ref{lem:limsup-no-mut} and \ref{lem:weak-persistent-nomut} can be applied without modification to a solution of \eqref{eq:main} with an initial data $(I_n)\in \mathcal{M}(\overline{\mathbb{N}})$, replacing the symbols $\displaystyle\sum_{n\in\mathbb{N}}$ by $\displaystyle\sum_{n\in\overline{\mathbb{N}}}$.
}

The following lemma holds true.
\begin{lemma}\label{lem:bounds-nomut}  
	Suppose that Assumption \ref{as:params} holds true. Then we have
	\begin{gather*}
		0<\frac{\min(\theta, \gamma_0)}{\theta\Lambda\min(\theta, \gamma_0)+\beta^\infty}\leq \liminf_{t\to+\infty}S(t)\leq \limsup_{t\to+\infty}S(t)\leq \frac{\Lambda}{\theta}<+\infty,\\
		\limsup_{t\to+\infty}\sum _{n \in \overline{\mathbb{N}}}I_n(t) \leq \frac{\Lambda}{\min(\theta, \gamma_0)}<+\infty.
	\end{gather*} 
\end{lemma}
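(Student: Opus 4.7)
The plan is to break the lemma into three independent bounds and show that two of them are essentially immediate corollaries of the a priori bounds in Lemma \ref{lem:apriori-bounds}, while the lower bound on $\liminf_{t\to\infty} S(t)$ is the only genuinely new estimate to establish.

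First, I would handle the two upper bounds. For $\limsup S(t) \leq \Lambda/\theta$, I would simply pass to the limit superior in the inequality $S(t) \leq S_0 e^{-\theta t} + \frac{\Lambda}{\theta}(1-e^{-\theta t})$ from Lemma \ref{lem:apriori-bounds}. For the upper bound on $\limsup \sum_{n\in\overline{\mathbb{N}}} I_n(t)$, I would pass to the limit superior in the inequality
\begin{equation*}
    S(t) + \sum_{n\in\overline{\mathbb{N}}} I_n(t) \leq \frac{\Lambda}{\min(\theta,\gamma_0)} + \left(S_0 + \sum_{n\in\overline{\mathbb{N}}} I_n^0 - \frac{\Lambda}{\min(\theta,\gamma_0)}\right) e^{-\min(\theta,\gamma_0) t},
\end{equation*}
and use nonnegativity of $S$ to drop it on the left-hand side.

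The main step is the strictly positive lower bound for $\liminf S(t)$. The idea is to use the $\ell^1$ bound on $\sum I_n(t)$ we have just obtained to turn the $S$-equation into a scalar linear differential inequality with positive source term. Fix $\varepsilon>0$; by the previous step there exists $t_\varepsilon>0$ such that $\sum_{n\in\overline{\mathbb{N}}} I_n(t) \leq \frac{\Lambda}{\min(\theta,\gamma_0)} + \varepsilon$ for all $t\geq t_\varepsilon$. Using $\beta_n\leq \beta^\infty$ for every $n$, the first equation in \eqref{eq:main-a} yields, for all $t\geq t_\varepsilon$,
\begin{equation*}
    S'(t) \geq \Lambda - \left(\theta + \beta^\infty\left(\frac{\Lambda}{\min(\theta,\gamma_0)} + \varepsilon\right)\right) S(t).
\end{equation*}
A standard comparison argument with the constant solution $\Lambda/\bigl(\theta+\beta^\infty(\frac{\Lambda}{\min(\theta,\gamma_0)}+\varepsilon)\bigr)$ of the associated linear ODE then gives
\begin{equation*}
    \liminf_{t\to+\infty} S(t) \geq \frac{\Lambda}{\theta + \beta^\infty\bigl(\frac{\Lambda}{\min(\theta,\gamma_0)} + \varepsilon\bigr)}.
\end{equation*}
Letting $\varepsilon\to 0$ produces the positive lower bound announced in the lemma (after clearing denominators).

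No step is really hard here; the only point that requires a tiny bit of care is making sure one first derives the $\ell^1$-bound on $\sum I_n(t)$ before using it to close the estimate on $S(t)$, so the two inequalities are proved in the correct order. Everything else is an application of Lemma \ref{lem:apriori-bounds} and a scalar comparison principle.
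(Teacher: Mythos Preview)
Your proposal is correct and follows essentially the same approach as the paper: the upper bounds are read off from Lemma \ref{lem:apriori-bounds}, and the lower bound on $\liminf_{t\to\infty} S(t)$ is obtained by bounding $\sum_n \beta_n I_n(t)\le \beta^\infty\bigl(\Lambda/\min(\theta,\gamma_0)+\varepsilon\bigr)$ for large $t$, using the resulting linear differential inequality for $S$, and letting $\varepsilon\to 0$. The order in which you derive the estimates and the comparison argument match the paper's proof exactly.
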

\begin{proof} 
	The upper bounds are proved as in  Lemma \ref{lem:apriori-bounds}.
	Next we return to the $S$-component of equation \eqref{eq:main} and let $\varepsilon>0$ be given. We have, for $t_0$ sufficiently large and $t\geq t_0$,  
	\begin{equation*} 
		S_t = \Lambda -\left( \theta + \sum _{n \in \overline{\mathbb{N}}} \beta_n I_n(t) \right) S(t)\geq \Lambda - \left(\theta+\beta^\infty\frac{\Lambda}{\min(\theta, \gamma_0)}+\varepsilon\right)S(t),
	\end{equation*}
	therefore 
	\begin{equation*}
		S(t)\geq e^{-\left(\theta+\frac{\Lambda\beta^\infty}{\min(\theta, \gamma_0)}+\varepsilon\right)(t-t_0)}S(t_0)+\frac{\Lambda\min(\theta, \gamma_0)}{(\theta+\varepsilon)\min(\theta, \gamma_0)+\Lambda\beta^\infty}\left(1-e^{-\left(\theta+\frac{\Lambda\beta^\infty}{\min(\theta, \gamma_0)}+\varepsilon\right)(t-t_0)}\right),
	\end{equation*}
	so that finally by letting $t\to+\infty$ we get
	\begin{equation*}
		\liminf_{t\to+\infty} S(t) \geq \frac{\min(\theta, \gamma_0)\Lambda}{(\theta+\varepsilon)\min(\theta, \gamma_0)+\Lambda\beta^\infty}.
	\end{equation*}
	Since $\varepsilon>0$ is arbitrary we have shown
	\begin{equation*}
		\liminf_{t\to+\infty} S(t) \geq \frac{\min(\theta, \gamma_0)\Lambda}{\theta\min(\theta, \gamma_0)+\Lambda\beta^\infty}.
	\end{equation*}
	The Lemma is proved.
\end{proof} %

\begin{lemma}\label{lem:limsup-no-mut} 
	Suppose that the Assumptions \ref{as:params} and \ref{as:init-bis} hold true.
	Let $\big(S(t), I_i(t)\big)$ be the corresponding solution of \eqref{eq:main}.
	Then
	\begin{equation*}
		\limsup_{T\to+\infty}\frac{1}{T}\int_0^T S(t)\dd t \leq \dfrac{1}{\alpha^*}.
	\end{equation*}
\end{lemma}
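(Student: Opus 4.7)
The plan is to exploit the explicit formula \eqref{eq:I_n} for $I_n(t)$ together with the uniform $\ell^1$ bound on the infected population obtained in Lemma \ref{lem:bounds-nomut}. The idea is that each individual coordinate $I_{n}(t)$ must remain bounded, which forces its exponent $t(\beta_n\overline{S}(t)-\gamma_n)$ to grow at most logarithmically; testing this against the maximizing sequence $n_k$ of Assumption \ref{as:init-bis} then yields the desired bound on $\overline{S}(t)$.

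More precisely, first I will invoke Lemma \ref{lem:bounds-nomut} to fix a constant $M>0$ and a time $t_0>0$ such that $\sum_{n\in\overline{\mathbb{N}}} I_n(t)\le M$ for every $t\ge t_0$, so that in particular $I_n(t)\le M$ for every index $n$ and every $t\ge t_0$. Next, for each index $n_k$ of the sequence furnished by Assumption \ref{as:init-bis} (so that $I_{n_k}^0>0$ and $\alpha_{n_k}=\beta_{n_k}/\gamma_{n_k}\to\alpha^*$), formula \eqref{eq:I_n} reads
\begin{equation*}
    I_{n_k}^0\,e^{t(\beta_{n_k}\overline{S}(t)-\gamma_{n_k})}=I_{n_k}(t)\le M,\qquad t\ge t_0.
\end{equation*}
Since $I_{n_k}^0>0$, I may take the logarithm and divide by $t\beta_{n_k}>0$ to obtain
\begin{equation*}
    \overline{S}(t)\le \frac{1}{\alpha_{n_k}}+\frac{1}{t\beta_{n_k}}\ln\!\left(\frac{M}{I_{n_k}^0}\right),\qquad t\ge t_0.
\end{equation*}

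Passing to $\limsup$ as $t\to+\infty$ kills the second term and gives $\limsup_{t\to+\infty}\overline{S}(t)\le 1/\alpha_{n_k}$; then letting $k\to+\infty$ and using $\alpha_{n_k}\to\alpha^*$ yields the claimed inequality $\limsup_{t\to+\infty}\overline{S}(t)\le 1/\alpha^*$. There is no real obstacle here: the only mild point of care is that we must work along the admissible subsequence $(n_k)$ provided by Assumption \ref{as:init-bis} (so that the initial datum is positive and we can legitimately take logarithms), rather than along an arbitrary maximizing sequence of $\beta_n/\gamma_n$ which need not be supported by the initial data.
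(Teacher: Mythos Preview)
Your proof is correct and uses essentially the same ingredients as the paper's: the explicit formula \eqref{eq:I_n}, the $\ell^1$ bound from Lemma \ref{lem:bounds-nomut}, and the maximizing sequence $(n_k)$ from Assumption \ref{as:init-bis}. The only difference is stylistic---the paper argues by contradiction (assuming $\overline{S}(T_n)\ge \tfrac{1}{\alpha^*}+\varepsilon$ along a subsequence and showing some $I_k(T_n)\to+\infty$), whereas you proceed directly by bounding $\overline{S}(t)$ for each fixed $k$ and then letting $k\to\infty$; your version is slightly cleaner.
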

\begin{proof}
	Let us remark that the second component of \eqref{eq:main} for any $n \in \mathbb N$ can be written as
	\begin{align}
		I_n(t) &= I_0^n e^{\beta_n\int_0^tS(s)\dd s-\gamma_nt},\notag \\
		&=I_0^n  \exp\left(\gamma_n\int_0^t S(s)\dd s\left[\frac{\beta_n}{\gamma_n}-\frac{t}{\int_0^tS(s)\dd s}\right]\right)\label{eq:I-nomut}.
	\end{align}
	Assume by contradiction that the conclusion of the Lemma does not hold, {\it i.e.} there exists $\varepsilon>0$ and a sequence $T_n\to +\infty$ such that 
	\begin{equation*}
	    \frac{1}{T_n}\int_0^{T_n}S(t)\dd t\geq  \dfrac{1}{\alpha^*}+\varepsilon \text{ for all }n\in\mathbb{N}.
	\end{equation*}
	Then
	\begin{equation*}
		\frac{T_n}{\int_0^{T_n}S(t)\dd t}\leq  \frac{1}{\dfrac{1}{\alpha^*}+\varepsilon} \leq \alpha^*-\varepsilon',
	\end{equation*}
	for some $\varepsilon'>0$. Since  $\alpha^*=\sup_n \frac{\beta_n}{\gamma_n}$ there exists some $k \in \mathbb N$ with $I^0_k>0$ such that $\frac{\beta_k}{\gamma_k}-\alpha^*+\varepsilon'>0$, and 
	\begin{align*}
		\sum_{ i \in \mathbb{N}} I_i(T_n) & \ge I_0^k  \exp\left(\gamma_k\int_0^{T_n} S(s)\dd s\left[\frac{\beta_k}{\gamma_k}-\frac{T_n}{\int_0^{T_n} S(s)\dd s}\right]\right) \\
		& \ge I_0^k  \exp\left(\gamma_k \left( \alpha_k - \alpha^* + \varepsilon' \right)\int_0^{T_n} S(s)\dd s \right).
	\end{align*} 
	Since   $\int_0^{T_n}S(t)\dd t \to +\infty$ when $n\to+\infty$, we have therefore 
	\begin{equation*}
		\limsup_{t\to+\infty} \sum _{i \in \overline{\mathbb{N}}}I_i(t) \geq \limsup_{n\to+\infty}\sum _{i \in \overline{\mathbb{N}}}I_i(T_n) = +\infty,
	\end{equation*}
	which is a contradiction since $I_n(t) $ is bounded in $\mathcal{M}_+(\overline{\mathbb{N}})$ by Lemma \ref{lem:bounds-nomut}.
	This completes the proof of the Lemma.
\end{proof}
{Remark that, in the Lemma above, Assumption \ref{as:init-bis} is essential. Indeed, were this assumption not true, we could not guarantee that the index $k$ defined in the proof corresponds to a strictly positive $I_k^0$, hence the contradiction would not be guaranteed either. }

The following weak persistence property holds.
\begin{lemma}\label{lem:weak-persistent-nomut} 
	Suppose that Assumptions \ref{as:params} and \ref{as:init-bis} hold true.
	Let $\big(S(t), I_n(t)\big)$ be the corresponding solution of \eqref{eq:main}.
	Then 
	\begin{equation}\label{eq:weak-persistence}
		\limsup_{t\to+\infty} \sum _{n \in \overline{\mathbb{N}}}I_n(t)  \ge \frac{\theta}{\beta^\infty} \big( \mathcal{R}_0-1 \big) >0. 
	\end{equation}
\end{lemma}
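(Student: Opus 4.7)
The plan is to combine the upper bound on $\bar S$ from Lemma \ref{lem:limsup-no-mut} with Jensen's inequality, and then read the persistence estimate directly off the first equation of \eqref{eq:main-a}. Rewriting that equation (using $S(t)>0$) in the form
\begin{equation*}
    \sum_{n\in\overline{\mathbb{N}}} \beta_n I_n(t) = \frac{\Lambda}{S(t)} - \theta - \frac{\dd}{\dd t}\ln S(t),
\end{equation*}
integrating on $[0,T]$ and dividing by $T$ yields
\begin{equation*}
    \frac{1}{T}\int_0^T \sum_{n\in\overline{\mathbb{N}}} \beta_n I_n(t)\,\dd t = \frac{\Lambda}{T}\int_0^T \frac{\dd t}{S(t)} - \theta - \frac{\ln S(T)-\ln S_0}{T}.
\end{equation*}
Lemma \ref{lem:bounds-nomut} ensures that $S(t)$ stays in a compact subinterval of $(0,+\infty)$ for $t$ large, so the last term tends to $0$ as $T\to+\infty$.

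Next I apply Jensen's inequality to the convex function $x\mapsto 1/x$ on $(0,+\infty)$, which gives $\frac{1}{T}\int_0^T \frac{\dd t}{S(t)} \geq \frac{1}{\bar S(T)}$. Since Lemma \ref{lem:limsup-no-mut} provides $\limsup_{T\to+\infty} \bar S(T)\leq 1/\alpha^*$, we have $\liminf_{T\to+\infty} 1/\bar S(T) \geq \alpha^*$. Combined with the previous identity, this gives
\begin{equation*}
    \liminf_{T\to+\infty} \frac{1}{T}\int_0^T \sum_{n\in\overline{\mathbb{N}}}\beta_n I_n(t)\,\dd t \;\geq\; \Lambda\alpha^*-\theta \;=\; \theta(\mathcal{R}_0-1) \;>\; 0.
\end{equation*}

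An elementary Cesàro-type argument upgrades this bound on the average into a pointwise estimate: if $\limsup_{t\to+\infty} \sum_n\beta_n I_n(t) < \theta(\mathcal{R}_0-1)$, then, since $\sum_n\beta_n I_n(t)$ is bounded (thanks to Lemma \ref{lem:bounds-nomut}), the running average would eventually be strictly smaller than $\theta(\mathcal{R}_0-1)$, contradicting the liminf just obtained. Therefore $\limsup_{t\to+\infty}\sum_{n\in\overline{\mathbb{N}}} \beta_n I_n(t) \geq \theta(\mathcal{R}_0-1)$, and since $\beta_n\leq \beta^\infty$, we conclude
\begin{equation*}
    \limsup_{t\to+\infty}\sum_{n\in\overline{\mathbb{N}}}I_n(t) \;\geq\; \frac{\theta}{\beta^\infty}(\mathcal{R}_0-1),
\end{equation*}
which is \eqref{eq:weak-persistence}. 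The only non-routine ingredient is the Jensen inequality, which turns the upper bound on $\bar S$ of Lemma \ref{lem:limsup-no-mut} into a lower bound on the time-average of $1/S$; beyond this, the argument is entirely algebraic and I do not anticipate any significant obstacle.
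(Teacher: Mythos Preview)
Your argument is correct and takes a genuinely different route from the paper. The paper proceeds by contradiction via the fluctuation method: assuming $\sum_n I_n(t)\leq\eta<\frac{\theta}{\beta^\infty}(\mathcal R_0-1)$ for large $t$, it picks a sequence $t_n$ with $S'(t_n)\to 0$ and $S(t_n)\to\underline S:=\liminf S$, reads off $\underline S\geq \Lambda/(\theta+\beta^\infty\eta)>1/\alpha^*$ from the $S$-equation, and contradicts $\underline S\leq \limsup_T\bar S(T)\leq 1/\alpha^*$ (Lemma~\ref{lem:limsup-no-mut}). Your approach instead divides the $S$-equation by $S$, integrates, and uses Jensen to convert the upper bound on $\bar S$ into a lower bound on the time-average of $1/S$, hence on the time-average of $\sum_n\beta_n I_n$; a Cesàro step then gives the limsup bound. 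Your method is arguably more direct---it avoids selecting a special sequence and yields the slightly stronger intermediate statement $\liminf_{T}\frac{1}{T}\int_0^T\sum_n\beta_n I_n\,\dd t\geq\theta(\mathcal R_0-1)$---while the paper's version stays closer to the pointwise $S$-equation and makes the role of $\liminf S$ explicit, which is reused elsewhere in Section~\ref{sec:discrete}.
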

\begin{proof} 
	Let us recall that we made the hypothesis that
	\begin{equation*}
		{\mathcal R}_0  = \frac{\Lambda}{\theta} \alpha^*   >1.
	\end{equation*}
	Assume by contradiction that  for  $t_0$ sufficiently large we have
	\begin{equation*}
		\sum_{i \in \overline{\mathbb{N}}} I_i(t)   \le  \eta < \frac{\theta}{\beta^\infty} \big({\mathcal R}_0 -1 \big) \text{ for all }t\geq t_0,
	\end{equation*}
	with $\eta>0$.

	As a consequence of Lemma \ref{lem:limsup-no-mut} we have 
	\begin{equation}\label{eq:infS1}
		\liminf_{t\to+\infty} S(t)\leq \limsup_{T\to+\infty}\frac{1}{T}\int_0^T S(t)\dd t \leq \frac{1}{\alpha^*}.
	\end{equation}

	Let $\underline{S} := \liminf_{t\to+\infty} S(t)$.
	Let  $(t_n)_{n\geq 0}$ be a sequence that tends to $\infty$ as $n\to\infty$ and such that $\lim_{n \to +\infty} S'(t_n)=0$ and $\lim_{ n \to +\infty} S(t_n) = \underline{S}$. As $\sum _{i \in \overline{\mathbb{N}}}I_i(t_n)  \leq \eta$ for $n$ large enough we deduce from the equality
	\begin{equation*}
		S'(t_n)=\Lambda - \theta S(t_n) - S(t_n) \sum_{i \in \overline{\mathbb{N}}}  \beta_i I_i(t_n),
	\end{equation*}
	that
	\begin{equation*} 
		0 \ge \Lambda - \theta \underline{S} - \underline{S} \beta^\infty \eta
	\end{equation*}
	so that
	\begin{equation*}
		\underline{S} \ge  \frac{\Lambda}{\theta+ \beta^\infty\eta} > \frac{\Lambda}{\theta+ \beta^\infty\eta} =\frac{\Lambda}{\theta {\mathcal R}_0} = \frac{1}{\alpha^*},
	\end{equation*}
	which contradicts \eqref{eq:infS1}.
\end{proof}

{
\begin{proposition}[Compactness of the orbit and concentration]\label{prop:compactness}
	Suppose that Assumptions \ref{as:params}, \ref{as:init-bis} and \ref{as:omega} hold true. %We equip $ \mathcal{M}_+\big(\overline{\mathbb{N}}\big) $ with the weak-$\ast$ topology.
	Then, the map $t\mapsto \big(S(t), (I_n(t))_{n\in\overline{\mathbb{N}}}\big)\in\mathbb{R}\times \mathcal{M}^\ast_+\big(\overline{\mathbb{N}}\big) $ is continuous and the corresponding  orbit, 
	\begin{equation*}
		O\left(S_0, (I^0_n)_{n\in\overline{\mathbb{N}}}\right):=\left\{\left(S(t), \big(I_n(t)\big)_{n\in\overline{\mathbb{N}}}\right)\,:\,t\geq 0\right\}\subset\mathbb{R}\times\mathcal{M}^\ast_+\big(\overline{\mathbb{N}}\big),
	\end{equation*}
	is precompact for the weak-$\ast$ topology. Moreover  if  $\mathcal{R}_0 >1$ and if $t_k\to+\infty$ is an arbitrary sequence along which
	\begin{equation*}
		\liminf_{k\to+\infty} \sum_{n=-K}^{+\infty} I_n(t_k)>0, 
	\end{equation*}
	then one can extract from $(t_k)$ a subsequence $(t_{k_i})$ such that the shifted orbits
	\begin{equation*}
		t\mapsto \left( S(t+t_{k_i}), (I_n(t+t_{k_i}))_{n\in\overline{\mathbb{N}}}\right)
	\end{equation*}
	converge pointwise in $\mathbb{R}\times \mathcal{M}^\ast_+\big(\overline{\mathbb{N}}\big) $ to a complete orbit $\left(S^\infty(t), \big(I_n^\infty(t)\big)_{n\in\overline{\mathbb{N}}}\right)$ that satisfies the following properties:
	\begin{equation}\label{eq:compacness-limit-non-trivial}
		\sum_{n=-K}^{+\infty} I^\infty_n(t) >0 \text{ and }S^\infty(t)>0 \text{  for all }t\in\mathbb{R}, 
	\end{equation}
	and 
	\begin{equation}\label{eq:concentration}
		\sum_{\{n\,:\,\alpha_n<\alpha^*\}} I_n^\infty(t) = 0\text{  for all  }t\in\mathbb{R}.
	\end{equation}
	Finally, the convergence $S(t_{k_i}+t)\to S^\infty(t)$ is locally uniform in $C^1(\mathbb{R})$.
\end{proposition}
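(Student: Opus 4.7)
The plan is to construct $(S^\infty, \mu^\infty)$ as a weak-$\ast$ limit of shifted orbits via a compactness and diagonal extraction argument, and then to identify its qualitative properties using the a priori estimates of Lemmas \ref{lem:bounds-nomut} and \ref{lem:limsup-no-mut}. First, the continuity of $t \mapsto \mu_t := \sum_n I_n(t)\delta_n$ into $\mathcal{M}^\ast_+(\overline{\mathbb{N}})$ is immediate from Proposition \ref{prop:Cauchy}, since strong convergence of $(I_n(t))$ in $\ell^1$ yields $\int \varphi \, d\mu_t \to \int \varphi \, d\mu_{t_0}$ for every $\varphi \in C(\overline{\mathbb{N}}) \subset \ell^\infty$. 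Precompactness of the orbit then follows from Banach--Alaoglu: Lemma \ref{lem:bounds-nomut} gives a uniform total-mass bound, and since $\overline{\mathbb{N}}$ is compact, $C(\overline{\mathbb{N}})$ is separable and bounded subsets of $\mathcal{M}^\ast(\overline{\mathbb{N}})$ are weak-$\ast$ metrizable and sequentially compact.

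Next, I would extract the limit orbit using Ascoli--Arzela together with a diagonal argument. Both $S$ and $S'$ are uniformly bounded (Lemma \ref{lem:bounds-nomut} and the $S$-equation), so up to a subsequence, $S(t_{k_i}+\cdot)$ converges locally uniformly in $C^0(\mathbb{R})$ to some continuous $S^\infty$. Each individual $I_n(t_{k_i}+\cdot)$ also has a uniformly bounded derivative via $\partial_t I_n = (\beta_n S-\gamma_n)I_n$, so a diagonal extraction produces a locally uniform limit $I_n(t_{k_i}+\cdot)\to I_n^\infty$ for every $n\in\mathbb{N}$. A further diagonal extraction over a countable dense subset of $C(\overline{\mathbb{N}})$ then guarantees, for every $\varphi\in C(\overline{\mathbb{N}})$, the locally uniform convergence of $t\mapsto \int \varphi\,d\mu_{t_{k_i}+t}$ to the corresponding integral against a measure $\mu^\infty_t$; writing $\mu^\infty_t = \sum_{n\in\overline{\mathbb{N}}} I_n^\infty(t)\delta_n$ defines $I_n^\infty$ on phantom indices as well.

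The system \eqref{eq:main} is inherited by the limit orbit by passing to the limit in its integrated form. The key observation is that $\beta_n = \alpha_n\gamma_n$ extends continuously to $\overline{\mathbb{N}}$ by the construction of the topological completion (both $\alpha$ and $\gamma$ are continuous on $\overline{\mathbb{N}}$), so weak-$\ast$ convergence yields $\sum \beta_n I_n(t_{k_i}+s)\to \sum \beta_n I_n^\infty(s)$ for every $s$; uniform boundedness of the integrand and Lebesgue's dominated convergence theorem then give the ODE for $S^\infty$. The local $C^1$ convergence claimed in the last statement follows because $\sum \beta_n I_n(t_{k_i}+\cdot)$ is uniformly Lipschitz in $i$. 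The ODE for each $I_n^\infty$ at a proper index is obtained by testing with $\delta_n$, which is continuous on $\overline{\mathbb{N}}$ since $\{n\}$ is open there.

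The hardest step is the concentration \eqref{eq:concentration}, because for a phantom index $n\in\{-K,\dots,-1\}$ the singleton $\{n\}$ is not open in $\overline{\mathbb{N}}$. I would argue as follows. For any $n\in\overline{\mathbb{N}}$ with $\alpha_n<\alpha^*$, continuity of $\alpha$ on $\overline{\mathbb{N}}$ yields an open neighborhood $V$ of $n$ and $\delta>0$ such that $\alpha_m\leq \alpha^*-\delta$ for all $m\in V$; Urysohn's lemma then furnishes $\varphi\in C(\overline{\mathbb{N}})$ with $\mathbbm{1}_{\{n\}}\leq \varphi\leq \mathbbm{1}_V$. Combining the explicit formula \eqref{eq:I_n} with Lemma \ref{lem:limsup-no-mut}, one sees that for $t$ large, $\alpha_m\overline{S}(t)-1\leq -\delta/(2\alpha^*)$ uniformly for $m\in V\cap\mathbb{N}$, whence $I_m(t)$ decays exponentially to zero uniformly in $V\cap \mathbb{N}$, giving $\mu_{t_{k_i}+t}(V)\to 0$ and therefore $\mu^\infty_t(\{n\})\leq \int\varphi\,d\mu^\infty_t = 0$. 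Positivity of $\sum I^\infty_n(t)$ at $t=0$ is obtained by testing against the constant $\mathbbm{1}\in C(\overline{\mathbb{N}})$ along a subsubsequence on which $\sum_n I_n(t_{k_i})$ converges, which gives $\sum_n I^\infty_n(0)\geq \liminf_k \sum_n I_n(t_k)>0$; it extends to all $t\in\mathbb{R}$ via the Gronwall bound $\sum_n I^\infty_n(t)\geq \sum_n I^\infty_n(0)\,e^{-\gamma^\infty|t|}$ derived from the ODE. Finally, positivity of $S^\infty(t)$ for every $t\in\mathbb{R}$ follows directly by passing the uniform lower bound $\liminf_{s\to\infty}S(s)>0$ of Lemma \ref{lem:bounds-nomut} to the limit.
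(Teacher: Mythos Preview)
Your argument is correct and complete, but it takes a more abstract and general route than the paper's own proof. The paper exploits the explicit integration formula
\[
I_n(t_k+t)=e^{\gamma_n\big(\alpha_n\int_{0}^{t}S(t_k+\sigma)\,\dd\sigma-t\big)}\,I_n(t_k)
\]
together with the observation that $(\alpha_n,\gamma_n)\in C(\overline{\mathbb N})$ by construction of the completion; the multiplier is therefore a continuous function of $n$, and a single weak-$\ast$ extraction of $(I_n(t_k))$ plus the $C^1_{\mathrm{loc}}$ compactness of $S$ (obtained from a direct bound on $S''$) immediately yields the limit orbit for all $n\in\overline{\mathbb N}$, including phantom indices, with no need for Urysohn or a second diagonal pass over $C(\overline{\mathbb N})$. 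The concentration property is then read off from the same formula by the one-line estimate $\sum_{\{\alpha_n<\alpha^*-\varepsilon\}}I_n(t_k)\le e^{-\gamma_0\varepsilon\underline S t_k/2}\sum_n I_n^0$.

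Your approach---Ascoli--Arzel\`a on $S$ and each $I_n$ separately, a diagonal over a countable dense set of test functions, Urysohn to handle phantom indices---is what one would do for a general nonlocal competition system without an explicit solution formula, and is in that sense more robust. The price is length and some bookkeeping (e.g.\ verifying the ODE at phantom indices, where $\delta_{-i}\notin C(\overline{\mathbb N})$, is not strictly needed once you prove \eqref{eq:concentration}, but would require a further limiting argument if you wanted it in full). The paper's route is shorter precisely because the multiplier in \eqref{eq:I_n} already lives in $C(\overline{\mathbb N})$, so the weak-$\ast$ limit and the complete-orbit structure come in one stroke.
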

\begin{proof}
	First of all let us remark that 
	\begin{equation*}
		I_n(t) = e^{\beta_n\int_0^t S(s)\dd s-\gamma_nt}I^0_n= e^{\gamma_n\left(\alpha_n\int_0^t S(s)\dd s-t\right)}I^0_n.
	\end{equation*} 
	Since $\alpha_n$ and $\gamma_n$ are in $C\big(\overline{\mathbb{N}}\big)$, the map $t\mapsto e^{\gamma_n\left(\alpha_n\int_0^t S(s)\dd s-t\right)} $ is continuous from $\mathbb{R}^+ $ to $C\big(\overline{\mathbb{N}}\big)$, and therefore $t\mapsto (I_n(t))_{n\in\overline{\mathbb{N}}}$ is continuous from $\mathbb{R}$ to $\mathcal{M}_+^\ast\big(\overline{\mathbb{N}}\big)$. Thus $t\to\big(S(t),(I_n(t))_{n\in\overline{\mathbb{N}}}\big)$ is indeed continuous from $\mathbb{R}^+$ to $\mathbb{R}\times\mathcal{M}^\ast_+(\overline{\mathbb{N}})$. Since $S(t)$ is uniformly bounded and $(I_n(t))_{n\in\overline{\mathbb{N}}}$ is uniformly bounded in total variation norm by Lemma \ref{lem:bounds-nomut}, it follows from the Prohorov Theorem \parencite[Vol. II Theorem 8.6.2 p. 202]{Bog-07} that the orbit is precompact in $\mathbb{R}\times\mathcal{M}^\ast_+\big(\overline{\mathbb{N}}\big)$.

	Let $t_k$ be a sequence such that $t_k\to+\infty$. 	Because $\sum_{n\in\overline{\mathbb{N}}} I_n(t_k)$ is bounded, we can extract from $(t_k)$ a subsequence, still denotes $t_k$, along which 
	\begin{equation*}
		S(t_k)\to S^\infty_0 \text{ and } (I_n(t_k))_{n\in\overline{\mathbb{N}}}\rightharpoonup (I_n^{\infty, 0})_{n\in\overline{\mathbb{N}}}.
	\end{equation*}
	We remark that $S'(t+t_k)$ is bounded thanks to Lemma \ref{lem:bounds-nomut} and
	\begin{align*} 
		S''(t+t_k) &= -\theta S'(t+t_k) - S'(t+t_k) \sum_{n\in\overline{\mathbb{N}}} \beta_n I_n(t+t_k) - S(t+t_k) \sum_{n\in\overline{\mathbb{N}}} \beta_n\left(\beta_nS(t+t_k)-\gamma_n\right) I_n(t+t_k) 
	\end{align*}
	is also bounded, locally uniformly in $t$. Thus up to a further extraction and diagonal extraction process, the shifted orbit $S(t+t_{k})$ converges locally uniformly in $C^1(\mathbb{R})$.
	Because
	\begin{equation*}
		I_n(t_k+t) = e^{\gamma_n\left(\alpha_n\int_{t_k}^{t}S(t_k+\sigma)\dd\sigma - (t-t_k)\right)} I_n(t_k), \qquad n\in\overline{\mathbb{N}}, 
	\end{equation*}
	we can pass to the weak-$\ast$ limit in the above formula and we get
	\begin{equation}\label{eq:prop-compactness-formula-In}
		I_n^\infty(t):= e^{\gamma_n\left(\alpha_n\int_{0}^{t}S^\infty(\sigma)\dd\sigma - t\right)} I^{\infty, 0}_n, \qquad n\in\overline{\mathbb{N}}, t\in\mathbb{R},
	\end{equation}
	so that $(S^\infty(t), (I_n^\infty(t))_{n\in\overline{\mathbb{N}}})$ is a solution of \eqref{eq:main} with starting from the initial data $(S^\infty_0, (I^{\infty, 0}_n))$ and with $t\in\mathbb{R}$: a complete orbit. That $S^\infty(t)>0$ is a consequence of Lemma \ref{lem:bounds-nomut}. Since {the constant function} $n\mapsto 1\in C\big(\overline{\mathbb{N}}\big)$ we have
	\begin{equation*}
		\sum_{n\in\overline{\mathbb{N}}} I_n^{\infty, 0} = \sum_{n\in\overline{\mathbb{N}}}1\cdot I_n^{\infty, 0} =  \lim_{k\to+\infty} \sum_{n\in\overline{\mathbb{N}}} 1\cdot I_n(t_k) 
	\end{equation*}
	so if 
	\begin{equation*}
		\liminf_{k\to+\infty} \sum_{n\in\overline{\mathbb{N}}}I_n(t_k)>0,
	\end{equation*}
	then we have
	\begin{equation*}
		\sum_{n\in\overline{\mathbb{N}}}I^{\infty, 0}_n>0,
	\end{equation*}
	and thanks to \eqref{eq:prop-compactness-formula-In}, 
	\eqref{eq:compacness-limit-non-trivial} is proved. 

	Next we show the concentration property \eqref{eq:concentration}. Recalling that $\liminf_{t\to+\infty}\frac{1}{t}\int_0^tS(s)\dd s \geq \underline{S}$ for a positive constant $\underline{S}>0$  and $\limsup_{t\to+\infty} \frac{1}{t}\int_0^t S(s)\dd s \leq \frac{1}{\alpha^*}$, we have for $t_k$ sufficiently large $\frac{1}{t_k}\int_0^{t_k}S(s)\dd s\leq \frac{1}{\alpha^*}+\frac{\varepsilon\underline{S}}{2\alpha^*}$ and thus
	\begin{align*}
		\sum_{\{n\,:\,\alpha_n<\alpha^*-\varepsilon\}}I_n(t_k) &= \sum_{\{n\,:\,\alpha_n<\alpha^*-\varepsilon\}} I_n^0 \exp\left[\gamma_nt\left(\alpha^*\frac{1}{t_k}\int_0^{t_k}S(\sigma)\dd \sigma - 1-\big(\alpha^*-\alpha_n\big)\frac{1}{t_k}\int_0^{t_k}S(\sigma)\dd \sigma\right)\right]\\ 
		&\leq\sum_{\{n\,:\,\alpha_n<\alpha^*-\varepsilon\}} I_n^0 \exp\left[\gamma_nt\left(\alpha^*\left(\frac{1}{\alpha^*}+\frac{\varepsilon\underline{S}}{2\alpha^*}\right)- 1-\varepsilon\underline{S}\right)\right] \\ 
		&=e^{-\gamma_0\frac{\varepsilon}{2}\underline{S}t}\sum_{n\in\overline{\mathbb{N}}} I_n^0 \xrightarrow[k\to+\infty]{}0,
	\end{align*}
	Since $\varepsilon>0$ is arbitrary, \eqref{eq:concentration} is proved.

	 This finishes the proof of Proposition \ref{prop:compactness}.
\end{proof}
 
We are now ready to state our uniform persistence result. 
\begin{proposition}[Uniform persistence]\label{prop:unifpers-nomut}
	Suppose that Assumptions \ref{as:params}, \ref{as:init-bis} and \ref{as:omega} hold true, and that $\mathcal{R}_0 >1$. Then, 
	\begin{equation}\label{eq:strong-persistence}
		\liminf_{t\to+\infty} \sum_{n \in \overline{\mathbb N}} I_n(t) >0.
	\end{equation}
\end{proposition}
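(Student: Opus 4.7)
My plan is to argue by contradiction and exploit the structure of the $\omega$-limit set in $X := \mathbb{R}\times\mathcal{M}^*_+(\overline{\mathbb N})$ provided by Proposition~\ref{prop:compactness}, in the spirit of the Butler--McGehee lemma. Suppose $\liminf_{t\to+\infty}\sum_{n\in\overline{\mathbb N}} I_n(t)=0$; combined with Lemma~\ref{lem:weak-persistent-nomut} this also gives $\limsup_{t\to+\infty}\sum_n I_n(t)\geq \eta_0 := \theta(\mathcal{R}_0-1)/\beta^\infty > 0$. By Proposition~\ref{prop:compactness}, the orbit is precompact in $X$ for the weak-$*$ topology, so its $\omega$-limit set $\Omega$ is nonempty, compact, connected, and invariant under both the forward and the backward semiflow (backward invariance follows from compactness and continuous dependence on initial data, as is standard for $\omega$-limit sets of precompact orbits in a Hausdorff space).

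I first analyze $\Omega$ near the disease-free boundary $X_0:=\mathbb R_+\times\{0\}$. Thanks to \eqref{eq:I_n}, $X_0$ is both forward- and backward-invariant, and on $X_0$ the dynamics reduce to $S'=\Lambda-\theta S$, whose unique equilibrium is $E_0:=(\Lambda/\theta, 0)$ and whose backward orbits diverge in $S$ unless they start at $E_0$. Boundedness of $\Omega$ then forces $\Omega\cap X_0=\{E_0\}$. Similarly, writing $X_+:=X\setminus X_0$ and $W^s(E_0):=\{x\in X:\Phi^t x\to E_0\}$, any $x\in W^s(E_0)\cap X_+$ would trigger Lemma~\ref{lem:weak-persistent-nomut} for the orbit starting at $x$, yielding $\limsup_t\sum_n(\Phi^t x)_n\geq\eta_0>0$ and contradicting $\sum_n(\Phi^t x)_n\to 0$. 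Hence $W^s(E_0)\subset X_0$, and therefore $\Omega\cap W^s(E_0)\subset\Omega\cap X_0=\{E_0\}$.

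The contradiction comes from Butler--McGehee. The hypothesis $\liminf=0$ places $E_0$ in $\Omega$ while $\limsup\geq\eta_0$ gives $\Omega\neq\{E_0\}$, and $\{E_0\}$ is an isolated invariant set in $X$: for $\varepsilon<\eta_0$, any invariant subset of a small neighborhood $B_\varepsilon(E_0)$ must lie in $X_0$ (since $X_+$-orbits escape $B_\varepsilon$ by weak persistence) and then reduce to $\{E_0\}$ by the same diverging-backward-orbit argument as above. The Butler--McGehee lemma applied to the compact, connected, invariant set $\Omega$ containing the isolated invariant set $\{E_0\}\subsetneq\Omega$ then produces a complete orbit in $\Omega$ whose $\omega$-limit is $\{E_0\}$, i.e. a point of $\Omega\cap W^s(E_0)\setminus\{E_0\}$, in direct contradiction with the inclusion of the previous paragraph.

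The main obstacle of this plan is the rigorous invocation of Butler--McGehee in the infinite-dimensional weak-$*$ setting on $\mathcal{M}^*_+(\overline{\mathbb N})$, where the classical hyperbolicity framework does not directly apply. A self-contained substitute proceeds directly from the data: by connectedness of $\Omega$ and $\Omega\cap X_0=\{E_0\}$ one produces a sequence $p_k\in\Omega\cap X_+$ with $p_k\to E_0$; the explicit formula \eqref{eq:I_n} yields the uniform bound $\sum_n(\Phi^t p_k)_n\leq e^{CT}\sum_n(p_k)_n$ on each interval $[0,T]$, so the exit times $T_k:=\inf\{t>0:\sum_n(\Phi^t p_k)_n=\eta_0\}$ are finite (by weak persistence) with $T_k\to+\infty$; one then extracts subsequential limits of $\Phi^{T_k}p_k$ in $\Omega\cap X_+$ and chains them with the backward semiflow on $\Omega$ to build the needed complete orbit in $\Omega\cap W^s(E_0)\setminus\{E_0\}$, which closes the argument.
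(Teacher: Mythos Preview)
Your overall plan—identify $\Omega\cap X_0=\{E_0\}$, show $\Omega\cap W^s(E_0)=\{E_0\}$, and derive a contradiction \`a la Butler--McGehee—is sound and is precisely the strategy the paper follows (citing Magal--Zhao). Two issues, however, keep your self-contained substitute from closing.

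The main one is a direction mismatch. From $p_k\to E_0$ in $\Omega\cap X_+$ you take \emph{forward} first-hitting times $T_k$ of a fixed level; the limit $q=\lim_k\Phi^{T_k}p_k$ then has its \emph{backward} orbit confined below that level, so your construction actually yields $q\in W^u(E_0)\cap\Omega\cap X_+$, not a point of $W^s(E_0)\setminus\{E_0\}$. Since $E_0$ is unstable when $\mathcal R_0>1$, the existence of such a $q$ is perfectly consistent and contradicts nothing you established earlier. The paper repairs this by reversing the roles: along the original orbit it takes $t_k\to\infty$ with mass $\to 0$, and then $s_k<t_k$ the \emph{last} time before $t_k$ at level $\tfrac12\eta_0$, so the mass is $\le\tfrac12\eta_0$ on $[s_k,t_k]$. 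Shifting by $s_k$ and passing to a limit complete orbit $(S^\infty,I^\infty)$ with $\sum_nI^\infty_n(0)=\tfrac12\eta_0$, one argues: if $t_k-s_k$ stays bounded the limit vanishes at a finite time (hence identically), while if $t_k-s_k\to\infty$ then $\sup_{t\ge 0}\sum_nI^\infty_n(t)\le\tfrac12\eta_0$, contradicting Lemma~\ref{lem:weak-persistent-nomut} applied to the limit. This is exactly the $W^s$-half of Butler--McGehee done by hand, and it is the direction that bites.

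Two smaller points. First, your threshold should be strictly below $\eta_0$ (e.g.\ $\eta_0/2$): Lemma~\ref{lem:weak-persistent-nomut} only gives $\limsup\ge\eta_0$, so the level $\eta_0$ may never be attained and $T_k$ could be infinite. Second, your claim $W^s(E_0)\subset X_0$ is too strong as written: Lemma~\ref{lem:weak-persistent-nomut} requires Assumption~\ref{as:init-bis}, which a generic $x\in X_+$ need not satisfy. What is true—and all you need—is $W^s(E_0)\cap\Omega\subset\{E_0\}$, because every $x\in\Omega\cap X_+$ is concentrated on $\{\alpha_n=\alpha^*\}$ by the concentration property of Proposition~\ref{prop:compactness}, so Assumption~\ref{as:init-bis} holds for it and weak persistence applies.
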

\begin{proof}
	Let us show the uniform persistence property. We adapt the argument of \textcite[Proposition 3.2]{Mag-Zha-05} in our non-metric context. 
	Assume by contradiction  that \eqref{eq:strong-persistence} does not hold. Then, there exists a sequence $t_k\to+\infty$ such that 
	\begin{equation}\label{eq:uniform-persistance-aim}
		\lim_{k\to+\infty}\sum_{n=-K}^{+\infty} I_n(t_k)=0.
	\end{equation} 
	By Lemma \ref{lem:bounds-nomut} we have $\inf_kS(t_k)>0$. Because of \eqref{eq:weak-persistence}, for eack $k$ sufficiently large and up to replacing $t_k$ by a subsequence, there exists $s_k$ with $t_{k-1}<s_k<t_k$ such that 
	\begin{equation}\label{eq:norm-persistence}
		\sum_{n=-K}^{+\infty} I_n(s_k) = \frac{1}{2}\cdot \frac{\theta}{\beta^\infty} \big( \mathcal{R}_0-1 \big) \text{ and } \sup_{\sigma\in [s_k, t_k]}\sum_{n=-K}^{+\infty} I_n(\sigma) \leq \frac{1}{2}\cdot \frac{\theta}{\beta^\infty} \big( \mathcal{R}_0-1 \big) .
	\end{equation}
	By Proposition \ref{prop:compactness}, there exists a subsequence of $s_k$, still denoted $s_k$, such that $S(s_k+t) \to S_0^\infty(t) $ and $(I_n(s_k+t))\rightharpoonup \big(I_n^{\infty}(s_k+t)\big)$, and moreover \eqref{eq:compacness-limit-non-trivial} and \eqref{eq:concentration} hold for the limit orbit. 

	Next we show that $(S^\infty(t), (I_n^\infty(t)))$ satisfy the assumptions \ref{as:params}, \ref{as:init} and \ref{as:omega}. The assumptions \ref{as:params} and \ref{as:omega} are readily checked since the values of the coefficients $\alpha_n$ and $\gamma_n$ have not changed (neither have $\beta_n:=\gamma_n\alpha_n$). We deduce from \eqref{eq:concentration} that 
	\begin{align*}
		\sum_{\{n\,:\,\alpha_n=\alpha^*\}} I^{\infty, 0}_n &=\sum_{n=-K}^{+\infty} I^{\infty, 0}_n - \sum_{\{n\,:\,\alpha_n<\alpha^*\}}I_n(s_k) =\sum_{n=-K}^{+\infty} I^{\infty, 0}_n =\frac{1}{2}\cdot \frac{\theta}{\beta^\infty} \big( \mathcal{R}_0-1 \big)>0,  
	\end{align*}
	thus in particular  the set $\{n\in\overline{\mathbb{N}}\,:\, I_n^{\infty, 0}>0\text{ and } \alpha_n=\alpha^*\}$ is nonempty. Hence, up to removing the terms for which $I_n^{\infty, 0}=0$, we have
	\begin{equation*}
		\sup_{\{n\,:\, I_n^{\infty, 0}>0\}} \frac{\beta_n}{\gamma_n}=\sup_{\{n\,:\, I_n^{\infty, 0}>0\}} \alpha_n=\alpha^*, 
	\end{equation*}
	Assumptions \ref{as:init}, \ref{as:params} and \ref{as:omega} are still satisfied along the subsequence $I^{\infty, 0}_n>0$.
	In particular  we can apply Lemma \ref{lem:weak-persistent-nomut} and thus
	\begin{equation} \label{eq:230320-1}
		\limsup_{t\to+\infty} \sum_{n=-K}^{+\infty} I_n^\infty(t) \geq \frac{\theta}{\beta^\infty} \big( \mathcal{R}_0-1 \big). 
	\end{equation}
	Now we conclude the argument. Let $T_k:=t_k-s_k$. There are two possibilities. 

	$\bullet$ $T_k$ is bounded. In that case, we further extract a subsequence so that $T_k\to T$.  Then we have
	\begin{equation*}
		\sum_{n=-K}^{+\infty} I_n^\infty(T) = \lim_{k\to+\infty} \sum_{n=-K}^{+\infty} I_n(t_k) = 0, 
	\end{equation*}
	and by the uniqueness of the solution to \eqref{eq:main}, we have $I^\infty_n(t) \equiv 0$ for all $t\geq T$. This contradicts \eqref{eq:230320-1}.

	$\bullet$ $T_k$ is unbounded. In that case, we further extract a subsequence so that $T_k\to +\infty$.  But since
	\begin{equation*}
		\sup_{\sigma\in [s_k, t_k]}\sum_{n=-K}^{+\infty} I_n(\sigma) \leq \frac{1}{2}\cdot \frac{\theta}{\beta^\infty} \big( \mathcal{R}_0-1 \big) , 
	\end{equation*}
	we have 
	\begin{equation*}
		\sup_{\sigma\in [0, +\infty)}\sum_{n=-K}^{+\infty} I^\infty_n(\sigma) \leq \frac{1}{2}\cdot \frac{\theta}{\beta^\infty}\big( \mathcal{R}_0-1 \big)
	\end{equation*}
	and this, again, contradicts \eqref{eq:230320-1}.

	These contradictions prove that \eqref{eq:uniform-persistance-aim} cannot hold, hence \eqref{eq:strong-persistence} holds. The proof of Proposition \ref{prop:unifpers-nomut} is finished.
\end{proof}
\begin{lemma}\label{lem:liminf-no-mut}
	Suppose that Assumptions \ref{as:params},  \ref{as:init-bis} and \ref{as:omega} hold true and
	let $\big(S(t), (I_n(t))\big)$ be the corresponding solution of \eqref{eq:main}.
	Assume that {$\mathcal R_0>1 $}. 
	Then
	\begin{equation*}
		\liminf_{T\to+\infty}\frac{1}{T}\int_0^T S(t)\dd t \geq \dfrac{1}{\alpha^*}, 
	\end{equation*}
	with
	$ \alpha^*$ given in \eqref{def-alpha*}.
\end{lemma}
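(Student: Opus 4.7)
The plan is a proof by contradiction, relying on the explicit formula \eqref{eq:I_n} for $I_n(t)$ and the uniform persistence already established in Proposition \ref{prop:unifpers-nomut}. Suppose for contradiction that $\liminf_{T\to+\infty}\overline{S}(T)<\frac{1}{\alpha^*}$. Then one can fix $\varepsilon>0$ and extract a sequence $T_k\to+\infty$ such that
\[
\overline{S}(T_k)\le \frac{1}{\alpha^*}-\varepsilon\qquad\text{for all }k\in\mathbb{N}.
\]

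The key observation is that, since $\alpha_n\le \alpha^*$ for every $n\in\overline{\mathbb{N}}$, multiplying by $\alpha_n$ gives $\alpha_n\overline{S}(T_k)-1\le \alpha^*\overline{S}(T_k)-1\le -\alpha^*\varepsilon<0$. Substituting into the formula $I_n(t)=I_n^0\,e^{\gamma_n t(\alpha_n\overline{S}(t)-1)}$, and using $\gamma_n\ge\gamma_0>0$ from Assumption \ref{as:params}, I obtain the uniform bound
\[
I_n(T_k)\le I_n^0\,e^{-\gamma_0\alpha^*\varepsilon T_k}\qquad\text{for every }n\in\overline{\mathbb{N}}.
\]
Summing over $n$ and using that $\sum_{n\in\overline{\mathbb{N}}} I_n^0<+\infty$ by the initial data assumption, this yields
\[
\sum_{n\in\overline{\mathbb{N}}}I_n(T_k)\le e^{-\gamma_0\alpha^*\varepsilon T_k}\sum_{n\in\overline{\mathbb{N}}} I_n^0 \xrightarrow[k\to+\infty]{}0.
\]

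This directly contradicts the uniform persistence estimate \eqref{eq:strong-persistence} given by Proposition \ref{prop:unifpers-nomut}, which asserts that $\liminf_{t\to+\infty}\sum_{n\in\overline{\mathbb{N}}}I_n(t)>0$ whenever $\mathcal R_0>1$. The contradiction establishes the lemma.

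I do not anticipate a real obstacle here: the computation is elementary once one has the explicit expression for $I_n(t)$ and the already-proven uniform persistence. The only mild subtlety is to ensure that $\varepsilon$ is chosen small enough that $1/\alpha^*-\varepsilon\ge 0$ (so that the sequence $T_k$ indeed exists), and to invoke $\ell^1$-summability of the initial data to pass the exponential decay through the infinite sum; both are immediate.
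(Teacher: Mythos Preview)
Your proof is correct and follows essentially the same contradiction argument as the paper: assume $\overline S(T_k)\le \tfrac{1}{\alpha^*}-\varepsilon$ along a sequence, use the explicit formula $I_n(t)=I_n^0 e^{\gamma_n t(\alpha_n\overline S(t)-1)}$ together with $\alpha_n\le\alpha^*$ to get a uniform exponential decay of $\sum_n I_n(T_k)$, and contradict the uniform persistence from Proposition~\ref{prop:unifpers-nomut}. The only cosmetic difference is the constant in the exponent (you obtain $e^{-\gamma_0\alpha^*\varepsilon T_k}$ whereas the paper writes $e^{-\beta_0\varepsilon T_k}$), which is immaterial.
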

\begin{proof}
	Assume by contradiction that the conclusion of the Lemma does not hold, {\it i.e.} there exist $\varepsilon>0$ and a sequence $t_k\to +\infty$ such that 
	\begin{equation*}
		\frac{1}{t_k}\int_0^{t_k}S(t)\dd t\leq  \dfrac{1}{\alpha^*}-\varepsilon.
	\end{equation*}
	Then
	\begin{align*}
		\sum_{n\in\overline{\mathbb{N}}}I_n(t_k)&= \sum_{n\in\overline{\mathbb{N}}} I_n^0 e^{\beta_n\int_0^{t_k} S(s)\dd s-\gamma_n t} = \sum_{n\in\overline{\mathbb{N}}} I_n^0 e^{\gamma_nt_k\left(\frac{\beta_n}{\gamma_n}\frac{1}{t_k}\int_0^{t_k} S(s)\dd s-1\right)} \leq \sum_{n\in\overline{\mathbb{N}}} I_n^0 e^{\gamma_nt_k\left(\frac{\beta_n}{\gamma_n}\left(\frac{1}{\alpha^*}-\varepsilon\right)-1\right)} \\ 
		&\leq \sum_{n\in\overline{\mathbb{N}}} I_n^0 e^{-\varepsilon\beta_nt_k} \leq e^{-\varepsilon\beta_0t_k}\sum_{n\in\overline{\mathbb{N}}} I_n^0 \xrightarrow[k\to+\infty]{}0.
	\end{align*} 
	Therefore
	\begin{equation*}
		\liminf_{t\to+\infty}\sum_{n\in\overline{\mathbb{N}}}I_n(t)\leq \lim_{k\to+\infty}\sum_{n\in\overline{\mathbb{N}}}I_n(t_k)=0,
	\end{equation*}
	which is in contradiction with Lemma \ref{prop:unifpers-nomut}.
	This proves the Lemma.
\end{proof}

Next we give a Lyapunov functional that works for a special case in our model. It is close to the Lyapunov functional given for the Lotka-Volterra case of \cite{Hsu-78}, although in our case it is not possible to factorize $S(t)$ in the first equation of \eqref{eq:main-a}, which makes the computations intractable when $\frac{\beta_n}{\gamma_n}$ is not a constant.
\begin{proposition}[Lyapunov functional]\label{prop:Lyapunov}
    Let Assumption \ref{as:params} hold true and assume that $\mathcal{R}_0>1$. Let $(S^*, I^*_n)_{n\in\overline{\mathbb{N}}}$ be a stationary solution of \eqref{eq:main-a}, i.e. $S^*:=\frac{1}{\alpha^*}$ and $(I^*_n)_{n\in\overline{\mathbb{N}}}$ satisfies 
		 \begin{equation*}
			\sum_{n\in\overline{\mathbb{N}}} \gamma_n I_n^* = \frac{\theta}{\alpha^*}\left(\mathcal R_0-1\right).
		 \end{equation*} 
		 Assume furthermore that $\frac{\beta_n}{\gamma_n}\equiv \alpha^*$ is constant  whenever $I^*_n>0$ and let $\mathcal{N}_{+}$ be the set of indices for which $I^n_0>0$,. 
		 Define $g(x)=x-\ln(x)$, and let 
		\begin{equation*}
			D(V):=\left\{(S, (I_n)_n)\,:\, S>0, I_n=0\text{ whenever } I_n^*=0, \text{ and } \inf_{n\in\mathcal{N}_+}\,\dfrac{I_n}{I_n^*}>0\right\}\subset \mathbb{R}\times \ell^\infty\big((I_n^*)^{-1}\big), 
		\end{equation*}
		where $\ell^\infty\big((I_n^*)^{-1}\big)$ is equipped with the norm $\Vert (\varphi_n)_n\Vert_{\ell^\infty\big((I_n^*)^{-1}\big)}:=\sup_{n\in\overline{\mathbb{N}}}\left|\frac{\varphi_n}{I_n^*}\right|$.
		Then the functional
		\begin{equation}\label{eq:Liapunov}
			V(S, (I_n)):= S^* g\left(\frac{S}{S^*}\right)-S^* + \sum_{n\in\mathcal{N}_{+}} \left[I^*_ng\left(\frac{I_n}{I^*_n}\right) -I^*_n\right].
		\end{equation}
		is well-defined and continuous on the open set $D(V)$ for the topology induced by $\mathbb{R}\times \ell^\infty\big((I_n^*)^{-1}\big)$. 

		Moreover, if $\big(S(t), (I_n(t))_{n\in\mathbb{N}}\big)$ is a solution of \eqref{eq:main} %with $t\in [0, +\infty)$ or $t\in(-\infty, +\infty)$ 
		such that  $\big(S(0), (I_n(0))\big)\in D(V)$, then $\big(S(t), (I_n(t))_n\big)\in D(V)$ for all $t\geq 0$, $t\mapsto V(S(t), (I_n(t))_n)\in C^1$ and we have
		\begin{equation}\label{eq:Lyapunov-decreasing}
			\frac{\dd}{\dd t} V\big(S(t), (I_n(t))_{n\in\overline{\mathbb{N}}}\big) = -\dfrac{(S(t)-S^*)^2}{S(t)}\left(-\theta-\sum_{n\in\mathcal{N}_+} \alpha\gamma_n I^*_n\right).
		\end{equation}
\end{proposition}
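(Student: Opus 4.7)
The plan is to verify the Proposition in three stages. First, for well-definedness and continuity of $V$, observe that on $D(V)$ the ratios $I_n/I_n^*$ with $n \in \mathcal{N}_+$ lie in a compact subinterval $[a, b] \subset (0, \infty)$ by definition of the domain. Since $g(x) = x - \ln x \geq 1$ on $(0, \infty)$ with equality at $x = 1$, each term $I_n^*[g(I_n/I_n^*) - 1]$ is non-negative and bounded above by $\bigl(\sup_{[a,b]} g - 1\bigr)\, I_n^*$. The sequence $(I_n^*)_{n \in \mathcal{N}_+}$ lies in $\ell^1$, because the equilibrium relation $\sum_n \gamma_n I_n^* = \tfrac{\theta}{\alpha^*}(\mathcal{R}_0 - 1) < \infty$ combined with the lower bound $\gamma_n \geq \gamma_0 > 0$ forces $\sum I_n^* < \infty$. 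Hence the series defining $V$ converges, and continuity on $D(V)$ with respect to the $\ell^\infty\bigl((I_n^*)^{-1}\bigr)$ topology follows from uniform continuity of $g$ on $[a,b]$ and $\ell^1$-summability of $(I_n^*)$, via a dominated convergence argument.

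Next, for invariance of $D(V)$ under the flow, the relation $I_n(0) = 0 \Rightarrow I_n(t) \equiv 0$, which follows from \eqref{eq:I_n}, handles indices $n \notin \mathcal{N}_+$. For $n \in \mathcal{N}_+$, the hypothesis $\beta_n/\gamma_n = \alpha^* = 1/S^*$ yields
\begin{equation*}
    \frac{I_n(t)}{I_n^*} = \frac{I_n(0)}{I_n^*} \exp\left(\gamma_n \int_0^t \bigl(\alpha^* S(s) - 1\bigr)\dd s\right),
\end{equation*}
and the uniform bounds on $S$ (Lemma~\ref{lem:bounds-nomut}) together with $\gamma_0 \leq \gamma_n \leq \gamma^\infty$ provide positive uniform lower and finite uniform upper bounds on the exponential factor on any finite time interval $[0, T]$. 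Consequently both $\sup_{n \in \mathcal{N}_+} I_n(t)/I_n^*$ and $\inf_{n \in \mathcal{N}_+} I_n(t)/I_n^*$ remain finite and strictly positive for all $t \geq 0$, so $D(V)$ is invariant.

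Finally, for the differential identity, compute formally
\begin{equation*}
    \frac{\dd}{\dd t} V\bigl(S(t), (I_n(t))\bigr) = \left(1 - \frac{S^*}{S}\right) S' + \sum_{n \in \mathcal{N}_+} \left(1 - \frac{I_n^*}{I_n}\right) I_n'.
\end{equation*}
Using $\beta_n = \gamma_n \alpha^*$ for $n \in \mathcal{N}_+$ gives $(1 - I_n^*/I_n) I_n' = \gamma_n \alpha^*(S - S^*)(I_n - I_n^*)$. Substituting the first equation of \eqref{eq:main-a} together with the equilibrium identity $\Lambda = \theta S^* + S^* \sum_n \beta_n I_n^*$ rewrites $S'$ in terms of $S - S^*$ and $I_n - I_n^*$; the cross-terms involving $\alpha^*(S-S^*)\sum \gamma_n I_n$ then cancel, and the remaining quadratic contributions collapse via the identity $2SS^* - (S^*)^2 - S^2 = -(S - S^*)^2$ to yield the announced formula \eqref{eq:Lyapunov-decreasing}. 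The principal obstacle in this step is not the algebraic simplification but the justification of termwise differentiation of the infinite series; I would handle it by truncating the sum at $N$, differentiating the resulting finite Lyapunov functional, and passing to the limit $N \to \infty$ using the uniform bounds on $I_n/I_n^*$ from the invariance step combined with the $\ell^1$-summability of $(\gamma_n I_n^*)_{n \in \mathcal{N}_+}$.
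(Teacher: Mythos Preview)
Your proof is correct and follows essentially the same route as the paper: verifying invariance of $D(V)$ via the explicit exponential formula for $I_n(t)/I_n^*$, and then computing $V'$ by splitting into the $S$-part and the $I$-part and using the equilibrium identity $\Lambda=\theta S^*+S^*\sum_n\beta_nI_n^*$. You are in fact somewhat more careful than the paper on the analytical side (well-definedness of the series, continuity, and the justification of termwise differentiation), where the paper simply invokes ``classical arguments''; conversely, the paper writes out the algebraic cancellation in $V_1'+V_2'$ line by line, whereas your description of that step is compressed but accurate.
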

\begin{proof}
	First we check that $(S^*, (I^*_n)_{n\in\mathbb{N}})$ is a stationary solution of \eqref{eq:main}. Indeed, 
	\begin{equation*}
		\frac{\dd}{\dd t}(I^*_n) = 0 = \gamma_n(\alpha^* S^*-1)I_n^* , 
	\end{equation*}
	and 
	\begin{equation*}
		\frac{\dd}{\dd t}(S^*) = 0 = \Lambda - \theta S^* - S^*\sum_{n\in\overline{\mathbb{N}}} \alpha^*\gamma_ne^{\tau\gamma_n}I^0_n .
	\end{equation*}

	Let $\big(S(t), (I_n(t))_{n\in\mathbb{N}}\big)$ be such that $\big(S(0), (I_n(0))\big)\in D(V)$, we check that $\big(S(t), (I_n(t))\big)\in D(V)$ for all $t\geq 0$. That $S(t)>0$ is a consequence of Lemma \ref{lem:bounds-nomut}. Then, we remark that 
	\begin{equation*}
		I_n(t) = I_n(0) e^{\beta_n \int_0^t S(s)\dd s-\gamma_n t}\geq I_n(0)e^{\beta_0\int_0^t S(s)\dd s - \gamma_\infty t}, 
	\end{equation*}
	therefore 
	\begin{equation*}
		\inf_{n\in\mathcal{N}_+} \dfrac{I_n(t)}{I_n^*}\geq e^{\beta_0\int_0^t S(s)\dd s - \gamma_\infty t}\inf_{n\in\mathcal{N}_+} \dfrac{I_n(0)}{I_n^*} >0, 
	\end{equation*}
	and we have proved that $\big(S(t), (I_n(t))\big)\in D(V)$ for all $t\geq 0$. The continuity and continuous differentiability of $t\mapsto V\big(S(t), (I_n(t))\big)$ follow from classical arguments.

	Next, writing $V_1(t) = S^* g\left(\frac{S(t)}{S^*}\right) $ and $V_2(t) = \sum_{n\in\overline{\mathbb{N}}} I^*_ng\left(\frac{I_n(t)}{I^*_n}\right)$, we have
	\begin{align*}
		V_1'(t) &= S^*\frac{S'(t)}{S^*}g'\left(\frac{S(t)}{S^*}\right) = \left(\Lambda - \theta S(t) - S(t)\sum_{n\in\overline{\mathbb{N}}}\beta_nI_n(t)\right)\left(1-\frac{S^*}{S(t)}\right)  \\  
		&=\left(\Lambda - \theta S(t) - S(t)\sum_{n\in\overline{\mathbb{N}}}\beta_nI_n(t) -\Lambda + \theta S^* + S^*\sum_{n\in\overline{\mathbb{N}}}\beta_nI^*_n\right)\left(1-\frac{S^*}{S(t)}\right) \\ 
		& = -\theta\frac{(S(t)-S^*)^2}{S(t)} + \left(S^*\sum_{n\in\overline{\mathbb{N}}}\beta_nI^*_n - S(t)\sum_{n\in\overline{\mathbb{N}}}\beta_nI_n(t)\right) \left(1-\frac{S^*}{S(t)}\right), \\
		& = -\theta\frac{(S(t)-S^*)^2}{S(t)} + \dfrac{S(t)-S^*}{S(t)}\sum_{n\in\overline{\mathbb{N}}}\alpha^*\gamma_n\big(I^*_nS^* - I_n(t)S(t)\big) , 
	\end{align*}
	and 
	\begin{align*}
		V_2'(t) &= \sum_{n\in\overline{\mathbb{N}}} I^*_n \frac{ I_n'(t)}{I^*_n} g'\left(\frac{I_n(t)}{I^*_n}\right) = \sum_{n\in\overline{\mathbb{N}}} \gamma_n\left(\alpha^* S(t) - 1\right)I_n(t)\left(1-\frac{I^*_n}{I_n(t)}\right)  \\ 
		& = \sum_{n\in\overline{\mathbb{N}}} \gamma_n\left(\alpha^* S(t) - 1\right)\left(I_n(t)-I^*_n\right)  \\ 
		& = \sum_{n\in\overline{\mathbb{N}}} \gamma_n\alpha^*\left( S(t) - S^*\right)\left(I_n(t)-I^*_n\right).
	\end{align*}
	Recalling $S^*=\frac{1}{\alpha^*}$,  we have therefore
	\begin{align*} 
		\frac{\dd}{\dd t} V(S(t), (I_n(t))_n) &= \frac{\dd }{\dd t} V_1(t) + \frac{\dd}{\dd t} V_2(t) \\
		& = -\theta\frac{(S(t)-S^*)^2}{S(t)}  + \dfrac{S(t)-S^*}{S(t)}\sum_{n\in\overline{\mathbb{N}}}\alpha^*\gamma_n\left(I_n^*S^*-I_n(t)S(t)+I_n(t)S(t)-I_n^*S(t)\right) \\
		& = -\theta\frac{(S(t)-S^*)^2}{S(t)}  - \dfrac{\big(S(t)-S^*\big)^2}{S(t)}\sum_{n\in\overline{\mathbb{N}}}\alpha^*\gamma_nI^*_n.
	\end{align*}
	Proposition \ref{prop:compactness} is proved.
\end{proof}
\begin{lemma}\label{lem:strong-compactness}
	Suppose that Assumptions \ref{as:params}, \ref{as:init-bis} and \ref{as:omega} holds true and assume that there exists an index $i\in\overline{\mathbb{N}}$ such that $I^0_i>0$ and
	\begin{equation*}
		\alpha^*=\sup_{n\in\mathbb{N}}\frac{\beta_n}{\gamma_n} = \frac{\beta_i}{\gamma_i}.
	\end{equation*}
	For $s<t$, let
	\begin{equation}\label{eq:defeta}
		\eta(t;s):= (t-s)\big(\alpha^*\overline{S}(t; s)-1\big), \text{ where }\overline{S}(t;s):=\frac{1}{t-s}\int_s^tS(\sigma)\dd \sigma.
	\end{equation}
	Then there exists a constant $\overline{\eta}>0$ such that for any $0\leq s<t$ one has 
	\begin{equation}\label{eq:strongcompactness+}
		-\overline{\eta}\leq \eta(t;s) \leq \overline{\eta}<+\infty.
	\end{equation}
	If moreover $\big(S(t), (I_n(t))_n\big)\in\mathbb{R}\times\mathcal{M}_+(\overline{\mathbb{N}})$ is a uniformly bounded complete orbit such that $\displaystyle\liminf_{t\to-\infty}\sum_{\{k\,:\,\alpha_k=\alpha^*\}} I_n(t)>0$, then for any $s<t$ one has
	\begin{equation}\label{eq:strongcompactness-}
		-\overline{\eta}\leq \eta(t;s)\leq \overline{\eta}<+\infty.
	\end{equation}
\end{lemma}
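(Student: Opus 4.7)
The core idea is to fix a reference time $t_0$ (we take $t_0=0$ for the first part, and any $t_0\in\mathbb{R}$ for the second part) and study the function
\[
    \tilde\eta(t):=\alpha^*\int_{t_0}^t S(\sigma)\,\mathrm{d}\sigma-(t-t_0),
\]
which satisfies $\eta(t;s)=\tilde\eta(t)-\tilde\eta(s)$ for any admissible $s<t$. Thus the problem reduces to showing that $\tilde\eta$ is uniformly bounded on $[0,+\infty)$ (resp.\ on $\mathbb{R}$). The estimates will hinge on the identity
\[
    I_n(t)=I_n(t_0)\,\exp\!\left[\gamma_n\tilde\eta(t)+\gamma_n(\alpha_n-\alpha^*)\int_{t_0}^{t}S(\sigma)\,\mathrm{d}\sigma\right],\qquad n\in\overline{\mathbb{N}},
\]
which reduces to $I_n(t)=I_n(t_0)\,e^{\gamma_n\tilde\eta(t)}$ for indices $n\in\mathcal{A}:=\{k:\alpha_k=\alpha^*\}$.

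\textbf{Upper bound on $\tilde\eta$.} I will select an index $n^*\in\mathcal{A}$ with $I_{n^*}(t_0)>0$: this is the index $i$ from the hypothesis in the first part, while in the second part the liminf assumption produces some $n^*\in\mathcal{A}$ with $I_{n^*}(s)>0$ at some $s$, from which positivity propagates to every time $t_0\in\mathbb{R}$ via the formula above. Combined with the uniform a priori upper bound of Lemma \ref{lem:bounds-nomut} (call it $M$), this immediately yields $\tilde\eta(t)\leq \gamma_{n^*}^{-1}\log(M/I_{n^*}(t_0))$, valid on the whole time domain under consideration.

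\textbf{Lower bound on $\tilde\eta$.} I argue by contradiction: assume $\tilde\eta(t_k)\to-\infty$ along some sequence. By continuity of $\tilde\eta$, $|t_k|\to+\infty$, so one may extract a subsequence with either $t_k\to+\infty$ or (only in the complete orbit case) $t_k\to-\infty$. In the first situation, the inequalities $\alpha_n-\alpha^*\leq 0$ and $\int_{t_0}^{t_k}S\geq 0$ control the extra exponential factor, and $\gamma_n\geq\gamma_0$ together with $\tilde\eta(t_k)<0$ yield the pointwise bound $I_n(t_k)\leq I_n(t_0)e^{\gamma_0\tilde\eta(t_k)}$; summing over all $n$ forces $\sum_n I_n(t_k)\to 0$, contradicting the uniform persistence of Proposition \ref{prop:unifpers-nomut}. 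In the second situation, restricting the sum to $n\in\mathcal{A}$ eliminates the extra factor, giving $M^*(t_k):=\sum_{n\in\mathcal{A}}I_n(t_k)\leq e^{\gamma_0\tilde\eta(t_k)}M^*(t_0)\to 0$, which contradicts the hypothesis $\liminf_{t\to-\infty}M^*(t)>0$.

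\textbf{Main difficulty.} The subtle step is the $t_k\to-\infty$ case: the total-mass approach fails because for $t<t_0$ the factor $e^{\gamma_n(\alpha_n-\alpha^*)\int_{t_0}^{t}S}$ is non-negative and can grow without bound (both $\alpha_n-\alpha^*\leq 0$ and $\int_{t_0}^t S\leq 0$), so summing over all $n$ produces no decay. Restricting to $n\in\mathcal{A}$ is crucial because the troublesome factor disappears, and the liminf assumption on $M^*$ at $-\infty$ is precisely what provides the required contradiction. This also explains why a hypothesis specifically about the mass concentrated at $\alpha^*$, rather than on the total mass, is needed for the complete-orbit statement.
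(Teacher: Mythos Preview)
Your argument is correct and follows the same two-step strategy as the paper (upper bound from optimal indices, lower bound by contradiction with persistence), but with two simplifications worth noting. First, you reduce the two-parameter quantity $\eta(t;s)$ to the one-variable $\tilde\eta(t)$ via $\eta(t;s)=\tilde\eta(t)-\tilde\eta(s)$; the paper instead bounds $\eta(t;s)$ directly for each pair, which forces it to control the ratio $\big(\sum_{\alpha_k=\alpha^*}I_k(t)\big)/\big(\sum_{\alpha_k=\alpha^*}I_k(s)\big)$ uniformly in \emph{both} variables. Second, for the upper bound you test against a single optimal index $n^*$, whereas the paper applies Jensen's inequality to the whole family $\{k:\alpha_k=\alpha^*\}$ and must then argue that $\sum_{\alpha_k=\alpha^*}I_k(s)$ stays bounded below in $s$---a step that your single-index approach bypasses entirely. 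Your version is therefore a bit more elementary; the paper's Jensen route would pay off only if no individual $I_{n^*}^0$ were available (which the hypothesis excludes). The lower-bound half is essentially identical in both proofs, including the observation that the $t_k\to-\infty$ case requires restricting to $\mathcal{A}$ so that the possibly growing factor $e^{\gamma_n(\alpha_n-\alpha^*)\int_{t_0}^{t_k}S}$ drops out.
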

\begin{proof}
	Let us write $I_n(t)$ as
	\begin{equation*}
		I_n(t) = I_n(s)\exp\left[\gamma_n (t-s) \left(\big(\alpha^*\overline{S}(t;s)-1\big) - \left(\alpha^*-\frac{\beta_n}{\gamma_n}\right) \overline{S}(t;s)\right)\right] = I_n(s) e^{\gamma_n\eta(t;s)-\left(\alpha^*-\alpha_n\right)\overline{S}(t;s)}. 
	\end{equation*}
	We claim that $\eta(t;s)$ is uniformly bounded in $t, s$. Indeed, recalling $\alpha_n=\frac{\beta_n}{\gamma_n}$, we have by Jensen's inequality 
	\begin{equation*}
		\exp\left[\sum_{\{n\,:\,\alpha_n=\alpha^*\}} \gamma_n\eta(t;s)\dfrac{I_n(s)}{\displaystyle\sum_{\{k\,:\,\alpha_k=\alpha^*\}}I_k(s)}\right]\leq \sum_{\{n\,:\,\alpha_n=\alpha^*\}} e^{\gamma_n\eta(t;s)} \dfrac{I_n(s)}{\displaystyle\sum_{\{k\,:\,\alpha_k=\alpha^*\}}I_k(s)}, 
	\end{equation*}
	hence
	\begin{align*}
		\eta(t;s) &\leq \dfrac{\displaystyle\sum_{\{k\,:\,\alpha_k=\alpha^*\}}I_k(s)}{\displaystyle\sum_{\{k\,:\alpha_k=\alpha^*\}} \gamma_k I_k(s)} \ln\left[\sum_{\{n\,:\,\alpha_n=\alpha^*\}} e^{\gamma_n\eta(t;s)} \dfrac{I_n(s)}{\displaystyle\sum_{\{k\,:\,\alpha_k=\alpha^*\}}I_k(s)}\right] \\
		&=\dfrac{\displaystyle\sum_{\{k\,:\,\alpha_k=\alpha^*\}}I_k(s)}{\displaystyle\sum_{\{k\,:\alpha_k=\alpha^*\}} \gamma_k I_k(s)} \ln\left[ \dfrac{\displaystyle\sum_{\{n\,:\,\alpha_n=\alpha^*\}}I_n(t)}{\displaystyle\sum_{\{k\,:\,\alpha_k=\alpha^*\}}I_k(s)}\right].%\leq \frac{1}{\gamma_0}\dfrac{\displaystyle\sum_{\{k\,:\,\alpha_k=\alpha^*\}}I_k(s)}{\displaystyle\sum_{\{k\,:\alpha_k=\alpha^*\}} I_k(s)} \ln\left[ \dfrac{\displaystyle\sum_{\{n\,:\,\alpha_n=\alpha^*\}}I_n(t)}{\displaystyle\sum_{\{k\,:\,\alpha_k=\alpha^*\}}I_k(s)}\right].
	\end{align*}
	Applying Lemma \ref{lem:bounds-nomut}, the total mass $\sum_{n\in\overline{\mathbb{N}}}I_n(t)$ is bounded above, and by Proposition \ref{prop:unifpers-nomut} the sum $\displaystyle\sum_{\{k\,:\,\alpha_k=\alpha^*\}}I_k(s) $ is bounded below when $s\to+\infty$; therefore there exists $\overline{\eta}<+\infty$, independent of $s$, such that 
	\begin{equation*}
		\eta(t;s) \leq \overline{\eta}.
	\end{equation*}
	If $\big(S(t), (I_n(t))_n\big)\in\mathbb{R}\times\mathcal{M}_+(\overline{\mathbb{N}})$ is a complete orbit and $\displaystyle\liminf_{t\to-\infty}\sum_{\{k\,:\,\alpha_k=\alpha^*\}} I_n(t)>0$, then there exists  an upper bound valid for all $t, s\in\mathbb{R}$.

	On the other hand, we claim that $\displaystyle\liminf_{t\to+\infty} \eta(t;s)\geq \underline{\eta}$ for a constant $\underline{\eta}$. Indeed, assume by contradiction that there exists a sequence $t_k\to+\infty$ and $s_k\geq 0$  such that $\eta(t_k;s_k)\to-\infty$.  Then
	\begin{equation*}
		\sum_{n\in\overline{\mathbb{N}}}I_n(t) =  \sum_{n\in\overline{\mathbb{N}}}I_n(s_k) e^{\gamma_n \eta(t_k;s_k)- \left(\alpha^*-\alpha_n\right) \overline{S}(t_k;s_k)} \leq e^{\gamma_0 \eta(t_k; s_k)}\sum_{n\in\overline{\mathbb{N}}}I_n(s_k)  \xrightarrow[t\to+\infty]{}0, 
	\end{equation*}
	which contradicts Proposition \ref{prop:unifpers-nomut}. We have proved \eqref{eq:strongcompactness+}. \eqref{eq:strongcompactness-} is proved by identical arguments.  
\end{proof}

Next we derive a kind of  LaSalle principle \parencite{LaSalle-1960} that shows that complete orbits concentrated on the set $\frac{\beta_n}{\gamma_n}=\alpha^*$ are ``almost'' stationary.  
\begin{lemma}\label{lem:convergence-S}
	%Let Assumption \ref{as:params} hold true and that $(I^0_n)_{n\in\overline{\mathbb{N}}}$ is non-trivial.  Assume furthermore that $\frac{\beta_n}{\gamma_n}\equiv \alpha^*$ is constant  whenever $I^0_n>0$ and that $\mathcal{R}_0>1$.
	%Then 
	%\begin{equation*}
	%	S(t) \xrightarrow[t\to+\infty]{} \frac{1}{\alpha^*} \text{ and } S'(t)\xrightarrow[t\to+\infty]{}0. 
	%\end{equation*}
	Let Assumption \ref{as:params} and \ref{as:omega} hold true and assume that $(I^0_n)_{n\in\overline{\mathbb{N}}}$ is non-trivial.   Assume furthermore that $\frac{\beta_n}{\gamma_n}\equiv \alpha^*$ is constant  whenever $I^0_n>0$ and that $\mathcal{R}_0>1$. Let $\big(S(t), (I_n(t))_n\big)$ be a complete orbit of \eqref{eq:main} passing through $(S_0, I^0_n)$ at $t=0$ and suppose that 
	\begin{equation}\label{eq:LaSalle-boundbelow}
		\liminf_{t\to-\infty}\sum_{n\in\overline{\mathbb{N}}}I_n(t)>0.
	\end{equation}
	Then we have
	\begin{equation*}
		S(t) \equiv \frac{1}{\alpha^*} \text{ for all } t\in\mathbb{R}. 
	\end{equation*}
\end{lemma}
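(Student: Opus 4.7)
The plan is to reduce \eqref{eq:main} to a planar ODE in the variables $(S,\phi)$ with $\phi(t):=\eta(t;0)=\int_0^t\bigl(\alpha^* S(\sigma)-1\bigr)\dd\sigma$, and then run an $\alpha$-limit argument driven by the Lyapunov function of Proposition \ref{prop:Lyapunov}. Since $\tfrac{\beta_n}{\gamma_n}\equiv\alpha^*$ on the support $\mathcal N_+:=\{n\in\overline{\mathbb N}:I_n^0>0\}$, formula \eqref{eq:I_n} gives $I_n(t)=I_n^0 e^{\gamma_n\phi(t)}$ for every $n\in\mathcal N_+$, and the trajectory is encoded by the pair $(S(t),\phi(t))$ solving
\begin{equation*}
\phi'(t)=\alpha^* S(t)-1,\qquad S'(t)=\Lambda-\theta S(t)-\alpha^* S(t)f(\phi(t)),\qquad f(\phi):=\smashoperator[r]{\sum_{n\in\mathcal N_+}}\gamma_n I_n^0 e^{\gamma_n\phi}.
\end{equation*}
The non-triviality of $(I_n^0)$ together with $\alpha_n=\alpha^*$ on $\mathcal N_+$ ensures $\liminf_{t\to-\infty}\sum_{\{k\,:\,\alpha_k=\alpha^*\}}I_n(t)>0$, so \eqref{eq:strongcompactness-} of Lemma \ref{lem:strong-compactness} applies and gives a uniform bound $|\phi(t)|\le\overline\eta$ on all of $\mathbb R$. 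Combined with Lemma \ref{lem:bounds-nomut}, the planar orbit $(S,\phi)$ stays in a bounded subset of $\mathbb R^2$ for all $t\in\mathbb R$.

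Since $\mathcal R_0>1$ and $f:\mathbb R\to(0,+\infty)$ is a strictly increasing continuous bijection (monotone/dominated convergence give $f(-\infty)=0$ and $f(+\infty)=+\infty$ as $\mathcal N_+\ne\emptyset$), the equation $f(\tau)=\tfrac{\theta}{\alpha^*}(\mathcal R_0-1)$ has a unique solution $\tau^*\in\mathbb R$, and $I_n^*:=I_n^0 e^{\gamma_n\tau^*}$ ($n\in\mathcal N_+$) defines a stationary solution $(S^*,(I_n^*))=(1/\alpha^*,(I_n^*))$ of \eqref{eq:main}. The boundedness of $\phi$ renders $I_n(t)/I_n^*=e^{\gamma_n(\phi(t)-\tau^*)}$ uniformly bounded above and below by positive constants, so the functional $V(t):=V\bigl(S(t),(I_n(t))\bigr)$ of Proposition \ref{prop:Lyapunov} is well-defined on the orbit, and in the planar variables takes the simple form
\begin{equation*}
V(S,\phi)=\bigl[S-S^*\ln(S/S^*)-S^*\bigr]+\bigl[F(\phi)-F(\tau^*)-k(\phi-\tau^*)\bigr],
\end{equation*}
with $F(\phi):=\sum_{n\in\mathcal N_+}I_n^0 e^{\gamma_n\phi}$ and $k:=f(\tau^*)$; each bracket is a strictly convex function of a single variable attaining its unique minimum $0$ at $S^*$ and $\tau^*$ respectively, so $V\ge 0$ with equality only at the equilibrium. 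Formula \eqref{eq:Lyapunov-decreasing} guarantees $V$ is non-increasing along the orbit, strictly so whenever $S\ne S^*$.

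The limit $V_{-\infty}:=\lim_{t\to-\infty}V(t)$ therefore exists. Given any sequence $t_k\to-\infty$, Bolzano--Weierstrass in $\mathbb R^2$ produces (after extraction) a limit $\bigl(\tilde S(0),\tilde\phi(0)\bigr)$ of $\bigl(S(t_k),\phi(t_k)\bigr)$, and continuous dependence for the planar ODE implies that the shifted orbits $\bigl(S(t_k+t),\phi(t_k+t)\bigr)$ converge locally uniformly to the complete orbit $\bigl(\tilde S(t),\tilde\phi(t)\bigr)$ of the planar system passing through this point at $t=0$. By continuity of $V$ on $\mathbb R^2$, $V\bigl(\tilde S(t),\tilde\phi(t)\bigr)\equiv V_{-\infty}$ for all $t\in\mathbb R$, so $\tfrac{\dd}{\dd t}V\equiv 0$ along the limit orbit and \eqref{eq:Lyapunov-decreasing} forces $\tilde S\equiv S^*$. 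Substituting into the $S$-equation then yields $f(\tilde\phi(t))\equiv k$, which by strict monotonicity of $f$ pins $\tilde\phi\equiv\tau^*$. Hence the limit orbit is the equilibrium and $V_{-\infty}=V(S^*,\tau^*)=0$.

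Finally, since $V$ is non-increasing in $t$, $V(t)\le V_{-\infty}=0$ for every $t\in\mathbb R$; combined with $V\ge 0$ this forces $V\equiv 0$, whence $(S(t),\phi(t))\equiv(S^*,\tau^*)$, yielding $S(t)\equiv 1/\alpha^*$ for all $t\in\mathbb R$ as claimed. The main technical subtlety lies in obtaining the uniform bound on $\phi(t)$ as $t\to-\infty$ from \eqref{eq:strongcompactness-}, which is what makes the planar phase space genuinely compact in both time directions; once this is in hand, everything reduces to a standard LaSalle-type argument in $\mathbb R^2$ combined with the observation that a non-increasing Lyapunov function with equal value at the infimum and at the $\alpha$-limit must be identically constant.
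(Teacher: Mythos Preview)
Your proof is correct and follows essentially the same LaSalle-type argument as the paper: both combine the uniform bound on $\eta(t;0)$ from Lemma \ref{lem:strong-compactness} with the Lyapunov function of Proposition \ref{prop:Lyapunov}, extract an $\alpha$-limit along $t_k\to-\infty$, and conclude that $V$ attains its minimum there, forcing $V\equiv 0$. Your explicit reduction to the planar ODE in $(S,\phi)$ is a clean way to package the compactness and limit extraction that the paper carries out in the variables $(S(t_k),\eta(t_k))$.
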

\begin{proof}
	Because of our assumption we have 
	\begin{equation}\label{eq:LaSalle-boundbelow}
		\liminf_{t\to-\infty}\sum_{n\in\overline{\mathbb{N}}}I_n(t)=\liminf_{t\to-\infty}\sum_{\{n\,:\,\alpha_n=\alpha^*\}}I_n(t)>0.
	\end{equation}
	therefore Lemma \ref{lem:strong-compactness} implies that $\eta(t;s)$, as defined by \eqref{eq:defeta}, is uniformly bounded.  Let $\eta(t):=\eta(t;0)$. We define the distribution:
	\begin{equation*}
		I^*_n:=\begin{cases} 
			0 & \text{ if } \frac{\beta_n}{\gamma_n}<\alpha^*, \\
			e^{\tau \gamma_n} I^0_n, & \text{ if }\frac{\beta_n}{\gamma_n}=\alpha^*,  
		\end{cases}
	\end{equation*}
	where $\tau\geq 0$ is the unique solution of the equation 
	\begin{equation*}
		\sum_{n\in\overline{\mathbb{N}}} \gamma_n e^{\tau \gamma_n}\mathbbm{1}_{\alpha_n=\alpha^*}I_n^0 = \frac{\theta}{\alpha^*}\left(\mathcal R_0-1\right).
	\end{equation*}
	Then $(I^*_n)$ is a stationary distribution with 
	%\begin{equation*}
	$\displaystyle	\inf_{n\in\overline{\mathbb{N}}} \frac{I^0_n}{I^*_n}\geq e^{-|\tau|\gamma_0} $,
	%\end{equation*}
	therefore Proposition \ref{prop:Lyapunov} implies that $V\big(S(t), (I_n(t))_n\big)$  defined by \eqref{eq:Liapunov} is well-defined along the orbit and decreasing.

	We claim that $V\big(S(t), (I_n(t))_n\big)$ is constant along the orbit. Indeed, 
	let $t_k\to-\infty$ be an aribtrary sequence; since $S(t)$ and $\eta(t)$ are uniformly bounded, we extract from $t_k$ a subsequence, still denoted $t_k$, such that $S(t_k)\to S^{-\infty}$ and $\eta(t_k)\to \eta^{-\infty}$. Then, we have
	\begin{equation*}
		I_n(t_k) \xrightarrow[k\to+\infty]{} I^{-\infty}_n:= I_n^0 e^{\gamma_n \eta^{-\infty}}, \text{ in } \mathcal{M}_+(\overline{\mathbb{N}}) \text{ and } \ell^\infty((I^*_n)^{-1}).
	\end{equation*}
	Moreover $t\mapsto V(S(t), (I_n(t)))$ is decreasing and bounded so there exists $V^\infty$ such that 
	\begin{equation*}
		\lim_{t\to-\infty}V(S(t), (I_n(t))) = V^\infty.
	\end{equation*}
	The shifted orbits $(S(t+t_k), (I_n(t+t_k))_n)$ converge, up to a further extraction, to a complete orbit $(S^{-\infty}(t), (I^{-\infty}_n(t))_n)$; we have $ V(S^{-\infty}(t), (I^{-\infty}_n(t))_n)\equiv V^{-\infty}$ for all $t\in\mathbb{R}$, therefore $V'(S^{-\infty}(t), (I^{-\infty}_n(t))_n)=0$ and, by \eqref{eq:Lyapunov-decreasing}, 
	\begin{equation*}
		S^{-\infty}(t)\equiv \frac{1}{\alpha^*}, \qquad (S^{-\infty})'(t)\equiv 0, \qquad \text{ for all } t\in\mathbb{R},
	\end{equation*}
	and by the second line in \eqref{eq:main-a}, 
	\begin{equation}\label{eq:lemconvergence-uniquenesstau}
		\sum_{n\in\overline{\mathbb{N}}} I^{-\infty}(t) = \Lambda - \frac{\theta}{\alpha^*} \qquad \Longleftrightarrow \qquad \sum_{n\in\overline{\mathbb{N}}} \gamma_n e^{\gamma_n\eta^{-\infty}}I^0_n =\frac{\theta}{\alpha^*}(\mathcal{R}_0-1).
	\end{equation}
	Since $\eta\mapsto \sum_{n\in\overline{\mathbb{N}}} \gamma_n e^{\gamma_n\eta}I^0_n$ is strictly increasing, 
	the equation \eqref{eq:lemconvergence-uniquenesstau} has a unique solution which is $\eta=\tau$; therefore $\eta^{-\infty}=\tau$, $I^{-\infty}_n\equiv I^*_n$ and 
	\begin{equation*}
		V(S^{-\infty}, (I^{-\infty}_n)_{n\in\overline{\mathbb{N}}}) = \min_{(S, (I_n))\in D(V)} V(S, (I_n)) =0. 
	\end{equation*}
	Back to the original complete orbit, by the continuity of $V$ in $\mathbb{R}\times \ell^\infty((I^*_n)^{-1})$ we have that 
	\begin{equation*}
		\lim_{k\to+\infty} V(S(t_k), (I_n(t_k))_n) = V(S^{-\infty}, (I^{-\infty}_n)_{n\in\overline{\mathbb{N}}}) = \min_{(S, (I_n))\in D(V)} V(S, (I_n)), 
	\end{equation*}
	and since $V(S(t), (I_n(t))_n)$ is decreasing this means that 
	\begin{equation*}
		S(t)\equiv \frac{1}{\alpha^*}.
	\end{equation*}
	   Lemma \ref{lem:convergence-S} is proved.
\end{proof}
%Next we show that, if the supremum of $\frac{\beta_n}{\gamma_n} $ is attained at some index $i\in\mathbb{N}$, then the distribution of infected $(I_n(t))$ eventually reaches an equilibrium state.
%\begin{proposition}\label{prop:convergence-case-1}
%	Suppose that Assumptions \ref{as:params} and \ref{as:omega} holds true and assume that there exists an index $i\in\overline{\mathbb{N}}$ such that $I^0_i>0$ and
%	\begin{equation*}
%		\alpha^*=\sup_{n\in\mathbb{N}}\frac{\beta_n}{\gamma_n} = \frac{\beta_i}{\gamma_i}.
%	\end{equation*}
%	Then 
%	\begin{equation*}
%		S(t)\xrightarrow[t\to+\infty]{} \frac{1}{\alpha^*}, \qquad S'(t) \xrightarrow[t\to+\infty]{} 0, 
%	\end{equation*}
%	and the distribution $(I_n(t))_{n\in\overline{\mathbb{N}}}$ converges in total variation to an equilibrium $I^*_n$  defined by
%	\begin{equation}\label{eq:prop-case-1-formula}
%		I^*_n = \begin{cases} 
%			0 & \text{ if } \frac{\beta_n}{\gamma_n}<\alpha^*, \\
%			e^{\tau \gamma_n} I^0_n, & \text{ if }\frac{\beta_n}{\gamma_n}=\alpha^*,  
%		\end{cases}
%	\end{equation}
%	where $\tau\in\mathbb{R}$ is the unique solution of the equation 
%	\begin{equation}\label{eq:id-Iinf}
%		\sum_{n\in\overline{\mathbb{N}}} \gamma_n e^{\tau \gamma_n}\mathbbm{1}_{\alpha_n=\alpha^*}I_n^0 = \frac{\theta}{\alpha^*}\left(\mathcal R_0-1\right).
%	\end{equation}
%\end{proposition}
%\begin{proof}
%\end{proof}
We are now in the position to prove Theorem \ref{thm:discrete}.
\begin{proof}[Proof of Theorem \ref{thm:discrete}]
	Recall that $(I_n)_{n\in\mathbb{N}}\in \ell^1_+$ is given as a sequence over $\mathbb{N}$; without loss of generality, we set 
	\begin{equation*}
		\omega(\alpha)\times\omega(\gamma)=:\{(\alpha_{-i}, \gamma_{-i}) \,:\, i=1, \ldots, K\},
	\end{equation*}
	as described as the beginning of Section \ref{sec:discrete}, and $I^0_{-i}=0$, so that $(I_n)_{n\in\overline{\mathbb{N}}}$ is well-defined as a member of $\mathcal{M}_+(\overline{\mathbb{N}})$ and the new system is strictly equivalent to the original system for all $t\geq 0$. \medskip

	We start by dealing with case \ref{item:persistence-convergence} and assume that there is $n_0\in\mathbb{N}$ such that $\frac{\beta_{n_0}}{\gamma_{n_0}}=\alpha^*$.
	Thanks to Lemma \ref{lem:strong-compactness} we know that 
	\begin{equation*}
		I_n(t) = I^0_n e^{\gamma_n\eta(t; 0)-(\alpha_n-\alpha^*) \overline{S}(t; 0)} 
	\end{equation*}
	and $\eta(t; 0)$ is uniformly bounded for $t\in [0, +\infty)$.
	Let $t_k$ be a sequence such that $t_k\to+\infty$ and $\eta(t_k; 0)\to \eta^{\infty}\in [0, \overline{\eta}]$. Then
	\begin{equation*}
		I_n(t_k) = I_n^0e^{\gamma_n t_k(\alpha^*\overline{S}(t_k)-1)-\gamma_n(\alpha^*-\alpha_n)t_k\overline{S}(t_k)} \Longrightarrow 
		I^\infty_n(0) = \begin{cases}
			0 & \text{ if } \alpha_n<\alpha^*, \\ 
			I_n^0e^{\gamma_n\eta^{\infty}} & \text{ if } \alpha_n=\alpha^*. 
		\end{cases}
	\end{equation*}
	By Proposition \ref{prop:compactness}, we can extract from $(t_k)$ a subsequence, still denoted $(t_k)$, such that $(S(t+t_k), I(t+t_k))$ converges in $\mathbb{R}\times \mathcal{M}_+^\ast(\overline{\mathbb{N}})$ to a complete orbit $(S^\infty(t), (I_n^\infty(t)))$ with $S^\infty>0$, and by Proposition \ref{prop:unifpers-nomut} the limit $I_n^\infty$ is non-trivial. By Proposition \ref{prop:compactness}, we have that $I^\infty_n(t)\equiv 0$ whenever $\alpha_n<\alpha^*$; thus $\beta_n=\alpha^*\gamma_n$ whenever $I^\infty_n>0$. Hence we can apply Lemma  \ref{lem:convergence-S} to show that 
	\begin{equation*}
		S^\infty(t) \equiv \frac{1}{\alpha^*} \text{ and  } (S^\infty)'(t)\equiv 0.
	\end{equation*}
	Thus 
	\begin{equation*}
		0=\Lambda-\theta S^\infty - S^\infty\sum_{n\in\overline{\mathbb{N}}}\beta_nI^\infty(0) \Longrightarrow \sum_{n\in\overline{\mathbb{N}}}\beta_n I^\infty(0) =\theta\left(\frac{\Lambda}{\theta}\alpha^*-1\right)
	\end{equation*}	
	and finally
	\begin{equation*}
		\sum_{n\in\overline{\mathbb{N}}}\gamma_n \mathbbm{1}_{\alpha_n=\alpha^*} e^{\gamma_n\eta^{\infty}}I_n^0 = \frac{\theta}{\alpha^*}\big(\mathcal{R}_0-1).
	\end{equation*}
	This equation has a unique solution $\tau\geq 0$, since the left-hand side is a strictly increasing function of $\eta^{\infty}$. Thus we have proved that $\eta(t; 0)$ converges to this value $\tau\geq 0$ and, finally, 
	\begin{equation*}
		\lim_{t\to+\infty} I_n(t) = \lim_{t\to+\infty} I_n^0e^{\gamma_n t(\alpha^*\overline{S}(t)-1)-\gamma_n(\alpha^*-\alpha_n)t\overline{S}(t)}  
		= \begin{cases}
			0 & \text{ if } \alpha_n<\alpha^*, \\ 
			I_n^0e^{\gamma_n\tau} & \text{ if } \alpha_n=\alpha^*. 
		\end{cases}
	\end{equation*}
	This finishes the proof of Theorem \ref{thm:discrete} case \ref{item:persistence-convergence}.\medskip

	%The essential of the work has been done in Proposition \ref{prop:convergence-case-1}. Indeed, we know that $S(t) \to \frac{1}{\alpha^*}$, that $S'(t) \to 0$, and that $(I_n(t))_{n\in\overline{\mathbb{N}}} $ converges in $\mathcal{M}_+(\overline{\mathbb{N}})$ to a distribution that we can compute explicitly by \eqref{eq:prop-case-1-formula}. Since $I_n^0=0$ when $n\in\{-K, \ldots, -1\}$, we have that $I_n^\infty=0$ for $n\in\{-K, \ldots, -1\}$, hence the restricted sequence $(I_n)_{n\in\mathbb{N}}$ converges in total variation (i.e. in $\ell^1$) towards $(I^*_n)_{n\in\mathbb{N}}$. Case \ref{item:persistence-convergence} of Theorem \ref{thm:discrete} is proved. \medskip

	Next we deal with case \ref{item:persistence-divergence} and assume that for all $n\in\mathbb{N}$ we have $\frac{\beta_n}{\gamma_n}<\alpha^*$. Let $t_k\to+\infty$ be arbitrary. By Proposition \ref{prop:compactness} we can extract from $t_k$ a subsequence, still denoted $t_k$, such that the shifted orbits $\big(S(t+t_k), (I_n(t+t_k))_{n\in\overline{\mathbb{N}}}\big)$ converge pointwise in $\mathbb{R}\times\mathcal{M}^\ast_+(\overline{\mathbb{N}})$  to a complete orbit $\big(S^\infty(t), (I^\infty_n(t))_{n\in\overline{\mathbb{N}}}\big)$ and the limit $(I^\infty_n)_{n\in\overline{\mathbb{N}}}$ is non-trivial by Proposition \ref{prop:unifpers-nomut}. Because of our assumption that $\frac{\beta_n}{\gamma_n}<\alpha^*$ we have $I^\infty_n=0$ for all $n\geq 0$ therefore $\frac{\beta_n}{\gamma_n}=\alpha^*$ whenever $I^\infty_n(0)>0$.
	Hence we can apply Lemma \ref{lem:convergence-S} and get 
	\begin{equation*}
		S^\infty(t) \equiv \frac{1}{\alpha^*}, \qquad (S^\infty)'(t)\equiv 0, \qquad \text{ for all } t\in\mathbb{R}.
	\end{equation*}
	Because the original sequence is arbitrary, we have proved that 
	\begin{equation*}
		S(t) \xrightarrow[t\to+\infty]{} \frac{1}{\alpha^*}, \qquad S'(t)\xrightarrow[t\to+\infty]{} 0, \qquad I_n(t) \xrightarrow[t\to+\infty]{} 0 \text{ for all }n\in\mathbb{N}, 
	\end{equation*}
	the fact that the mass does not vanish is a consequence of Proposition \ref{prop:unifpers-nomut} and the concentration property is a consequence of \eqref{eq:concentration} in  Proposition \ref{prop:compactness}. This finishes the proof of item \ref{item:persistence-divergence} and ends the proof of Theorem \ref{thm:discrete}.
\end{proof}

}

\subsection{Proof of Proposition \ref{prop:convergence-mass}}
{
Now under the additional assumption \ref{as:disintegration}, we prove Proposition \ref{prop:convergence-mass}.
\begin{proof}[Proof of Proposition \ref{prop:convergence-mass}]
	We decompose the proof in several steps. We let $\overline{S}(t):=\frac{1}{t}\int_0^t S(s)\dd s$. Our method is the following: we fix $\varepsilon>0$ and show that the set of indices $\mathcal{C}_\varepsilon:=\{n\,:\, \alpha_n\geq \alpha^*-\varepsilon\text{ and } \gamma_n\geq \gamma^*-\varepsilon\}$ concentrates asymptotically all the mass. \medskip

	\noindent\textbf{Step 1:}  We show that 
	\begin{equation}\label{eq:disintegration-vanish1}
		\sum_{n=0}^{+\infty} \mathbbm{1}_{\alpha_nS(t)< 1}I_n(t)\xrightarrow[t\to+\infty]{}0.
	\end{equation}

	Indeed since $\overline{S}(t)\to\frac{1}{\alpha^*}$,  the function $ \mathbbm{1}_{\alpha_nS(t)< 1}I_n(t) $ converges pointwise to 0 as $t\to+\infty$ because of our assumption that $\frac{\beta_n}{\gamma_n}=\alpha_n<\alpha^*$ for all $n$ (in fact, this function is asymptotically stationary equal to 0 for all fixed $n\in\mathbb{N}$). Moreover  we have $ \mathbbm{1}_{\alpha_nS(t)< 1}I_n(t)=\mathbbm{1}_{\alpha_nS(t)< 1} e^{t\gamma_n(\alpha^*\overline{S}(t)-1)}I^0_n\leq I^0_n$, so the sequence is uniformly dominated by $\big(I^0_n\big)_{n\in\mathbb{N}}\in\ell^1_+$. By the Lebesgue dominated convergence Theorem, we have therefore
	\begin{equation*}
		\lim_{n\to+\infty}\sum_{n=0}^{+\infty} \mathbbm{1}_{\alpha_nS(t)< 1}I_n(t)=0, 
	\end{equation*}
	which proves \eqref{eq:disintegration-vanish1}.
	\medskip

	\noindent\textbf{Step 2:} We show that  for any $\varepsilon>0$,
	\begin{equation}\label{eq:disintegration-bounded-measure}
		\limsup_{t\to +\infty}\sum_{n=0}^{+\infty}\mathbbm{1}_{\alpha_n\overline{S}(t)\geq 1} e^{t(\gamma^*-\varepsilon)\left(\alpha_n\overline{S}(t)-1\right)}I^0_n <+\infty.
	\end{equation}

	Indeed let $(A_n)_{n\in\mathbb{N}}$ be a strictly increasing  enumeration of the set $\{\alpha_n\,:\, n\in\mathbb{N}\}$. We remark that, by changing the order of summation, we have
	\begin{align*}
		\sum_{n=0}^{+\infty}\mathbbm{1}_{\alpha_n\overline{S}(t)\geq 1}I_n(t) &= \sum_{n=0}^{+\infty} \sum_{\{k\,:\, \alpha_k=A_n\}} e^{\gamma_n t \left(A_n\overline{S}(t)-1\right)} I^0_k, 
	\end{align*}
	so that, according to Step 1, we have for $t$ sufficiently large
	\begin{align*}
		\sum_{\{n\,:\, \gamma_n\geq \gamma^*-\varepsilon\}} I_n(t)&= \sum_{\{n\,:\, \gamma_n\geq \gamma^*-\varepsilon\}} \left(\mathbbm{1}_{\alpha_n\overline{S}(t)<1}+\mathbbm{1}_{\alpha_n\overline{S}(t)\geq 1}\right)I_n(t)  \\ 
		&= o(1)+ \sum_{\{n\,:\, A_n\geq \frac{1}{\overline{S}(t)}\}} \sum_{\{k\,:\, \alpha_k=A_n\text{ and } \gamma_k\geq\gamma^*-\varepsilon\}} e^{\gamma_k t \left(A_n\overline{S}(t)-1\right)} I^0_k  \\ 
		&\geq o(1)+\sum_{\{n\,:\, A_n\geq \frac{1}{\overline{S}(t)}\}} \sum_{\{k\,:\, \alpha_k=A_n\text{ and } \gamma_k\geq\gamma^*-\varepsilon\}} e^{(\gamma^*-\varepsilon) t \left(A_n\overline{S}(t)-1\right)} I^0_k\\
		&= o(1)+\sum_{\{n\,:\, A_n\geq \frac{1}{\overline{S}(t)}\}} e^{(\gamma^*-\varepsilon) t \left(A_n\overline{S}(t)-1\right)} \sum_{\{k\,:\, \alpha_k=A_n\text{ and } \gamma_k\geq\gamma^*-\varepsilon\}} I^0_k \\
		&= o(1)+\sum_{\{n\,:\, A_n\geq \frac{1}{\overline{S}(t)}\}} e^{(\gamma^*-\varepsilon) t \left(A_n\overline{S}(t)-1\right)}\dfrac{\displaystyle\sum_{\{k\,:\, \alpha_k=A_n\text{ and } \gamma_k\geq\gamma^*-\varepsilon\}} I^0_k}{\displaystyle\sum_{\{k\,:\, \alpha_k=A_n\}} I^0_k} \sum_{\{k\,:\, \alpha_k=A_n\} }I^0_k  \\ 
		&\geq o(1)+m\sum_{\{n\,:\, A_n\geq \frac{1}{\overline{S}(t)}\}} \sum_{\{\alpha_k=A_n\}}e^{(\gamma^*-\varepsilon) t \left(A_n\overline{S}(t)-1\right)}I^0_k,
	\end{align*}
	wherein $m>0$ is the constant provided by Assumption \ref{as:disintegration} and the error term $o(1)$ collects terms going to 0 as $t\to+\infty$. Thus 
	\begin{align*}
		\sum_{\{n\,:\, \alpha_n\geq \alpha^*-\varepsilon\}}\mathbbm{1}_{\alpha_n\overline{S}(t)\geq 1} e^{t(\gamma^*-\varepsilon)\left(\alpha_n\overline{S}(t)-1\right)} I^0_n & = \sum_{\{n\,:\, A_n\geq \frac{1}{\overline{S}(t)}\}} \sum_{\{\alpha_k=A_n\}}e^{(\gamma^*-\varepsilon) t \left(A_n\overline{S}(t)-1\right)}I^0_k \\
		&\leq o(1)+\frac{1}{m}\sum_{\{n\,:\, \gamma_n\geq\gamma^*-\varepsilon\}}I_n(t) = \mathcal{O}(1).
	\end{align*}
	This proves \eqref{eq:disintegration-bounded-measure}.
	\medskip

	\noindent\textbf{Step 3:} We show that, for all $\varepsilon>0$, 
	\begin{equation}\label{eq:disintegration-complement-vanishes}
		\sum_{\{n\,:\, \alpha_n\geq \alpha^*-\varepsilon \text{ and } \gamma_n\leq\gamma^*-\varepsilon\}}I_n(t) \xrightarrow[t\to+\infty]{}0. 
	\end{equation}
	
	Indeed, we have for $t$ sufficiently large
	\begin{align*}
		\smashoperator[r]{\sum_{\{n\,:\, \alpha_n\geq \alpha^*-\varepsilon \text{ and } \gamma_n\leq\gamma^*-\varepsilon\}}}\mathbbm{1}_{\alpha_n\overline{S}(t)\geq 1}I_n(t) 
		&=\sum_{n=0}^{+\infty} \mathbbm{1}_{A_n\overline{S}(t)\geq 1}\sum_{\{\alpha_k=A_n\}}e^{\gamma_n t \left(A_n\overline{S}(t)-1\right)}\mathbbm{1}_{\gamma_k\leq\gamma^*-\varepsilon}I^0_k \\ 
		&\leq \sum_{n=0}^{+\infty} \mathbbm{1}_{A_n\overline{S}(t)\geq 1}\sum_{\{\alpha_k=A_n\}}e^{ t (\gamma^*-\varepsilon)\left(A_n\overline{S}(t)-1\right)}\mathbbm{1}_{\gamma_k\leq\gamma^*-\varepsilon}I^0_k\\ 
		&\leq\sum_{n=0}^{+\infty} \mathbbm{1}_{A_n\overline{S}(t)\geq 1} e^{-t\frac{\varepsilon}{2}(A_n\overline{S}(t)-1)}\left(\sum_{\{\alpha_k=A_n\}}I^0_k\right)e^{t \left(\gamma^*-\frac{\varepsilon}{2}\right)\left(A_n\overline{S}(t)-1\right)}  \\
		&=\sum_{n=0}^{+\infty}  e^{-t\frac{\varepsilon}{2}(\alpha_n\overline{S}(t)-1)}\mathbbm{1}_{\alpha_n\overline{S}(t)\geq 1}e^{t \left(\gamma^*-\frac{\varepsilon}{2}\right)\left(\alpha_n\overline{S}(t)-1\right)} \mathbbm{1}_{\gamma_n\leq\gamma^*-\varepsilon}I^0_n. 
	\end{align*}
	Reducing $\varepsilon$ if necessary we may assume that $\frac{2\gamma^*-\varepsilon}{\varepsilon}>1$.
	Hence by Hölder's inequality we have:
	\begin{align*}
		\smashoperator[r]{\sum_{\{n\,:\, \alpha_n\geq \alpha^*-\varepsilon \text{ and } \gamma_n\leq\gamma^*-\varepsilon\}}}\mathbbm{1}_{\alpha_n\overline{S}(t)\geq 1}I_n(t) &\leq \left(\sum_{n=0}^{+\infty} \left(e^{-t\frac{\varepsilon}{2}(\alpha_n\overline{S}(t)-1)}\right)^{\frac{\gamma^*-\frac{\varepsilon}{2}}{\frac{\varepsilon}{2}}}\mathbbm{1}_{\alpha_n\overline{S}(t)\geq 1}e^{ t \left(\gamma^*-\frac{\varepsilon}{2}\right)\left(\alpha_n\overline{S}(t)-1\right)}I^0_n\right)^{\frac{\frac{\varepsilon}{2}}{\gamma^*-\frac{\varepsilon}{2}}}\\ 
		&\quad \times \left(\sum_{n=0}^{+\infty}  \mathbbm{1}_{\alpha_n\overline{S}(t)\geq 1}e^{t \left(\gamma^*-\frac{\varepsilon}{2}\right)\left(\alpha_n\overline{S}(t)-1\right)} I^0_n \right)^{\frac{\gamma^*-\varepsilon}{\gamma^*-\frac{\varepsilon}{2}}} \\ 
		&\leq \left(\sum_{n=0}^{+\infty} \mathbbm{1}_{\alpha_n\overline{S}(t)\geq 1}I^0_n\right)^{\frac{\varepsilon}{2\gamma^*-\varepsilon}}\left(\sum_{n=0}^{+\infty}  \mathbbm{1}_{\alpha_n\overline{S}(t)\geq 1}e^{t \left(\gamma^*-\frac{\varepsilon}{2}\right)\left(\alpha_n\overline{S}(t)-1\right)} I^0_n \right)^{\frac{2(\gamma^*-\varepsilon)}{2\gamma^*-\varepsilon}}. 
	\end{align*}
	Since $  \mathbbm{1}_{\alpha_n\overline{S}(t)\geq 1}I^0_n $ is uniformly bounded by a summable sequence and converges to 0 for each $n\in\mathbb{N}$, we have by Lebesgue's dominated convergence Theorem:
	\begin{equation*}
		\sum_{n=0}^{+\infty} \mathbbm{1}_{\alpha_n\overline{S}(t)\geq 1}I^0_n \xrightarrow[t\to+\infty]{}0, 
	\end{equation*}
	and in Step 2 (with $\varepsilon$ replaced by $\frac{\varepsilon}{2}$) we proved that 
	\begin{equation*}
		\limsup_{t\to+\infty}\sum_{n=0}^{+\infty}  \mathbbm{1}_{\alpha_n\overline{S}(t)\geq 1}e^{t \left(\gamma^*-\frac{\varepsilon}{2}\right)\left(\alpha_n\overline{S}(t)-1\right)} I^0_n <+\infty, 
	\end{equation*}
	hence 
	\begin{equation*}
		\smashoperator[r]{\sum_{\{n\,:\, \alpha_n\geq \alpha^*-\varepsilon \text{ and } \gamma_n\leq\gamma^*-\varepsilon\}}}\mathbbm{1}_{\alpha_n\overline{S}(t)\geq 1}I_n(t) \xrightarrow[t\to+\infty]{}0. 
	\end{equation*}
	We have shown \eqref{eq:disintegration-complement-vanishes}.
	\medskip

	\noindent\textbf{Step 4:} We show  the convergence of the mass. To do so, we consider the equivalent system set on $\overline{\mathbb{N}}$ by setting $I^0_{-i}=0$ for $i\in\{-K, \ldots, -1\}$, as in the beginning of Section \ref{sec:discrete}. Let $(t_k)$ be any sequence such that $t_k\to+\infty$.  Thanks to Proposition \ref{prop:compactness} we can extract a subsequence, still denoted $t_k$, such that $(S(t_k+t), (I_n(t_k+t))_{n\in\overline{\mathbb{N}}})$ converges to a complete orbit $(S^\infty(t), (I^\infty_n(t))_{n\in\overline{\mathbb{N}}})$. Thanks to Theorem \ref{thm:discrete} we know that $I^\infty_n(t)\equiv 0$ whenever $\alpha_n<\alpha^*$ and thanks to Step 3 we know that $I^\infty_n(t)\equiv 0$ whenever $\gamma_n<\gamma^*$. By Lemma \ref{lem:convergence-S} we have then $S^\infty(t)\equiv \frac{1}{\alpha^*}$ hence 
	\begin{equation*}
		\frac{\dd}{\dd t} S^\infty(t) = 0 = \Lambda - \theta S^\infty(t)  - S^\infty(t) \sum_{n=0}^{+\infty} \gamma_n\alpha_n I^\infty_n(t) \Longleftrightarrow \frac{1}{\alpha^*}\sum_{n=-K}^{+\infty} \gamma^*\alpha^*I^\infty_n(t) = \Lambda - \frac{\theta}{\alpha^*}, 
	\end{equation*}
	from which we deduce
	\begin{equation*}
		\sum_{n=-K}^{+\infty}I^\infty_n(t) = \frac{\theta}{\alpha^*\gamma^*}\left(\mathcal{R}_0 - 1\right).
	\end{equation*}
	Moreover for sequence of indices $(n_j)$ satisfying \eqref{eq:propconvergence-suboptimal-trait}, the omega-limit set of the sequence $\big((\alpha_{n_j}, \gamma_{n_j})\big)_{j\in\mathbb{N}}$ is the set $\{(\alpha_{-i_1}, \gamma_{-i_1}), \ldots, (\alpha_{-i_N}, \gamma_{-i_N})\}$ for some $N<K$ and with $\gamma_{i_j}<\gamma^*$, $i=1, \ldots, N$. Thanks to Step 3, we have then
	\begin{equation*}
	    \sum_{j=0}^{+\infty} I_{n_j}(t_k+t)\xrightarrow[k\to+\infty]{} \sum_{j=1}^{N}I^\infty_{-i_j}(t)+ \sum_{j=0}^{+\infty} I^\infty_{n_j}(t)\equiv 0 \text{ for all }t\in\mathbb{R}. 
	\end{equation*}
	Since the sequence $(t_k)$ is arbitrary, we have indeed proved that $\sum_{n=0}^{+\infty} I_n(t)$ converges and that the limit is given by \eqref{eq:limit-mass-discrete}, and that for any sequence of indices $(n_k)$ satisfying \eqref{eq:propconvergence-suboptimal-trait}, \eqref{eq:propconvergence-suboptimal-vanishes} holds. Proposition \ref{prop:convergence-mass} is proved. 
\end{proof}
}

\subsection{Extinction in the case $\mathcal R_0=1$}
\label{sec:R0=1}

We continue the proof of Proposition \ref{prop:Extinction} in the limit case $\mathcal R_0=1$ and we shall show that the infection dies out in this situation.
\begin{proposition}
Let Assumption \ref{as:params} and \ref{as:omega} be satisfied.
Then for all initial data $(S_0, (I_n^0)_{n\in\mathbb N})$, the solution $(S(t), (I_n(t))_{n\in\mathbb N})$ satisfies
$$
\lim_{t\to\infty} S(t)=\frac{\Lambda}{\theta}\text{ and }\lim_{t\to\infty} \sum_{n=0}^\infty I_n(t)=0.
$$
\end{proposition}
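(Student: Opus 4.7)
The plan is to mimic the LaSalle-type argument used in the proof of Theorem \ref{thm:discrete}, with a Lyapunov functional adapted to the disease-free equilibrium $(S^*, 0) := (\Lambda/\theta, 0)$. Because $\mathcal{R}_0 = 1$ translates into $\alpha^* S^* = 1$, I would set
\begin{equation*}
V(t) := S(t) - S^* - S^* \ln\!\bigl(S(t)/S^*\bigr) + \sum_{n\in\overline{\mathbb{N}}} I_n(t),
\end{equation*}
which is non-negative because $S(t) > 0$ for all $t \geq 0$ (by Lemma \ref{lem:bounds-nomut}) and $x - 1 - \ln x \geq 0$ on $(0, +\infty)$. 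A direct computation along trajectories, modeled on the one in the proof of Proposition \ref{prop:Lyapunov} with $I_n^* = 0$, yields
\begin{equation*}
V'(t) = -\theta\,\frac{\bigl(S(t) - S^*\bigr)^2}{S(t)} + \sum_{n\in\overline{\mathbb{N}}} \gamma_n\!\left(\frac{\alpha_n}{\alpha^*} - 1\right) I_n(t),
\end{equation*}
which is a sum of two non-positive terms (the second one because $\alpha_n \leq \alpha^*$ for every $n$). Hence $V$ is non-increasing and bounded below, and thus converges to some limit $V^\infty \geq 0$.

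Next I would exploit the precompactness of the orbit $t \mapsto \bigl(S(t), (I_n(t))_{n\in\overline{\mathbb{N}}}\bigr)$ in $\mathbb{R} \times \mathcal{M}_+^*(\overline{\mathbb{N}})$, established exactly as in the first half of Proposition \ref{prop:compactness} (which relies only on Lemma \ref{lem:bounds-nomut} and the compactness of $\overline{\mathbb{N}}$). For an arbitrary sequence $t_k \to +\infty$, the shifted orbits admit a subsequence converging pointwise in $\mathbb{R} \times \mathcal{M}_+^*(\overline{\mathbb{N}})$ to a complete orbit $\bigl(S^\infty(t), (I_n^\infty(t))_{n\in\overline{\mathbb{N}}}\bigr)$ of \eqref{eq:main}. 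Since the constant function $1$ lies in $C(\overline{\mathbb{N}})$, the total mass is continuous for the weak-$*$ topology, so $V$ is continuous on $\mathbb{R} \times \mathcal{M}_+^*(\overline{\mathbb{N}})$; combined with the monotone convergence of $V(t)$, this forces $V\bigl(S^\infty(t), (I_n^\infty(t))_n\bigr) \equiv V^\infty$ on the limit orbit. Differentiating, both non-positive terms in $V'$ must vanish identically along the limit orbit, so $S^\infty(t) \equiv S^*$ and $I_n^\infty(t) = 0$ whenever $\alpha_n < \alpha^*$. Plugging $(S^\infty)'(t) \equiv 0$ into the first equation of \eqref{eq:main-a} and using $\Lambda = \theta S^*$ then yields $S^*\sum_n \beta_n I_n^\infty(t) = 0$, and since $\beta_n \geq \beta_0 > 0$ we conclude $I_n^\infty \equiv 0$ for every $n \in \overline{\mathbb{N}}$. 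Hence $V^\infty = V(S^*, 0) = 0$, and returning to the original trajectory yields $V(t) \to 0$ as $t \to +\infty$; the non-negativity of the two terms composing $V$ immediately gives $S(t) \to \Lambda/\theta$ and $\sum_{n\in\mathbb{N}} I_n(t) \to 0$.

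The main obstacle I anticipate is a bookkeeping issue: Proposition \ref{prop:compactness} as stated requires Assumption \ref{as:init-bis}, which we do not want to impose here. However, this assumption is used only to guarantee the non-triviality of limit orbits (relation \eqref{eq:compacness-limit-non-trivial}); the precompactness statement and the extraction of complete limit orbits rely solely on Assumption \ref{as:params} (and \ref{as:omega}, to make sense of $\overline{\mathbb{N}}$), so the argument above goes through verbatim. The use of the extended index set is essential, since mass may concentrate at the added points in the weak-$*$ limit; however, because the added indices still satisfy $\alpha_{-i} \leq \alpha^*$ and $\beta_{-i} \geq \beta_0 > 0$, the concluding step on $I_n^\infty \equiv 0$ applies uniformly over $\overline{\mathbb{N}}$, and the conclusion transfers back to the original index set $\mathbb{N}$.
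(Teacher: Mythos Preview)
Your proof is correct and takes a genuinely different route from the paper's. The paper does not introduce a Lyapunov functional for the disease-free equilibrium; instead it argues by contradiction via a \emph{double} compactness extraction. First it passes to a complete limit orbit $(S^\infty,(I_n^\infty))$, sets $\ell:=\sup_{t\in\mathbb R}\sum_n I_n^\infty(t)$, and assumes $\ell>0$. A second extraction along times $s_p$ realizing this supremum produces another complete orbit $(\hat S,(\hat I_n))$ with $\sum_n\hat I_n$ maximized at $t=0$; differentiating this sum at $t=0$ forces $\hat S(0)=1/\alpha^*=\Lambda/\theta$, and since $\hat S\le\Lambda/\theta$ this is also a maximum of $\hat S$, so $\hat S'(0)=0$. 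Plugging into the $S$-equation at $t=0$ then yields $\sum_n\gamma_n\hat I_n(0)=0$, contradicting $\ell>0$.

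Your Lyapunov approach is more direct and avoids the second extraction: the functional $V(S,I)=S-S^*-S^*\ln(S/S^*)+\sum_n I_n$ is the classical one for the disease-free state, its derivative is a sum of two non-positive terms precisely because $\mathcal R_0=1$, and the weak-$*$ continuity of $V$ (which you correctly justify via $1\in C(\overline{\mathbb N})$) lets you run LaSalle on the $\omega$-limit set in one pass. Your observation that the precompactness half of Proposition~\ref{prop:compactness} does not use Assumption~\ref{as:init-bis} is accurate and is exactly what is needed here. The paper's argument, on the other hand, bypasses the need to identify any Lyapunov function and instead exploits only the elementary inequality $\hat S\le\Lambda/\theta$; this makes it perhaps more robust in spirit, though in this specific setting your argument is cleaner.
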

\begin{proof}
To prove this proposition, as in Subsection \ref{sec:discrete}, we extend the system with $n\in\overline{\mathbb N}=\{-K,-K+1,-K+2, \cdots \}$.
Fix an arbitrary sequence $(t_k)$ with $t_k\to\infty$ as $k\to\infty$.
Then, up to extraction, the sequence of functions $(S(t+t_k), (I_n(t+t_k))_{n\in \overline{\mathbb N}})$ converges to a complete orbit $(S^\infty(t), (I_n^\infty(t))_{n\in \overline{\mathbb N}})$ locally uniformly in $t$ with values in $\R\times\mathcal M(\overline{\mathbb N})$.
Now recall that due to Lemma \ref{lem:bounds-nomut} one has
$$
S^\infty(t)\leq \frac{\Lambda}{\theta},\;\forall t\in\R,
$$
and $t\mapsto \left(S^\infty(t), (I_n^\infty(t))_{n\in \overline{\mathbb N}}\right)$ is solution of
\begin{equation*}
		\left\{\begin{aligned}\relax
			&\frac{\dd \phantom{t}}{\dd t}S^\infty(t) = \Lambda -\theta S^\infty(t) - S^\infty(t)\sum_{i=-K}^\infty \beta_i I_i^\infty(t)\\ 
			&\frac{\dd \phantom{t}}{\dd t}I_n^\infty(t)=\gamma_n\left(\alpha_n S^\infty(t)- 1\right)I_n^\infty(t), \;\forall n\geq -K \\ 
		\end{aligned}\right.
	\end{equation*}
Now to complete the proof of the proposition, let us show that
$$
\sum_{n\geq -K} I_n^\infty(t)=0,\;\forall t\in\R.
$$
To that aim consider the quantity
$$
\ell =\sup_{t\in\R} \sum_{n\geq -K} I_n^\infty(t)
$$
and assume by contradiction that $\ell>0$. Next consider a sequence $(s_p)\subset \R$ such that 
 $$
 \ell =\lim_{p\to\infty}\sum_{n\geq-K} I_n^\infty(s_p).
 $$
As above consider the sequence of functions $(S^\infty(t+s_p), (I_n^\infty(t+s_p))_{n\in \overline{\mathbb N}})$ and assume that, possibly along a sub-sequence, it converges to a complete orbit $(\hat S(t), (\hat I_n(t))_{n\in \overline{\mathbb N}})$ locally uniformly in $t\in \R$ with values in $\R\times\mathcal M(\overline{\mathbb N})$. Hence it becomes a solution for $t\in\R$ of
\begin{equation*}
		\left\{\begin{aligned}\relax
			&\frac{\dd \phantom{t}}{\dd t}\hat S(t) = \Lambda -\theta \hat S(t) - \hat S(t)\sum_{i=-K}^\infty \beta_i \hat I_i(t)\\ 
			&\frac{\dd \phantom{t}}{\dd t}\hat I_n(t)=\gamma_n\left(\alpha_n \hat S(t)- 1\right)\hat I_n(t), \;\forall n\geq -K \\ 
		\end{aligned}\right.
	\end{equation*}
	together with
$$
\ell =\sum_{n\geq -K} \hat I_n(0)=\sup_{t\in \R}\sum_{n\geq -K} \hat I_n(t).
$$	
	
Now observe that for all $n\geq -K$ one has
$$
\frac{\dd \phantom{t}}{\dd t}\hat I_n(t)\leq \gamma_n\left(\alpha_n \frac{\Lambda}{\theta}- 1\right)\hat I_n(t),\;\forall t\in\R.
$$
Hence, since $\mathcal R_0=1$ we have $\frac{\Lambda}{\theta}=\frac{1}{\alpha^*}$ and
$\hat I_n(t)\equiv 0$ for all $n\geq-K$ such that $\alpha_n<\alpha^*$. 
Now set $J=\{n\geq -K:\;\alpha_n=\alpha^*\}$ and the above system of equations reduces to
\begin{equation*}
		\left\{\begin{aligned}\relax
			&\frac{\dd \phantom{t}}{\dd t}\hat S(t) = \Lambda -\theta \hat S(t) - \hat S(t)\sum_{j\in J}\gamma_j \alpha^* \hat I_j(t)\\ 
			&\frac{\dd \phantom{t}}{\dd t}\hat I_n(t)=\gamma_n\left(\alpha^* \hat S(t)- 1\right)\hat I_n(t), \;\forall n\in J.\\ 
		\end{aligned}\right.
	\end{equation*}
Now summing-up the $\hat I_n-$components, we get
$$
\sum_{n\in J} \frac{\dd \phantom{t}}{\dd t}\hat I_n(0)=0=(\alpha^*\hat S(0)-1)\sum_{n\in J}\gamma_n \hat I_n(0),
$$
so that, since $\hat S(t)\leq \frac{\Lambda}{\theta}$ for all $t$, we obtain:
$$
\hat S(0)= \frac{1}{\alpha^*}=\frac{\Lambda}{\theta}\text{ and }\frac{\dd \phantom{t}}{\dd t}\hat S(0)=0.
$$
Substituting $t=0$ into the $\hat S-$equation yields 
$$
0=\dfrac{\dd \phantom{t}}{\dd t}\hat S(0) = \Lambda -\theta \frac{\Lambda}{\theta} - \frac{\Lambda}{\theta}\sum_{n\in J}\gamma_n \alpha^* I_n^\infty(0).
$$
This ensures that
$$
\alpha^*\left(\inf_{n\geq 0} \gamma_n\right)\ell\leq \sum_{n\in J}\gamma_n \alpha^* I_n^\infty(0)=0,
$$
a contradiction that proves that $\ell=0$ and completes the proof of the proposition.
\end{proof}
\printbibliography
\end{document}